\documentclass[10pt]{article}

\usepackage[
  paperwidth=535.408pt,
  paperheight=697.209pt,
  left=53.5pt,
  right=53.5pt,
  top=48pt,
  bottom=50pt,
  headheight=14pt,
  headsep=12pt,
  footskip=24pt
]{geometry}

\usepackage[T1]{fontenc}
\usepackage[utf8]{inputenc}
\usepackage{microtype}

\usepackage{amsmath}
\usepackage{amsthm}
\usepackage{mathtools}
\usepackage{bm}

\usepackage{newtxtext}
\usepackage[amssymbols]{newtxmath}

\usepackage{fancyhdr}
\fancypagestyle{plain}{
  \fancyhf{}
  \fancyfoot[C]{\footnotesize\thepage}
  
}

\newtheoremstyle{thmstyleone}
  {6pt}{6pt}{\itshape}{}{\bfseries}{.}{0.5em}{}
\newtheoremstyle{thmstyletwo}
  {6pt}{6pt}{\normalfont}{}{\bfseries}{.}{0.5em}{}
\newtheoremstyle{thmstylethree}
  {6pt}{6pt}{\normalfont}{}{\bfseries}{.}{0.5em}{}

\theoremstyle{thmstyleone}
\newtheorem{theorem}{Theorem}

\newtheorem{lemma}[theorem]{Lemma}
\newtheorem{corollary}[theorem]{Corollary}

\theoremstyle{thmstylethree}

\theoremstyle{thmstyletwo}

\usepackage{titlesec}

\titleformat{\section}
  {\normalfont\large\bfseries}
  {\thesection.}{0.55em}{}

\titleformat{\subsection}
  {\normalfont\normalsize\bfseries}
  {\thesubsection.}{0.55em}{}

\titleformat{\subsubsection}
  {\normalfont\normalsize\bfseries}
  {\thesubsubsection.}{0.55em}{}

\titlespacing*{\section}{0pt}{14pt plus 3pt minus 2pt}{5pt}
\titlespacing*{\subsection}{0pt}{10pt plus 2pt minus 1pt}{3pt}
\titlespacing*{\subsubsection}{0pt}{8pt plus 2pt minus 1pt}{2pt}

\usepackage{graphicx}
\graphicspath{{Fig/}}

\usepackage{booktabs,multirow,float}
\usepackage{caption}
\usepackage{adjustbox}
\usepackage{array}
\usepackage{etoolbox}
\usepackage{placeins}

\AtBeginEnvironment{tabular}{\footnotesize}

\usepackage{algorithm}
\usepackage{algpseudocode}

\usepackage[round,authoryear]{natbib}

\usepackage{xcolor}
\definecolor{C0}{HTML}{3182ce}

\usepackage{hyperref}
\hypersetup{
  breaklinks=true,
  colorlinks=true,
  linkcolor=C0,
  citecolor=C0,
  urlcolor=C0,
  bookmarks=true,
  bookmarksopen=true,
  bookmarksnumbered=true
}
\usepackage{bookmark}

\usepackage{tikz}
\usepackage{istgame}

\usepackage{siunitx}

\usepackage[most]{tcolorbox}
\newtcolorbox{mybox}[1][]{
    colframe=blue!70!black,
    colback=blue!2,
    arc=4mm,
    boxrule=0.4mm,
    left=2mm, right=2mm,
    top=0mm, bottom=0mm,
    enhanced,
    #1
}

\def\R{\mathbb{R}}

\def\E{\mathbb{E}}

\def\1{\bm{1}}

\DeclareMathOperator*{\argmin}{argmin}

\def\btheta{{\boldsymbol{\theta}}}
\def\bgamma{\boldsymbol{\gamma}}

\def\bX{\boldsymbol{X}}

\DeclarePairedDelimiterX{\ent}[2]{(}{)}{#1\;\delimsize\|\;#2}

\makeatletter
\renewcommand{\maketitle}{%
  \thispagestyle{plain}
  \begingroup
  \raggedright
  {\fontsize{18}{22}\selectfont\bfseries \@title \par}
  \vspace{0.8em}
  {\fontsize{11}{14}\selectfont\bfseries \@author \par}
  \vspace{0.9em}
  \endgroup
}
\makeatother

\title{Bayesian inference with sources of uncertainty:\\
from confidence modelling to sparse estimation}

\author{
Rafael Mouallem Rosa\textsuperscript{1,*},
Julyan Arbel\textsuperscript{1}
and Hien Duy Nguyen\textsuperscript{2,3}
\\[0.45em]
{\footnotesize
\textsuperscript{1}Univ. Grenoble Alpes, Inria, CNRS, Grenoble INP, LJK, 38000 Grenoble, France, 
\textsuperscript{2}School of Computing, Engineering and Mathematical Sciences, La Trobe University, Melbourne, Australia,
and
\textsuperscript{3}Institute of Mathematics for Industry, Kyushu University, Fukuoka, Japan
}
\\[0.35em]
{\footnotesize
\textsuperscript{*}Address for correspondence:
\href{mailto:rafael.mouallem-rosa@inria.fr}{rafael.mouallem-rosa@inria.fr}
}
}

\date{}

\newenvironment{jrssabstract}
  {\par\noindent\textbf{Abstract}\par\vspace{3pt}\noindent\small}
  {\par\vspace{5pt}}

\newcommand{\keywords}[1]{%
  \par\noindent{\footnotesize\textbf{Key words:} #1}\par\vspace{1em}
}

\begin{document}

\maketitle

\begin{jrssabstract}
We introduce a general framework that extends Bayesian inference by allowing the researcher to explicitly encode confidence in each source of uncertainty within the model. This mechanism provides a new handle for model design and regularisation control. Building on this framework, we develop a general approach for inducing sparsity in statistical models and illustrate its use in linear and logistic regression, as well as in Bayesian neural networks.
\end{jrssabstract}

\keywords{Bayesian inference; generalised Bayes; source-specific uncertainty; variational inference; global-local priors; sparsity; Bayesian neural networks}

\section{Introduction}

Within the Bayesian paradigm, when faced with uncertainty, one forms beliefs about unknowns, and these beliefs are expressed through subjective probability distributions. Upon observing new evidence, these beliefs are updated to a posterior distribution, i.e.,
    \begin{mybox}
    \begin{center}
        \textbf{Bayesian inference:} 
    $\underbrace{p(\bX \mid \theta_1, \theta_2)}_{\text{likelihood}}, \quad 
    \underbrace{\pi_2(\theta_2 \mid \theta_1), \, \pi_1(\theta_1)}_{\text{prior}} 
    \xrightarrow{\text{update rule}} 
    \underbrace{p(\btheta \mid \bX)}_{\text{posterior}}$
    \end{center}
    \end{mybox}
Here, $X$ represents the observed data, and $\btheta = (\theta_1,\theta_2)$ denotes the unknown parameters, assumed here bivariate for the sake of simplicity, about which the agent holds beliefs. Bayesian statistics relies on two main inputs specified by the researcher: a prior distribution $\pi$, and a likelihood function $p$. Bayesian inference is then the process of updating the beliefs via an updating rule, which yields the posterior distribution.\footnote{Although the likelihood function can be generalised to an arbitrary function, as we do in the next sections, we retain the standard formulation here to maintain a connection with traditional Bayesian models.}

As reasonable as it is to assume that, under uncertainty, an agent can engage in probabilistic reasoning about specific unknowns, it is equally appealing to conceive that their ability to do so may vary across different sources of uncertainty, that is, across distinct components of the joint model corresponding to different subsets of parameters. This leads to different levels of confidence in the beliefs that are reported. For instance, an agent might have well-founded beliefs about one parameter distribution but only vague intuitions about another.

Our framework extends the Bayesian paradigm to account for this aspect by introducing an additional input parameter $\bgamma = (\gamma_1, \gamma_2)$, which captures the agent's perceived ability to reason probabilistically about each source of uncertainty. The belief updating process is thus generalised to incorporate $\bgamma$.
    \begin{mybox}
    \begin{center}
        \textbf{SoU inference:} 
    $\underbrace{p(\bX \mid \btheta)}_{\text{likelihood}}, \, 
    \underbrace{(\pi_2(\theta_2 \mid \theta_1), {\color{red}\gamma_2}), \, (\pi_1(\theta_1), {\color{red}\gamma_1})}_{\text{prior and {\color{red}confidence}}} 
    \xrightarrow{\color{red}\text{update rule}} 
    \underbrace{p_{\color{red}\gamma}(\btheta \mid \bX)}_{\text{posterior}}$
    \end{center}
    \end{mybox}
In this formulation, the conventional Bayesian ingredients are retained, but each prior $\pi_i$ is now accompanied by a confidence parameter $\gamma_i$.

The inspiration for introducing a model that explicitly accounts for confidence in priors stems from decision theory. In that field, an agent’s confidence in their own subjective probability assessments is linked to ambiguity attitudes. For instance, ambiguity aversion reflects a preference for known over unknown probabilities. 
In this sense, confidence can be interpreted as the degree to which an agent considers their own subjective probabilities less reliable than objective ones.
Our framework can thus be viewed as the statistical analogue of a decision-theoretic model that distinguishes between different sources of uncertainty, reflecting varying degrees of confidence in probabilistic reasoning over unknowns.

A central conceptual distinction in our framework is that between a prior’s \emph{informativeness} and the \emph{confidence} assigned to the source of uncertainty it represents. Both influence how the data and prior information are combined to form the posterior; what distinguishes them is the way regularisation is performed in each case. The informativeness of the prior controls the strength of regularisation \emph{towards the prior}, while the confidence parameter controls the strength of regularisation \emph{towards the data} (see illustration in Figure~\ref{fig:oneParameter}).

This new form of regularisation constitutes a key feature of our framework, as it introduces an additional degree of freedom in Bayesian modelling. By introducing the notion of confidence, the researcher gains a new handle to control how uncertainty is expressed and how information from data and prior contributes to the posterior. Conceptually, the confidence parameters $\gamma_i$ act as a complementary modelling tool, an extra piece in the Bayesian toolkit, that can be tuned to control how the posterior is shaped.

Building upon this idea, we develop a general approach to induce sparsity in statistical models within the proposed framework. We proceed by endowing the model's parameters with a global-local structure, where a global component forces all parameters towards zero and local components allow individual parameters to escape it. In the standard setting, this structure promotes shrinkage of irrelevant parameters, meaning their posteriors are pulled towards zero but still assign non-negligible probability to small non-zero values. By setting the confidence associated with local parameters to a small value, we obtain posteriors that shrink irrelevant parameters by sharply concentrating their mass at zero, yielding sparse representations. We refer to this specific configuration, comprising a global-local prior with near-zero confidence on the local scales, as the \emph{sources of uncertainty sparse global-local} (SoU-SGL) estimator.

We illustrate this sparsity mechanism across three settings of increasing complexity: linear regression, logistic regression, and Bayesian neural networks (BNN). In the BNN application, the SoU-SGL formulation enables automatic self-pruning, allowing the network to eliminate redundant units and adapt its effective architecture to the data.

The rest of the paper is organised as follows. Section~\ref{sec:decision-bayes} draws the decision-theoretic connections that motivate our framework. Section~\ref{sec:sou-bayes} formally states the model and discusses its main features. Section~\ref{sec:sou-sgl} develops the SoU-SGL estimator and applies it to the three settings described above. Section~\ref{sec:discussion} concludes with remarks and future research directions.

\section{Decision theory and Bayesian statistics}\label{sec:decision-bayes}

Since \cite{keynes1921treatise} and \cite{knight1921risk}, the distinction between risk and uncertainty has been well established. Risk refers to situations where probabilities are known, while uncertainty arises when probabilities are not. Probabilities, in the context of risk, are objectively understood, corresponding to the long-run relative frequencies of events, as described by the Law of Large Numbers. Under uncertainty, however, probabilities are treated as subjective degrees of belief, reflecting personal assessments.

\cite{savage-54} introduced the subjective expected utility (SEU) model, which became the foundational framework for modelling choices under uncertainty in decision theory. Starting with a preference
relation as primitive, Savage imposed a set of elegant axioms and derived the existence of
both a subjective probability distribution and a utility function that together govern an agent’s
choices. Savage's representation theorem also played a key role in the early development of Bayesian statistics by providing a formal justification for a subjective approach to probability.

Most models in economic theory continue to be built within the Bayesian paradigm, where agents' beliefs are represented probabilistically through an SEU model. However, these models do not capture the degree of confidence attached to these probabilistic assignments. The ambiguity aversion literature emerged from the observation that this limitation prevents models from expressing certain reasonable and compelling preferences, such as those highlighted by the Ellsberg paradox.\footnote{ Ellsberg paradox:
There is an urn containing 90 balls: 30 are red, while the remaining 60 are either black or yellow in unknown proportions. When asked to choose between the following gambles: \textit{\textbf{A:} Receive 100 dollars if a \textbf{red} ball is drawn; \textbf{B:} Receive 100 dollars if a \textbf{black} ball is drawn},
people tend to prefer \textbf{A} over \textbf{B}. However, when given a second choice:
\textit{\textbf{C:} Receive 100 dollars if a \textbf{red or yellow} ball is drawn; \textbf{D:} Receive 100 dollars if a \textbf{black or yellow} ball is drawn},
people tend to prefer \textbf{D} over \textbf{C}. If the person can be represented by an SEU model, choosing \textbf{A} over \textbf{B} reveals a belief that $p(\textbf{yellow}) > p(\textbf{black})$. If this is the case, the person should also prefer \textbf{C} over \textbf{D}, leading to an inconsistency in preferences.} 
The behavioural trait missing from these models is that, although people may think probabilistically when facing uncertainty, they do not treat subjective probabilities the same way as objective ones, tending to prefer situations with known probabilities over those with unknown probabilities \citep[for a comprehensive discussion of ambiguity aversion literature, see][]{gilboa2016ambiguity}.

To formally capture ambiguity aversion, decision-theoretic models have been developed to allow for a richer representation of preferences. In decision theory, an agent chooses among acts \( f \), which are functions mapping states of nature \( \Omega \) to outcomes in \( X \). Since there is no objective probability measure on \( \Omega \), the agent faces fundamental uncertainty. The agent’s preferences over acts are represented by a functional \( V(f) \), which ranks acts based on a criterion derived from axioms imposed on preferences \( \succeq \) over \( X^\Omega \). 
\cite{strzalecki2011axiomatic} axiomatises the following criterion, known as \textit{multiplier preferences} (MP),  originally proposed by \cite{hansen2001robust}:
\begin{equation}\label{eq:MP_preferences}
    V(f) = \min_p \left\{ \mathbb{E}_p u(f) + \gamma \text{KL}(p \| q ) \right\},
\end{equation}
where \( u \) is a utility function, \( \gamma \in (0,\infty]\) and KL stands for the Kullback--Leibler divergence.

This representation models situations where the decision maker does not have enough information to formulate a single probabilistic model with full confidence. Instead, the agent has a best guess \( q \) and a confidence level \( \gamma \) that determines how much deviation from \( q \) is considered acceptable.

The term \( q \) can be interpreted as the agent's prior, while \( p \) serves as an intermediary object: a posterior-like distribution that the agent computes in order to evaluate the act \( f \). \

Classically, Bayesian inference consists of specifying a probabilistic model \( p_\btheta \) for a random variable, parameterised by a vector of unknown quantities \( \btheta \in \Theta \subseteq \R^d \), where we assume $\Theta$ to be a Euclidean space. Here, uncertainty lies in the parameter space. The Bayesian procedure involves specifying a prior for \( \btheta \) and updating this prior after observing realisations of the random variable.

\cite{bissiri2016general} propose the following expression to compute a posterior distribution:
\begin{equation}\label{eq:General_Posterior}
    p^* = \argmin_{p} \left\{ \mathbb{E}_p L (\btheta, X) + \gamma \text{KL}(p \| q ) \right\}.
\end{equation}

This framework is known as  \textit{generalised Bayesian inference}. The term ``generalised" reflects the flexibility of updating the prior \( q \) using any loss function \( L \), where the standard Bayesian posterior is recovered when \( L \) corresponds to the usual negative log-likelihood. \cite{bissiri2016general} introduce the parameter \( \gamma \) as a calibration factor to account for potential differences in the scale of loss functions. Although not originally motivated in this way, \(\gamma\) captures the researcher's confidence in her prior \(q\).

In decision theory, the agent faces multiple acts \( f \), and the probability minimising the objective of Equation~\eqref{eq:MP_preferences}, say \( p_f \), is used to evaluate and rank these acts. In contrast, in the generalised Bayes setting, \( L \) and \( X \) are fixed, and the minimising probability is itself the primary object of interest.

The parallels between the decision-theoretic and statistical formulations are evident: Both formulations involve optimising a quantity composed of an expected term and a divergence term, and both introduce a confidence-like parameter $\gamma$ governing the degree of trust in a reference distribution. The difference lies in the interpretation of the term inside the expectation: in the decision-theoretic case, it reflects a cautious behavioural assumption, where the agent ranks acts by their minimum expected utility; whereas in the generalised Bayes setting, it is a loss function expressing the desire to fit the data as closely as possible.

Both representations \eqref{eq:MP_preferences} and \eqref{eq:General_Posterior} are special cases of broader classes of models. The decision-theoretic formulation falls within the framework of \textit{Variational Preferences} (VP) of \citet{maccheroni2006ambiguity}, while the Bayesian formulation is a particular instance of \textit{Generalised Variational Inference} (GVI) \citep[see][]{knoblauch2022optimization}. These general models extend the formulations by abstracting the Kullback--Leibler divergence into a general cost function \( c(p) \) and a statistical divergence \( \text{D}(p \| q) \) respectively, allowing for greater flexibility in capturing the agent's attitude towards uncertainty and the model's prior misspecification.

\subsection{Sources of Uncertainty}

As reasonable as it is to conceive that the agent has a degree of confidence in her beliefs, it is also natural to consider that this degree of confidence varies according to the different sources of uncertainty she faces. Empirical evidence supports this idea, showing that individuals exhibit different attitudes towards distinct sources of uncertainty, preferring those in which they feel more competent or knowledgeable \citep{abdellaoui2011rich,de2010competence,fox2002ambiguity,heath1991preference,keppe1995judged,kilka2001determines,tversky1995weighing}.

We introduce a natural decomposition of uncertainty into informational stages, each represented by a conditional distribution. This gives a formal grounding to what we call \textit{sources of uncertainty}. To make this precise, we represent the state space through an information structure that can be viewed as an event tree, see Figure~\ref{fig:tree}.

\begin{figure}[h]
    \centering
\begin{istgame}
\setistgrowdirection'{east}

\xtdistance{30mm}{26mm}
\istroot(0){$\Omega$} 
\istb{} 
\istb{} 
\endist 

\xtdistance{30mm}{14mm}
\istroot(a)(0-1){$G_1$} 
\istb{}[al] 
\istb{}[ar] 
\endist

\istroot(b)(0-2){$G_1$} 
\istb{}[al] 
\istb{}[ar] 
\endist

\xtdistance{30mm}{8mm}
\istroot(a1)(a-1){$G_2$}
\istb{}[al]{\omega^1}
\istb{}[ar]{\omega^2}
\endist

\istroot(a2)(a-2){$G_2$}
\istb{}[al]{\omega^3}
\istb{}[ar]{\omega^4}
\endist

\istroot(b1)(b-1){$G_2$}
\istb{}[al]{\omega^5}
\istb{}[ar]{\omega^6}
\endist

\istroot(b2)(b-2){$G_2$}
\istb{}[al]{\omega^7}
\istb{}[ar]{\omega^8}
\endist

\xtShowEndPoints

\xtInfosetO*[ellipse,fill=red!20]
(a)(b){}(1.2em)
\xtInfosetO*[ellipse,fill=green!15]
(a-1)(a-2){}(1.2em)
\xtInfosetO*[ellipse,fill=green!30]
(b-1)(b-2){}(1.2em)

\xtInfosetO*[ellipse,fill=blue!10]
(a1-1)(a1-2){}(0.85em)
\xtInfosetO*[ellipse,fill=blue!20]
(a2-1)(a2-2){}(0.85em)
\xtInfosetO*[ellipse,fill=blue!30]
(b1-1)(b1-2){}(0.85em)
\xtInfosetO*[ellipse,fill=blue!40]
(b2-1)(b2-2){}(0.85em)

\xtTimeLineV'(0){3}{-3.4}{$k$=0}
\xtTimeLineV'(a){3}{-3.4}{$k$=1}
\xtTimeLineV'(b-1){3}{-3.4}{$k$=2}
\xtTimeLineV'(a1-1){3}{-3.4}{$k$=3}

\end{istgame}
\caption{\footnotesize{Event tree representation of sources of uncertainty. Nodes represent informational stages $k$ and the induced partitions $\mathcal{G}_k$; each stage corresponds to a distinct conditional component that can be assigned its own confidence parameter.}}
\label{fig:tree}
\end{figure}

Let $\mathcal{K} := \{0,1,\dots,K\}$ index the stages of observation, and let $\Omega_k$ denote the observation space at stage $k$. The full space of states is $\Omega := \prod_{k=1}^K \Omega_k$, where an element $\omega \in \Omega$ represents a complete observation path, $\omega = (\omega_1, \dots, \omega_K)$.

We define a sequence of partitions $\{\mathcal{G}_k\}_{k\in \mathcal{K}}$ of $\Omega$, where $G_k(\omega) \in \mathcal{G}_k$ denotes the element containing $\omega$. Intuitively, $\mathcal{G}_k$ represents the information available after observing the first $k$ components of the path. For $G \in \mathcal{G}_k$, define $p_G(\omega):=p(\omega)/p(G)$ for $\omega \in G$ , and 0 otherwise.

The uncertainty represented at each node can be interpreted as a different source of uncertainty. In Bayesian statistics, an analogous procedure is to fix a factorisation of the parameter space, which similarly defines distinct sources of uncertainty and the order in which information is incorporated.

This event-tree structure has a natural dynamic interpretation and provides a foundation for dynamic versions of standard static decision-theoretic models \citep[e.g.,][]{epstein2003recursive,Maccheroni2006Dynamic,klibanoff2009recursive}. In this setup, one defines local preferences $\{\succeq_{k,\omega}\}$ at each node $(k,\omega)$ and imposes the axioms of the static model within each node. From the perspective of the initial node $\succeq_{0,\Omega}$, the entire structure can be viewed as a static problem, with the partition of the space introducing distinct sources of uncertainty.

From the axiomatisation of \citet{Maccheroni2006Dynamic}, we can recover the following dynamic version of the MP model~(\ref{eq:MP_preferences}):
\begin{equation}\label{eq:dynamic_MP}
V_{0,\Omega}(f) = \min_p \left\{ \mathbb{E}_p u(f) + \sum_{k=1}^K \gamma_k \sum_{G \in \mathcal{G}_k} p(G) \text{KL}(p_G \| q_G) \right\},
\end{equation}
where each $\gamma_k$ governs the degree of confidence in the reference distribution $q$ at stage $k$.

Building on this decision-theoretic foundation, we now introduce our statistical framework, \textit{Bayesian inference with sources of uncertainty}, as a counterpart of \eqref{eq:dynamic_MP}. We translate the underlying information structure into a blockwise factorisation of the parameter space, thereby defining \emph{sources of uncertainty} and allowing confidence parameters $\{\gamma_k\}$ to be specified block by block, enabling heterogeneous confidence across components.

\section{Bayesian inference with sources of uncertainty}\label{sec:sou-bayes}

Let $(\mathcal X,\mathcal B)$ be a standard Borel space, and let $\bX_{1:n} = (X_1, \ldots,X_n)$ be an i.i.d.\ sample from an unknown distribution $P_0$ on $(\mathcal X,\mathcal B)$. Let $\Theta = \Theta_1 \times \cdots \times \Theta_D \subseteq\mathbb R^D$ be the parameter space, and $\btheta = (\theta_1, \dots, \theta_D) \in \Theta$, and assume all considered distributions on $\Theta$ admit densities. 
For $K\leq D$, let $\{B_k\}_{k=1}^K$ be a partition of $\{1, \dots, D\}$ and write
$\btheta_k:=(\theta_j)_{j\in B_k}$ and $\btheta_{<k}:=(\theta_j)_{j\in \cup_{i<k}B_i}$. We define the blockwise factorisation of a density as $p(\btheta)=p(\btheta_1)\prod_{k=2}^K p\!\left(\btheta_k|\btheta_{<k}\right)$.
This block structure will allow us to represent different sources of uncertainty and to assign distinct confidence parameters to each component of the model.

Let $L_n:\Theta\to\mathbb{R}$ denote the empirical loss. In this paper we focus on the additive form $L_n(\btheta)=\sum_{i=1}^n \ell(\btheta,X_i)$, where $\ell(\btheta,X)$ is a per-observation loss linking parameters to observations. We define the SoU posterior $q^*$ as a minimiser
\begin{equation} \label{eq:qstar}
q^* \in \argmin_{q} J\left(q\right)
\end{equation}
of the following objective $J$ over distributions $q$ on $\Theta$:
\begin{multline} \label{eq:KB}
J(q)=
\frac{1}{n}\mathbb{E}_{q} [ L_n(\btheta) ] + \frac{\gamma_1}{n} \, 
\text{KL}\left( 
q(\btheta_1) \,\|\, \pi(\btheta_1) 
\right)+ 
\sum_{k=2}^{K} \frac{\gamma_k}{n} \, 
\mathbb{E}_{q(\btheta_{<k})} \left[ 
\text{KL}\left( 
q(\btheta_k | \btheta_{<k}) \,\|\, 
\pi(\btheta_k | \btheta_{<k}) 
\right)
\right].
\end{multline}
Here $\pi(\cdot)$ is a prior over $\btheta$ that admits the blockwise factorisation induced by the partition $\{B_k\}_{k=1}^K$, and $\gamma_k>0$ quantifies the researcher's confidence in the corresponding blockwise prior component.

In this formulation, the posterior is computed under a trade-off between fitting the data, as captured by the expected loss term, and not deviating too much from the prior, as captured by the KL terms.

The standard Bayesian posterior corresponds to the case where the loss is the negative log likelihood, $L_n(\btheta)=-\sum_{i=1}^n \log f(X_i|\btheta)$, and $\gamma_k = 1$ for all $k$.
In this case the KL terms collapse to $\frac{1}{n} \text{KL} ( q \| \pi )$, that is, a single KL term scaled by $1/n$. The trade-off is then governed entirely by the sample size: with little data, the posterior remains close to the prior; as $n$ grows, the likelihood progressively dominates.

In generalised Bayesian inference as proposed by \cite{bissiri2016general}, the loss function remains arbitrary, but the coefficients $\gamma_k$ are taken equal across blocks, i.e., $\gamma_k = \gamma$ for all $k$.\footnote{If we also restrict the loss to the negative log-likelihood, this would correspond to models referred to with multiple names in the literature, such as power posterior, tempered posterior, safe-Bayes \citep{grunwald2012safe,grunwald2017inconsistency,holmes2017assigning,miller2019robust}.} In this case, the KL terms collapse to $\frac{\bgamma}{n} \text{KL} ( q \| \pi )$, which is also a single KL term but now scaled by $\gamma/n$. Now, the trade-off is also influenced by $\gamma$, which captures how confident we are in the prior. A larger $\gamma$ indicates greater confidence in the prior, meaning that relatively more data is needed to override it; conversely, a smaller $\gamma$ indicates a relatively low confidence in the prior, and the posterior is more influenced by the data.

By allowing heterogeneity in the coefficients $\{ \gamma_k \}$, our model enables the researcher to express varying levels of confidence in different blocks of parameters. In other words, the trade-off between fitting the data and adhering to the prior can now be explored separately for each block, allowing the researcher to control how much each block’s posterior is influenced by the data versus the prior.

Additionally, our framework also admits a natural interpretation as the solution to a Lagrangian optimisation problem. Specifically, it can be viewed as minimising the expected loss subject to entropy-based constraints, that is, restricting each block component of the posterior to remain within a KL-neighbourhood of its prior counterpart. In this interpretation, the coefficients $\{\gamma_k\}$ play the role of Lagrange multipliers associated with these constraints.

In summary, the coefficients $\{\gamma_k\}$ play the role of per-block regularisation strengths.
A large $\gamma_k$ anchors the block posterior to its prior, whereas a small $\gamma_k$ allows the data to dominate. We now derive a closed-form characterisation of the SoU posterior.

\subsection{Closed-form characterisation}

The formulation \eqref{eq:qstar} admits an explicit characterisation of its minimisers. The next theorem shows that the SoU posterior can be constructed via a backward recursion over the blocks. The characterisation applies to a general empirical loss $L_n(\btheta)$,
without requiring additivity.

\begin{theorem}[Closed-form characterisation]\label{thm:Closed_Form}
Assume all the normalising constants defined below are finite and strictly positive.
Let $V_K(\btheta_{\le K})=L_n(\btheta)$ and, for $k=K,K-1,\dots,2$, define 
\[
Z_k(\btheta_{<k})
:=\int \exp\!\left\{-\frac{V_k(\btheta_{\le k})}{\gamma_k}\right\}
\pi_k(\mathrm d\btheta_{k}\mid \btheta_{<k}),
\quad
V_{k-1}(\btheta_{<k}):=-\gamma_k\log Z_k(\btheta_{<k}).
\]
For $k=2,\dots,K$ let $q_k^*(\mathrm d\btheta_{k}\mid \btheta_{<k})
:= \frac{\exp\!\left\{-V_k(\btheta_{\le k})/\gamma_k\right\}}{Z_k(\btheta_{<k})}
\,\pi_k(\mathrm d\btheta_{k}\mid \btheta_{<k})$ 
and define
\[
q_1^*(\mathrm d\btheta_{1})
:= \frac{\exp\!\left\{-V_1(\btheta_{1})/\gamma_1\right\}}{Z_1}\,\pi_1(\mathrm d\btheta_{1}),
\qquad
Z_1:=\int \exp\!\left\{-\frac{V_1(\btheta_{1})}{\gamma_1}\right\}\pi_1(\mathrm d\btheta_{1}).
\]
Then $q^*(\mathrm d\btheta)
:= q_1^*(\mathrm d\btheta_{1})\prod_{k=2}^K q_k^*(\mathrm d\btheta_{k}\mid \btheta_{<k})$ is the unique minimiser of $J(q)$ defined in \eqref{eq:KB}.
\end{theorem}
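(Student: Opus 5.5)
We prove the result by backward induction over the blocks. The tool is the Gibbs variational principle (Donsker--Varadhan): for a measurable $V$, a reference probability $\pi$ with $Z=\int e^{-V/\gamma}\,d\pi\in(0,\infty)$, and $\gamma>0$,
\[
\mathbb{E}_{q}[V]+\gamma\,\mathrm{KL}(q\,\|\,\pi)=-\gamma\log Z+\gamma\,\mathrm{KL}(q\,\|\,q^{V}),
\qquad q^{V}(\mathrm d\btheta)=\frac{e^{-V/\gamma}}{Z}\pi(\mathrm d\btheta),
\]
valid whenever the left side is well defined. In particular $q^{V}$ is the unique minimiser, with minimum value $-\gamma\log Z$. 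The identity follows by writing $\log\frac{dq}{d\pi}=\log\frac{dq}{dq^V}-V/\gamma-\log Z$ and integrating against $q$. We drop the common factor $1/n$ in $J$, since it does not affect minimisers.

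First, disintegrate an arbitrary candidate $q$ along the blockwise factorisation, $q=q_1\prod_{k\ge2}q_k(\cdot\mid\btheta_{<k})$. We may restrict to $q$ with $J(q)<\infty$, so every KL term is finite and each $q_k(\cdot\mid\btheta_{<k})\ll\pi_k(\cdot\mid\btheta_{<k})$ almost surely. By the tower property,
\[
\mathbb{E}_q[L_n]=\mathbb{E}_{q(\btheta_{<K})}\,\mathbb{E}_{q_K(\cdot\mid\btheta_{<K})}[V_K].
\]
Apply the Gibbs identity pointwise in $\btheta_{<K}$ to the innermost block, with $V=V_K(\btheta_{<K},\cdot)$ and reference $\pi_K(\cdot\mid\btheta_{<K})$. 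This gives
\[
\mathbb{E}_{q_K}[V_K]+\gamma_K\mathrm{KL}(q_K\|\pi_K)=V_{K-1}(\btheta_{<K})+\gamma_K\mathrm{KL}\bigl(q_K(\cdot\mid\btheta_{<K})\,\|\,q_K^*(\cdot\mid\btheta_{<K})\bigr).
\]
Substituting into $J$ yields
\[
J(q)=\Bigl[\mathbb{E}_{q(\btheta_{<K})}[V_{K-1}]+\sum_{k<K}\gamma_k(\text{KL terms})\Bigr]+\gamma_K\,\mathbb{E}_{q(\btheta_{<K})}\mathrm{KL}(q_K\|q_K^*).
\]
The bracket has exactly the form of $J$ for $K-1$ blocks, with loss $V_{K-1}$. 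Iterating down to $k=1$, where the last step uses $Z_1$, gives the exact decomposition
\[
J(q)=-\gamma_1\log Z_1+\gamma_1\mathrm{KL}(q_1\|q_1^*)+\sum_{k=2}^K\gamma_k\,\mathbb{E}_{q(\btheta_{<k})}\mathrm{KL}\bigl(q_k(\cdot\mid\btheta_{<k})\,\|\,q_k^*(\cdot\mid\btheta_{<k})\bigr).
\]
Every term after the constant is nonnegative, so $J(q)\ge-\gamma_1\log Z_1$. Equality holds iff $q_1=q_1^*$ and, for each $k$, $q_k=q_k^*$ for $q(\btheta_{<k})$-a.e.\ $\btheta_{<k}$. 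Inductively, the second condition means $q_{<k}=q^*_{<k}$, so equality forces $q=q^*$ as a joint measure. This gives both minimality and uniqueness. Measurability of $Z_k$, $V_{k-1}$ and the kernels $q_k^*$ follows from Tonelli, since all the spaces are standard Borel and disintegrations exist.

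The main obstacle is justifying the pointwise Gibbs identity when expectations may be infinite or take the form $\infty-\infty$. Here $V_k$ need not be bounded below and $L_n$ is an arbitrary real function. I would handle this in three steps:
\begin{itemize}
\item Interpret $J(q)$ only for $q$ where $\mathbb{E}_q[L_n]$ is well defined (or its negative part is integrable), and set $J=+\infty$ otherwise.
\item Show that $J(q^*)$ is finite and equals $-\gamma_1\log Z_1$, using the finiteness and positivity of the normalising constants in the assumption.
\item For a general $q$ with $J(q)<\infty$, verify the integrability needed to add and subtract $\log Z_k$ under $q$. One route is the Donsker--Varadhan inequality $\mathbb{E}_q[-V/\gamma]\le\mathrm{KL}(q\|\pi)+\log Z$, which bounds the positive part of $-V_k/\gamma_k$ given finite KL, and hence makes $\mathbb{E}_q[V_{k-1}]$ well defined at each step of the recursion.
\end{itemize}
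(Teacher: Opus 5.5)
Your proposal is correct and is essentially the paper's proof: you disintegrate $q$ into blockwise kernels, apply the conditional Donsker--Varadhan identity to the last block pointwise in $\btheta_{<K}$, and recurse backwards, and your telescoped identity $J(q)=-\gamma_1\log Z_1+\gamma_1\mathrm{KL}(q_1\|q_1^*)+\sum_{k\ge 2}\gamma_k\,\mathbb{E}_{q(\btheta_{<k})}\mathrm{KL}(q_k\|q_k^*)$ is just a cleaner packaging of the paper's iterated infima that makes uniqueness of the joint measure explicit. One caution on your second technical bullet, which goes beyond what the paper checks: finiteness of the $Z_k$ alone does not make $J(q^*)$ finite. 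For instance, with one block, $\gamma_1=1$, $L_n(\theta)=-\theta$ and $\pi(\theta)\propto e^{-\theta}\theta^{-2}$ on $[1,\infty)$, you get $q^*(\theta)=\theta^{-2}$, $\mathbb{E}_{q^*}L_n=-\infty$ and $\mathrm{KL}(q^*\|\pi)=\infty$, and no minimiser exists; so that step needs an extra integrability hypothesis such as $L_n$ bounded below, a point the paper's proof also passes over silently.
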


The proof is given in Section~\ref{app:closed_form} of the Supplementary Material.
Theorem~\ref{thm:Closed_Form} shows that each blockwise conditional SoU posterior is an exponential reweighting of the corresponding conditional prior, which generalises the usual Gibbs posterior construction.
The key difference is that the reweighting is not performed using the raw empirical loss $L_n$ but a sequence of
value functions $V_k$ defined by backward recursion.
Intuitively, $V_k(\btheta_{\le k})$ is the effective loss seen by block $k$ after accounting for the optimal contribution of the subsequent blocks, with $\gamma_k$ controlling the sharpness of
this update.

Corollary~\ref{cor:closed_form} (Supplementary Material) makes the connection with standard Gibbs posteriors particularly transparent.
If the prior factorises as $\pi\left(\mathrm{d}\btheta\right)=\bigotimes_{k=1}^{K}\pi_{k}\left(\mathrm{d}\btheta_{k}\right)$ and the empirical loss decomposes additively across blocks,
$L_n(\btheta)=\sum_{k=1}^K L_{n,k}(\btheta_{k})$, then the SoU posterior factorises across blocks as
\[
q^*(\mathrm{d}\btheta)=\prod_{k=1}^K q_{k}^*(\mathrm{d}\btheta_{k}),
\qquad
q_{k}^*(\mathrm{d}\btheta_{k}) \propto \exp\{-L_{n,k}(\btheta_{k})/\gamma_k\}\,\pi_k(\mathrm{d}\btheta_{k}).
\]
Thus, in this decoupled setting, each block is updated independently via a Gibbs posterior with its own temperature parameter $\gamma_k$. Next, we decompose the objective to clarify the role played by $\{\gamma_k\}$.

\subsection{Separating point estimation and uncertainty quantification}\label{sec:point_uncertainty}

By rearranging the objective in \eqref{eq:KB}, we obtain a decomposition that isolates the roles of point estimation and uncertainty quantification:\footnote{Here we work with the unscaled version of \eqref{eq:KB}; see Section~\ref{app:point_uncertainty} in the Supplementary Material for the derivation.}
\begin{equation}\label{eq:point_and_uncertainty}
    q^* = \argmin_q \big\{M(q) - H(q)\big\},
\end{equation}
where the two components are given by
\begin{align*}
    M(q) &:= -\mathbb{E}_{q} \left[\log P(\btheta)\right], \quad H(q) := \sum_{k=2}^{K} \gamma_k \, \mathbb{E}_{q(\btheta_{<k})} \left[ \mathcal{H}\left( q(\btheta_{k} \mid \btheta_{<k}) \right) \right] + \gamma_1 \mathcal{H}\left(q(\btheta_{1}) \right) , \\
    \,\,\text{with}&\,\, P(\btheta) := \prod_{k=2}^{K} \pi\left( \btheta_{k} \mid \btheta_{<k} \right)^{\gamma_k} \pi\left( \btheta_{1}\right)^{\gamma_1} \exp\left( - \sum_{i=1}^n\ell(\btheta, X_i) \right) Z^{-1},
\end{align*}
where \(\mathcal{H}(q) := -\mathbb{E}_{q} \left[ \log q \right]\) denotes the entropy and \(Z^{-1}<\infty\) is a normalising constant ensuring that \(P(\btheta)\) integrates to one.

The distribution $P(\btheta)$ aggregates the empirical loss with the prior contributions, each weighted by its block-specific confidence parameter $\gamma_k$. In the classical Bayesian setting\footnote{Where $\ell(\btheta,X) = - \log f(X \mid \btheta)$ and $\gamma_k = 1$ for all $k$.}, $P(\btheta)$ coincides with the standard posterior.

The term $M(q)$ measures the alignment between $q$ and this target distribution $P(\btheta)$ and determines the \emph{location} around which $q^*$ concentrates. The term $H(q)$ introduces a weighted entropy contribution that counters concentration, governing the \emph{dispersion} of the resulting posterior. To see this, consider first the case where the second term is ignored. The optimisation problem then reduces to $\hat{q} = \argmin_q M(q)$.
Let $\hat{\btheta} = \argmin_{\btheta \in \Theta} - \log P(\btheta)$.
In this case, the optimal solution is $\hat{q} = \delta_{\hat{\btheta}}$, where $\delta_{\cdot}$ denotes the Dirac measure; that is, without the entropy term, the posterior collapses to a single point at $\hat{\btheta}$.

The entropy term $H(q)$ works in the opposite direction. Because each entropy component appears with a positive weight $\gamma_k$, increasing spread in $q$ tends to lower the objective. Hence $H(q)$ counteracts the collapse induced by $M(q)$ by encouraging dispersion in $q$. The confidence parameters $\{\gamma_k\}$ act by attenuating or amplifying this effect for each parameter block $B_k$.\\

\noindent\textbf{\textbf{Posterior concentration and a new form of regularisation.}}

Posterior concentration is a central aspect of Bayesian inference. In the standard Bayesian formulation, the extent and direction of this concentration are governed by the \emph{informativeness of the prior}. A more informative prior pulls the posterior more strongly towards its location, thereby inducing a form of regularisation \emph{towards the prior}. This concentration also reduces the model's complexity, as reflected in measures like the effective number of parameters \citep{spiegelhalter2002bayesian}.\footnote{The effective number of parameters is defined as $p_{\text{D}} := 2 \big( \log p(\bX \mid \tilde\btheta) - \mathbb{E} \log p(\bX  \mid \btheta) \big)$, with $\tilde\btheta := \mathbb{E}(\btheta \mid \bX )$. Posterior concentration narrows the gap between the two terms, yielding a smaller value of $p_{\text{D}}$.}

Our framework introduces an additional mechanism for controlling concentration through the \emph{confidence} parameters $\{\gamma_k\}$. While prior informativeness drives concentration towards the prior, confidence parameters govern the degree to which the posterior can contract \emph{towards the data} (i.e., the empirical loss for block $B_k$, given the remaining blocks).

The decomposition in \eqref{eq:point_and_uncertainty} makes this explicit. Consider a single block $B_k$ and examine the effect of letting $\gamma_k \to 0$. In this limit,
\[
\begin{aligned}
&\pi(\btheta_{k}\mid \btheta_{<k})^{\gamma_k} \to 1
\quad \text{pointwise wherever } 0<\pi(\btheta_{k}\mid \btheta_{<k})<\infty,\\
&\gamma_k\,\E_{q(\btheta_{<k})}
\!\left[\mathcal H\!\left(q(\btheta_{k}\mid \btheta_{<k})\right)\right] \to 0.
\end{aligned}
\]
The prior contribution for block $B_k$ disappears from $P(\btheta)$, so it no longer influences the location of $q^*$. Simultaneously, the corresponding entropy penalty vanishes, eliminating the force that would otherwise preserve uncertainty in that block. The combined effect is a concentration of the posterior for $B_k$ \emph{towards the data}.

By jointly tuning prior informativeness and confidence, the practitioner can shape, for each block, both the extent and the direction of posterior concentration.

In particular, taking $\gamma_k$ small for selected blocks drives the corresponding variational factors to concentrate tightly, so those components behave as optimisation variables.
At the same time, keeping $\gamma_k$ bounded away from zero for other blocks preserves non-degenerate variational distributions, enabling uncertainty quantification.
Thus, within a single coherent objective, one can optimise some components of $\btheta$ while performing inference for the rest.\\

\noindent\textbf{\textbf{Objective collapse.}}
This limiting behaviour connects our formulation to classical point estimation. 
When $\gamma_k \to 0$ for all $k$, the entropy contribution vanishes and the prior terms in \(P(\btheta)\) disappear, so the optimal distribution collapses to a point mass,
\[
q^* = \delta_{\hat{\btheta}}, 
    \qquad 
    \hat{\btheta} = \argmin_{\btheta \in \Theta} \sum_{i=1}^n \ell(\btheta, X_i),
\]
recovering standard empirical risk minimisation.

This interpretation parallels the Bayesian Learning Rule (BLR) \citep{khan2023bayesian}, in which an entropy term lifts a deterministic loss into a posterior-like update. In our case, the confidence parameters $\{\gamma_k\}$ adjust this entropy contribution blockwise: setting all $\gamma_k=1$ recovers the BLR objective, while letting $\gamma_k \to 0$ for all $k$ removes it entirely, yielding the classical point-estimation regime.

The following result formalises the ``objective collapse'' phenomenon: for a fixed $n$, as the confidence parameters $\left\Vert \bgamma\right\Vert _{\infty}=\max_{1\le k\le K}\gamma_{k}$ vanish, the SoU posterior concentrates on empirical loss minimisers and the KL regularisation becomes irrelevant. We write $L_n^*:=\inf_{\btheta\in\Theta} L_n(\btheta)$ for the minimum empirical loss, and, for $\delta>0$, let
$S_\delta:=\{\btheta\in\Theta: L_n(\btheta)\le L_n^*+\delta\}$ denote the $\delta$-near-optimal level set of $L_n$.

\begin{theorem}[Objective collapse as $\|\bgamma\|_\infty\to0$]\label{thm:collapse}
Assume (i) minimisers exist for every $\bgamma\in(0,\infty)^K$, and (ii) the prior assigns positive
mass to every near-optimal level set of $L_n$, that is, $\pi(S_\delta)>0$ for every $\delta>0$.
Let $\bgamma^{(m)}$ satisfy $\|\bgamma^{(m)}\|_\infty\to0$ and let $q_m$ be any corresponding sequence
of minimisers.
Then $\E_{q_m}L_n\to L_n^*$ and $q_m\!\left(\btheta:L_n(\btheta)\ge L_n^*+\varepsilon\right)\to0$ for every $\varepsilon>0$.
If $L_n$ has a unique well-separated minimiser $\hat\btheta_n$, then $q_m \rightsquigarrow \delta_{\hat\btheta_n}$.
Moreover, the weighted $\mathrm{KL}$ penalty vanishes: $\sum_{k=1}^K\gamma_k^{(m)}\mathrm{KL}_k(q_m\|\pi)\to0$, where $\mathrm{KL}_{k}$ denotes the $k$-th blockwise $\mathrm{KL}$ term appearing in the objective. 
\end{theorem}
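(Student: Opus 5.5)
The argument is a comparison of the minimum value of $J$ with the value at a well-chosen competitor. Work with the unscaled objective $nJ(q)=\E_q L_n+\sum_k\gamma_k\mathrm{KL}_k(q\|\pi)$, where $\mathrm{KL}_1=\mathrm{KL}(q(\btheta_1)\|\pi(\btheta_1))$ and $\mathrm{KL}_k=\E_{q(\btheta_{<k})}\mathrm{KL}(q(\btheta_k|\btheta_{<k})\|\pi(\btheta_k|\btheta_{<k}))$. Every term is nonnegative. By the chain rule for KL, $\sum_k \mathrm{KL}_k(q\|\pi)=\mathrm{KL}(q\|\pi)$, so
\[
\sum_k\gamma_k\mathrm{KL}_k(q\|\pi)\le\|\bgamma\|_\infty\,\mathrm{KL}(q\|\pi).
\]

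For fixed $\delta>0$, take the competitor $r_\delta:=\pi(\cdot\mid S_\delta)$, the prior restricted to the near-optimal level set and renormalised. This is well defined by (ii). It satisfies $\E_{r_\delta}L_n\le L_n^*+\delta$ and $\mathrm{KL}(r_\delta\|\pi)=-\log\pi(S_\delta)<\infty$. Minimality of $q_m$ then gives
\[
\E_{q_m}L_n+\sum_k\gamma_k^{(m)}\mathrm{KL}_k(q_m\|\pi)\le L_n^*+\delta-\|\bgamma^{(m)}\|_\infty\log\pi(S_\delta).
\]
If $L_n^*=-\infty$ or expectations are ill-defined, the statement is vacuous or needs care; I will assume $L_n^*$ is finite, which is implicit in the definition of $S_\delta$. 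Since $\E_{q_m}L_n\ge L_n^*$ and the KL terms are nonnegative, taking $\limsup_m$ and then $\delta\downarrow0$ yields two conclusions: $\E_{q_m}L_n\to L_n^*$, and the weighted KL penalty tends to $0$. Both come out of the same sandwich.

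The probability statement follows by Markov's inequality applied to the nonnegative variable $L_n-L_n^*$:
\[
q_m(L_n\ge L_n^*+\varepsilon)\le \frac{\E_{q_m}L_n-L_n^*}{\varepsilon}\to0.
\]

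For weak convergence, "well-separated" means that for every $\eta>0$, $\inf\{L_n(\btheta):\|\btheta-\hat\btheta_n\|\ge\eta\}>L_n^*$. Call the gap $\varepsilon_\eta$. Then $q_m(\|\btheta-\hat\btheta_n\|\ge\eta)\le q_m(L_n\ge L_n^*+\varepsilon_\eta)\to0$ for every $\eta$. This is convergence in probability to the constant $\hat\btheta_n$, which is equivalent to $q_m\rightsquigarrow\delta_{\hat\btheta_n}$. To see this, for bounded Lipschitz $f$ bound $|\E_{q_m}f-f(\hat\btheta_n)|$ by $\mathrm{Lip}(f)\eta+2\|f\|_\infty q_m(\|\btheta-\hat\btheta_n\|\ge\eta)$, then apply the portmanteau theorem.

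The only step needing care is justifying the competitor bound. This requires the chain-rule identity for KL in its disintegration form, which is standard on standard Borel spaces and uses the density assumption. It also requires $r_\delta$ to be admissible in the minimisation class, which holds since it has a density with respect to $\pi$. Everything else is elementary.
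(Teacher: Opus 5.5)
Your proposal is correct and follows essentially the same route as the paper. The paper also uses the competitor $\pi(\cdot\mid S_\delta)$ with the chain-rule bound $\sum_k\gamma_k\mathrm{KL}_k\le\|\bgamma\|_\infty\mathrm{KL}(q\|\pi)$, a sandwich giving both $\E_{q_m}L_n\to L_n^*$ and the vanishing weighted KL, the Markov-type bound for the level-set mass, and the well-separation gap for weak convergence. Your extra remarks (finiteness of $L_n^*$, and the bounded-Lipschitz argument that convergence in probability to a constant yields $q_m\rightsquigarrow\delta_{\hat\btheta_n}$) only make explicit steps the paper leaves implicit.
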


The proof is given in Section~\ref{app:collapse} of the Supplementary Material. This result describes a finite-sample regime (fixed $n$) as $\|\bgamma\|_\infty\to0$. Next, we study the asymptotic regime $n \to \infty$, allowing $\bgamma$ to depend on $n$.

\subsection{Consistency of the SoU posterior}\label{subsec:consistency}

The following theorem provides a basic frequentist consistency guarantee for the SoU posterior.
Under standard regularity and identifiability conditions, and provided the prior assigns sufficient local mass
around the population-risk minimiser $\btheta_0$ while the confidence weights $\bgamma_n$ do not dominate the data as
$n\to\infty$, the SoU posterior concentrates on $\btheta_0$.

\begin{theorem}[Consistency]\label{thm:consistency}
Let $X_1,\dots,X_n$ be i.i.d.\ from $\mathrm P_0$.
Define $R_n(\btheta):=n^{-1}\sum_{i=1}^n \ell(\btheta,X_i)$ and $R(\btheta):=\E_{\mathrm P_0}\ell(\btheta,X)$.
Let $q_n^*$ denote any minimiser of \eqref{eq:KB} with prior $\pi$ and confidence vector $\bgamma_n\in(0,\infty)^K$.
Assume: (i) $R_n$ converges to $R$ uniformly on compacts a.s.; (ii) $\Theta$ is compact; (iii) $\btheta_0$ is the unique well-separated minimiser of $R$; and
(iv) the prior has sufficient local mass around $\btheta_0$ in the sense that for some $r_n\downarrow0$, $\pi(B(\btheta_0,r_n))>0$ for sufficiently large $n$ and
\[
\frac{\|\bgamma_n\|_\infty}{n}\log\!\left(\frac{1}{\pi(B(\btheta_0,r_n))}\right)\to0.
\]
Then for every $\varepsilon>0$, $q_n^*(B(\btheta_0,\varepsilon))\xrightarrow{\mathrm{a.s.}}1$, and hence
$q_n^*\rightsquigarrow \delta_{\btheta_0}$ almost surely.
\end{theorem}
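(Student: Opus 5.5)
The argument compares the minimiser $q_n^*$ with a simple competitor: the prior conditioned on a shrinking ball around $\btheta_0$. This yields an upper bound on $\E_{q_n^*}R_n$, and well-separation then forces $q_n^*$ to put vanishing mass away from $\btheta_0$.

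\emph{Step 1: a competitor.} Fix $n$ large enough that $\pi(B_n)>0$, where $B_n:=B(\btheta_0,r_n)$, and set $\tilde q_n:=\pi(\cdot\mid B_n)$. Two facts about this distribution are needed.
\begin{itemize}
\item Its blockwise conditionals are the prior's conditionals restricted to $B_n$. By the chain rule for KL, the sum of blockwise KL terms equals $\mathrm{KL}(\tilde q_n\|\pi)=\log(1/\pi(B_n))$.
\item Each blockwise term is nonnegative. Hence the weighted penalty in \eqref{eq:KB} is at most $\frac{\|\bgamma_n\|_\infty}{n}\log(1/\pi(B_n))$, which tends to $0$ by (iv).
\end{itemize}
Since every KL term is nonnegative, optimality of $q_n^*$ gives
\[
\E_{q_n^*}R_n\le J(q_n^*)\le J(\tilde q_n)\le \sup_{B_n}R_n+\frac{\|\bgamma_n\|_\infty}{n}\log\frac{1}{\pi(B_n)} .
\]

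\emph{Step 2: passing to $R$.} Because $\Theta$ is compact, (i) gives $\Delta_n:=\sup_\Theta|R_n-R|\to0$ almost surely. On this event, $\E_{q_n^*}R\le \sup_{B_n}R+2\Delta_n+o(1)$.

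The only delicate point is that $\sup_{B_n}R\to R(\btheta_0)$. This requires continuity of $R$ at $\btheta_0$. Continuity of $R$ follows if each $R_n$ is continuous, since a uniform limit of continuous functions is continuous. Otherwise it has to be read into the "standard regularity" in (i)/(iii), and I would state that assumption explicitly. Under it, $\limsup_n \E_{q_n^*}R\le R(\btheta_0)$ almost surely.

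\emph{Step 3: concentration.} Well-separation says that for each $\varepsilon>0$,
\[
\eta:=\inf_{\btheta\notin B(\btheta_0,\varepsilon)}R(\btheta)-R(\btheta_0)>0 .
\]
Since $R\ge R(\btheta_0)$ everywhere,
\[
\E_{q_n^*}R-R(\btheta_0)\ge \eta\, q_n^*\bigl(B(\btheta_0,\varepsilon)^c\bigr).
\]
Combined with Step 2, this gives $q_n^*(B(\btheta_0,\varepsilon)^c)\to0$ almost surely.

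\emph{Step 4: weak convergence.} The null set in Step 2 comes from the single event in (i), so it does not depend on $\varepsilon$. Taking $\varepsilon=1/j$ for $j\in\mathbb N$, the conclusion of Step 3 holds simultaneously for all $\varepsilon>0$ almost surely. Concentration on every ball around $\btheta_0$ is equivalent to $q_n^*\rightsquigarrow\delta_{\btheta_0}$.

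Measurability and integrability of $R_n$ under $q_n^*$ are routine because $\Theta$ is compact.
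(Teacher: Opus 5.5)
Your proof is correct. Its first half matches the paper's: the same competitor $\pi(\cdot\mid B(\btheta_0,r_n))$, the same chain-rule bound of the weighted penalty by $\|\bgamma_n\|_\infty\log(1/\pi(B(\btheta_0,r_n)))$, and the same use of (i) plus continuity of $R$ at $\btheta_0$ to show the competitor's empirical risk tends to $R(\btheta_0)$.

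One wording point: a ball is not a product set, so the blockwise conditionals of $\pi(\cdot\mid B_n)$ are not literally the prior conditionals restricted to $B_n$. For $k<K$ they are reweighted by $\pi(B_n\mid\btheta_{\le k})/\pi(B_n\mid\btheta_{<k})$. This is harmless, because you only need the chain rule for KL, which holds for any $q\ll\pi$.

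The two routes diverge in the concentration step.
\begin{itemize}
\item You move from $\E_{q_n^*}R_n$ to $\E_{q_n^*}R$ using $\sup_\Theta|R_n-R|\to0$, then apply well-separation of $R$ directly.
\item The paper never leaves the empirical risk. It shows $m_n=\inf_\Theta R_n\to R(\btheta_0)$ and $m_{n,\varepsilon}=\inf_{d(\btheta,\btheta_0)\ge\varepsilon}R_n\to R(\btheta_0)+\Delta(\varepsilon)$. It then bounds $q_n^*(A_\varepsilon)\le(\E_{q_n^*}R_n-m_n)/\Delta_n(\varepsilon)$, with the empirical gap $\Delta_n(\varepsilon)=m_{n,\varepsilon}-m_n\to\Delta(\varepsilon)>0$.
\end{itemize}
Your version is shorter and more transparent, but it needs uniform convergence on all of $\Theta$, which is exactly where compactness enters. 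The paper's version only uses uniform convergence on the compact level sets $\{h\le M\}$. That is what lets the supplement prove the stronger statement for non-compact $\Theta$ under the coercivity condition (ii'). There, $q_n^*$ may put mass outside every fixed compact, so your transfer from $R_n$ to $R$ would not go through.

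Finally, the continuity of $R$ at $\btheta_0$ that you flag is not an extra assumption. The supplement builds it into its definition of a well-separated minimiser, so your argument uses exactly the paper's hypotheses.
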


The proof is given in Section~\ref{app:consistency}  of the Supplementary Material. For simplicity, we stated the result for compact 
$\Theta$ and a fixed prior $\pi$. Section~\ref{app:consistency} establishes a more general version allowing non-compact $\Theta$ and $n$-dependent priors $\pi=\pi_n$.

\subsection{Example of the confidence mechanism}\label{sec:examples_confidence}

We illustrate the role of confidence using an exponential-family model with a conjugate prior. For clarity, we work in the homogeneous setting $\gamma_k = \gamma$, which corresponds to generalised Bayesian inference \citep{bissiri2016general}. This yields simple closed-form expressions and isolates the effect of confidence; the same mechanism applies blockwise under heterogeneous $\{\gamma_k\}$.

In exponential-family models, a convenient parametrisation elegantly reveals how prior information combines with observed data
\citep{diaconis1979conjugate}. 
Let the model belong to the exponential family, with likelihood for a sample 
$\bX_{1:n} = (X_1, \dots, X_n)$ given by
\begin{equation*}
p(\bX_{1:n} \mid \btheta)
= \left[\prod_{i=1}^n h(X_i)\right] c(\btheta)^n 
\exp\left\{ \btheta^\top \sum_{i=1}^n T(X_i) \right\},
\end{equation*}
where $T(\cdot)$ is the sufficient statistic for the parameter $\btheta$. 
A conjugate prior for this model can be written as
\begin{equation*}
\pi(\btheta \mid N_0, T_0)
= \kappa(N_0, T_0)\, c(\btheta)^{N_0} 
\exp\left\{ N_0 \btheta^\top T_0 \right\},
\end{equation*}
where $T_0$ represents a prior guess for the sufficient statistic 
and $N_0$ quantifies the strength of this prior belief, governing the prior’s concentration around $T_0$.
The resulting posterior distribution remains in the same family as the prior, 
with updated parameters $\pi(\btheta \mid N_n, T_n)$, where:
\[
N_n := N_0+n,\quad 
T_n := \alpha_n T_0 + (1-\alpha_n) \bar T_n, 
\quad
\alpha_n := \frac{N_0}{N_0 + n}, \quad 
\bar T_n := \frac{1}{n} \sum_{i=1}^n T(X_i).
\]
This parametrisation yields a clear interpretation: computing the posterior is equivalent to having observed 
$N_0$ pseudo-observations with average sufficient statistic $T_0$, which are pooled with the 
$n$ actual observations summarised by 
$\bar T_n$.

Under this representation, 
$N_0$ performs two tasks simultaneously.
It determines the shrinkage of the posterior mean towards the prior centre through the factor 
$\alpha_n$, and it controls the overall precision of the posterior through the updated effective sample size $N_n$.
This coupling implies a practical constraint: making the posterior more concentrated by increasing $N_0$ necessarily increases the influence of the prior’s sufficient statistic on the posterior mean.

In our extended framework, the posterior's effective sample size and the shrinkage factor respectively become (see Section~\ref{app:Exponential} of the Supplementary Material for the derivation)
\[
N_n = N_0 + \frac{n}{\bgamma},\quad 
\alpha_n = \frac{\gamma N_0}{\gamma N_0 + n}.
\]
Here, smaller values of $\gamma$ increase $N_n$, yielding a more concentrated posterior, while at the same time decreasing $\alpha_n$, shifting the posterior mean towards the data-driven component $\bar T_n$.

By combining the original prior quantity \(N_0\), capturing the degree of trust in the prior location \(T_0\), with the confidence parameter \(\gamma\), capturing the degree of trust in the specified prior distribution \(\pi\), the researcher can now tune separately the informativeness of the prior and the degree of posterior concentration.
The pair $(N_0,\gamma)$ thus allows one to specify independently both where the posterior is pulled (via 
$\alpha_n$) and how tightly it contracts (via 
$N_n$), providing enhanced flexibility for Bayesian learning and regularisation.

Figure~\ref{fig:oneParameter} illustrates this mechanism as a special case in the Normal-Normal model with unknown mean $\mu$ and known variance $\sigma^2$, varying the prior scale $\sigma_0$ and the confidence parameter $\gamma_\mu$; see Section~\ref{app:normal_normal} of the Supplementary Material for the derivation.

\begin{figure}[h]
\centering
\includegraphics[width=\textwidth]{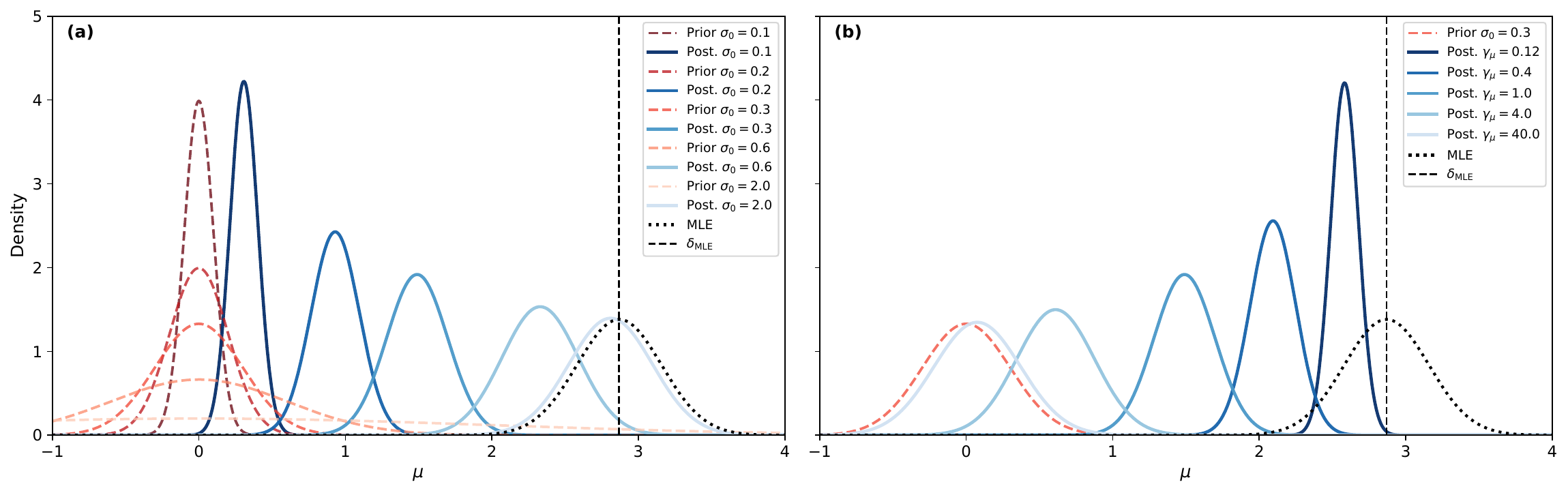}
\caption{\footnotesize{Normal-Normal illustration of prior informativeness vs confidence. (a) Prior (dashed) and posterior (solid) for $\mu$ across prior scales $\sigma_0$. (b) Prior and SoU posterior for $\mu$ as the confidence parameter $\gamma_\mu$ varies (prior fixed). Contrasting regularisation towards the prior (a) and towards the data (b).}}\label{fig:oneParameter}
\end{figure}

\section{SoU sparse global-local prior (SoU-SGL)}\label{sec:sou-sgl}

In this section, we leverage our framework to develop a general approach for inducing sparsity in statistical models. We begin by illustrating the mechanics of our approach in the canonical Normal-means setting with a global-local prior structure. We then extend the analysis to linear regression, logistic regression and  Bayesian neural networks. Implementation details and additional experiments are provided in Section~\ref{app:expe} of the Supplementary Material.

\subsection{Simple global-local shrinkage}\label{sec:simple_gl}

Assume there are $I$ distinct units (or parameters) of interest. For each unit $i$, we observe $n$ independent and identically distributed realisations. The model is specified as follows:
\begin{align*}
y_{ij} \mid \beta_i &\sim \mathcal{N}(\beta_i, \sigma^2),  \quad \text{for } j = 1, \dots, n, \\
\beta_i \mid \tau_i, \nu &\sim \mathcal{N}(0, \tau_i^2 \nu^2), \\
\tau_i^2 \sim \pi_\tau, \quad \nu^2 &\sim \pi_\nu, \quad \sigma^2 \sim \pi_\sigma,
\end{align*}
where \( \nu > 0 \) is a global shrinkage parameter and \( \tau_i > 0 \) is a local shrinkage parameter specific to the \( i \)-th coordinate.

In this setup, the posterior expectation of \( \beta_i \) given \( \mathbf{y}_i:=\{ y_{ij}\}_{j=1}^n \), \( \tau_i \), and \( \nu \) is:
\begin{equation}
\mathbb{E}[\beta_i \mid \mathbf{y}_i, \tau_i, \nu] = (1 - \kappa_i) \bar y_i + \kappa_i \cdot 0, \nonumber
\end{equation}
where $\bar y_i = \frac{1}{n}\sum_{j=1}^n y_{ij}$ is the sample mean, and the shrinkage factor $\kappa_i$ is:
\begin{equation}
 \kappa_i = \frac{1}{1 + n \tau_i^2 \nu^2 / \sigma^2}. \nonumber
\end{equation}
This expression shows that the posterior mean of \( \beta_i \) is a convex combination of the group mean \( \bar y_i \) and zero, with \(\kappa_i \in [0,1]\) governing the degree of shrinkage towards zero.  

The global-local structure has the interpretation that the global parameter \(\nu\) forces all coefficients towards zero, while the local parameters \(\tau_i\) provide the flexibility for individual coefficients to escape this global shrinkage.

As observed by \cite{carvalho2009handling}, the distribution chosen for the local parameter \( \tau_i \) plays a crucial role in determining the pattern of shrinkage. Different choices for \( \pi_\tau \) lead to different induced distributions for \( \kappa_i \), and consequently to different shrinkage profiles. Choosing \( \tau_i \sim C^+ (0,1)\) leads to a horseshoe-shaped prior on \( \kappa_i \), which encodes the prior belief that coefficients are expected to be either strong signals (\(\kappa_i \approx 0 \)) or strongly shrunk towards zero  (\(\kappa_i \approx 1 \)).

Within our framework, we extend the model with a confidence parameter $\gamma_\tau$ that controls the influence of the local prior on inference. Setting \(\gamma_\tau\) near zero leads to two critical effects that fundamentally differentiate our approach from the standard Horseshoe prior. First, it neutralises the influence of the local prior \(\pi_\tau\), since \(\pi_\tau^{\gamma_\tau} \to 1\) when $\gamma_\tau \to 0$ (pointwise wherever $0<\pi_\tau<\infty$), thereby decoupling the posterior's location from the prior's specification. Second, and more importantly, while the standard Horseshoe induces a posterior for the shrinkage parameter $\kappa_i$ that concentrates mass near the extremes (0 and 1), in practice it still spreads substantial probability across intermediate values, leading to shrunken but dense representations (see Figure~\ref{fig:Simple_HS}). In contrast, with $\gamma_\tau \approx 0$, our approach forces the variational distribution $q(\tau_i)$ to concentrate tightly around its data-driven estimate. This, in turn, causes the posterior for each local parameter \(\kappa_i\) to concentrate sharply near 1 for true noise variables, i.e., we achieve not just shrinkage, but sparsity.

We refer to this specification as the \emph{sources of uncertainty sparse global-local} (SoU-SGL) estimator: all confidence parameters are set to one except for the local-scale confidence $\gamma_\tau$, which is taken close to zero to induce sparsity.\footnote{We avoid $\gamma_\tau=0$ since it leads to singular variational updates.} 
This SoU-SGL construction is the basis for the extensions to linear regression, logistic regression, and Bayesian neural networks.

Empirically, we also find that the model exhibits a high degree of robustness with respect to the hyperparameter governing the variance of the global prior on $\nu$, suggesting that the SoU-SGL estimator achieves sparsity with minimal tuning effort.

To illustrate, we run a simulation experiment with $I=100$ parameters and $n=5$ observations per parameter, with 20 strong signals ($\beta_i=10$) and 80 nulls ($\beta_i=0$). We illustrate the posterior distributions of the local shrinkage factors $\kappa_i$ under the Horseshoe and under SoU-SGL in Figure~\ref{fig:Simple_HS}.

\begin{figure}[h]
    \centering
    \includegraphics[width=1\textwidth]{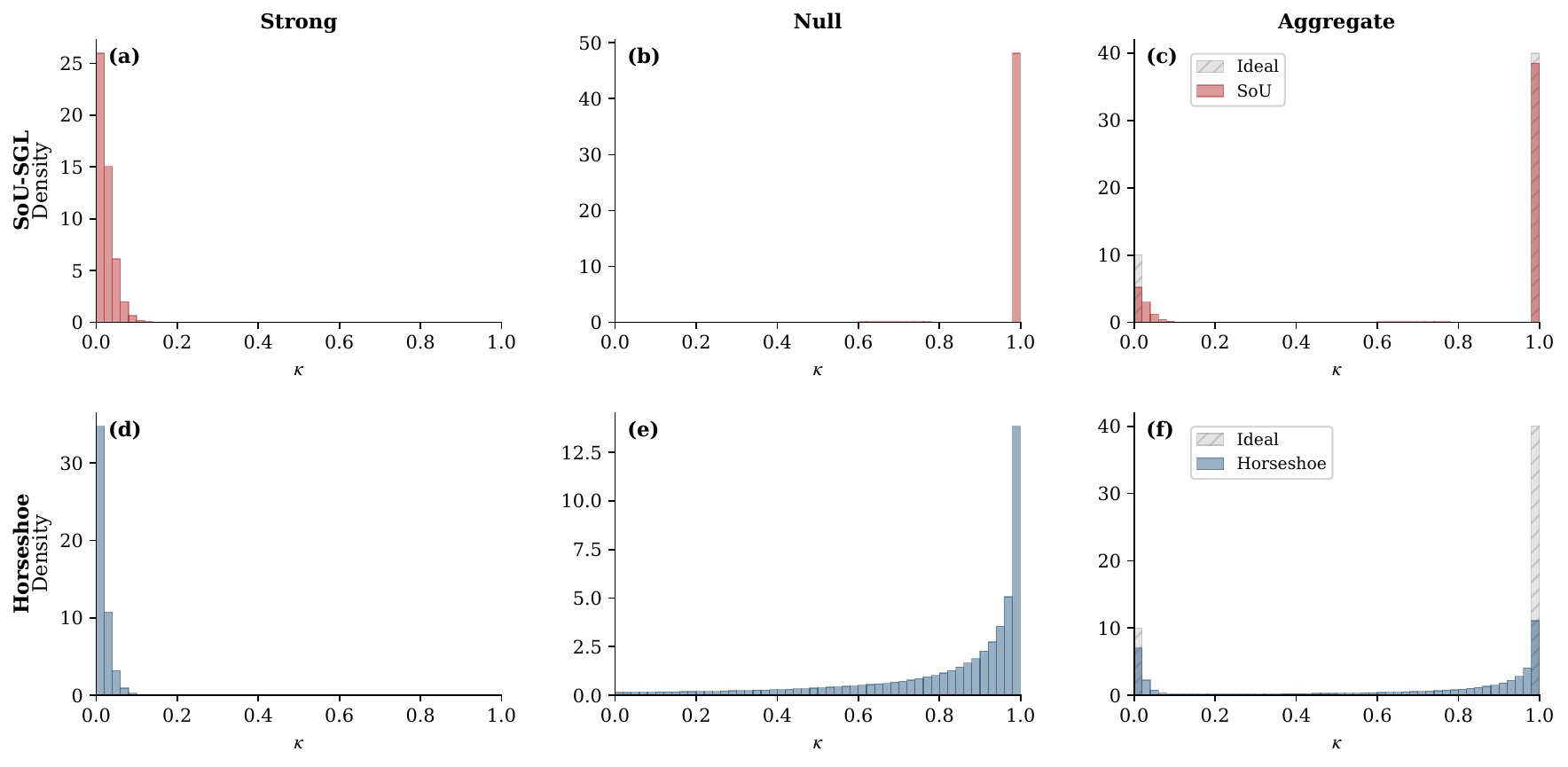} 
    \caption{\footnotesize{Shrinkage under SoU-SGL vs Horseshoe in a Normal-means simulation. Posterior densities of $\kappa_i$ for strong signal coordinates (left column), null coordinates (middle column), and aggregated across coordinates (right column). SoU-SGL concentrates sharply near $\kappa_i=1$ for nulls while preserving signals with mass near $\kappa_i=0$.}}
    \label{fig:Simple_HS}
\end{figure}

For null coordinates, the Horseshoe assigns substantial mass to intermediate values of $\kappa_i$, reflecting dense shrinkage. In contrast, SoU-SGL concentrates nearly all mass close to $\kappa_i=1$, thus achieving near-total shrinkage of noise variables. For signal coordinates, both methods place most mass near $\kappa_i=0$, preserving large effects.

Aggregating over all coordinates, the marginal distribution of $\kappa_i$ is bimodal for both methods, but SoU-SGL aligns more closely with the true 80/20 mixture of nulls and signals. Additional configurations (varying $I$, $n$, and signal proportions/magnitudes) show the same qualitative pattern and are not presented here.

\subsection{SoU-SGL regression}

We now apply our SoU-SGL model to linear regression. We reproduce and extend the regression experiments of \citet{carvalho2009handling}, benchmarking our SoU-SGL model against Lasso and Horseshoe models.\footnote{We also estimated a pure GL model ($\gamma_\tau = 1$), which performed very poorly in all specifications. For clarity of exposition, we omit these results.} For concision, we present only one experiment in the main text; the remaining experiments yield the same qualitative pattern and are reported in Section~\ref{app:linreg_setup} of the Supplementary Material.

The true parameter vector $\boldsymbol{\beta}$ is constructed by repeating a fixed block of non-zero coefficients given by $\beta_{1:10} = (1,2,3,4,5,6,7,8,9,10)$ until the desired number $s$ of non-zero coefficients is reached, and padding the remaining components with zeros. The design matrix $\mathbf{X}$ is generated with standard normal entries, and the response vector $\mathbf{y}$ is simulated as $\mathbf{y} = \mathbf{X}\boldsymbol{\beta} + \boldsymbol{\varepsilon}$, where $\boldsymbol{\varepsilon} \sim \mathcal{N}(0, \mathbf{I}_n)$. Each configuration is repeated over 100 simulations, and the reported performance values are the means over these simulation runs. The Lasso regularisation parameter was chosen by cross-validation.

\begin{table}[htbp]
\centering
\caption{Linear regression simulation results. RMSE and, for null coordinates ($\beta_j=0$), the proportion of estimated coefficients $ |\hat\beta_j|$ below fixed thresholds, averaged over repetitions, comparing Lasso, Horseshoe, and SoU-SGL across $(n,p,s)$ configurations.}
\label{tab:experiment2}
\footnotesize
\begin{tabular}{@{} rr r *{3}{S[table-format=1.2]} *{3}{S[table-format=1.2]} *{3}{S[table-format=1.2]} *{3}{S[table-format=1.2]} @{}}
\toprule
\multirow{2}{*}{$n$} & \multirow{2}{*}{$p$} & \multirow{2}{*}{$s$} & \multicolumn{3}{c}{RMSE} & \multicolumn{3}{c}{prop $<10^{-1}$} & \multicolumn{3}{c}{prop $<10^{-2}$} & \multicolumn{3}{c}{prop $<10^{-3}$} \\
\cmidrule(lr){4-6}\cmidrule(lr){7-9}\cmidrule(lr){10-12}\cmidrule(lr){13-15}
& & & Lasso & HS & SoU & Lasso & HS & SoU & Lasso & HS & SoU & Lasso & HS & SoU \\
\midrule
25 & 20 & 10 & 1.86 & 1.70 & \textbf{1.42} & 0.45 & 0.34 & \textbf{0.97} & 0.28 & 0.04 & \textbf{0.97} & 0.26 & 0.00 & \textbf{0.97} \\
\midrule
\multirow{2}{*}{55} & \multirow{2}{*}{50} & 10 & 1.42 & 1.29 & \textbf{1.18} & 0.82 & 0.81 & \textbf{0.97} & 0.62 & 0.13 & \textbf{0.97} & 0.59 & 0.01 & \textbf{0.97} \\
& & 20 & 1.87 & 1.72 & \textbf{1.33} & 0.59 & 0.55 & \textbf{0.99} & 0.35 & 0.07 & \textbf{0.99} & 0.32 & 0.01 & \textbf{0.99} \\
\midrule
\multirow{3}{*}{105} & \multirow{3}{*}{100} & 10 & 1.21 & 1.10 & \textbf{1.10} & 0.95 & \textbf{0.97} & 0.97 & 0.78 & 0.34 & \textbf{0.97} & 0.75 & 0.04 & \textbf{0.97} \\
& & 20 & 1.38 & 1.27 & \textbf{1.15} & 0.88 & 0.89 & \textbf{0.97} & 0.64 & 0.18 & \textbf{0.97} & 0.60 & 0.02 & \textbf{0.97} \\
& & 50 & 2.94 & 2.13 & \textbf{1.42} & 0.41 & 0.52 & \textbf{0.99} & 0.15 & 0.06 & \textbf{0.99} & 0.12 & 0.01 & \textbf{0.99} \\
\midrule
\multirow{3}{*}{205} & \multirow{3}{*}{200} & 10 & 1.13 & \textbf{1.05} & 1.09 & 0.99 & \textbf{1.00} & 0.97 & 0.88 & 0.66 & \textbf{0.96} & 0.85 & 0.09 & \textbf{0.96} \\
& & 50 & 1.55 & 1.41 & \textbf{1.19} & 0.89 & 0.92 & \textbf{0.98} & 0.56 & 0.18 & \textbf{0.98} & 0.51 & 0.02 & \textbf{0.98} \\
& & 100 & 4.06 & 2.34 & \textbf{1.42} & 0.38 & 0.61 & \textbf{0.99} & 0.12 & 0.07 & \textbf{0.99} & 0.09 & 0.01 & \textbf{0.99} \\
\midrule
\multirow{4}{*}{405} & \multirow{4}{*}{400} & 10 & 1.07 & \textbf{1.02} & 1.06 & \textbf{1.00} & 1.00 & 0.99 & 0.94 & 0.91 & \textbf{0.97} & 0.92 & 0.22 & \textbf{0.97} \\
& & 50 & 1.28 & 1.15 & \textbf{1.12} & 0.99 & \textbf{1.00} & 0.99 & 0.78 & 0.51 & \textbf{0.97} & 0.72 & 0.06 & \textbf{0.97} \\
& & 100 & 1.56 & 1.40 & \textbf{1.19} & 0.95 & 0.97 & \textbf{0.99} & 0.59 & 0.26 & \textbf{0.98} & 0.52 & 0.03 & \textbf{0.98} \\
& & 200 & 5.05 & 2.37 & \textbf{1.42} & 0.39 & 0.73 & \textbf{1.00} & 0.09 & 0.10 & \textbf{0.99} & 0.05 & 0.01 & \textbf{0.99} \\
\bottomrule
\end{tabular}
\end{table}

In terms of RMSE, the Horseshoe improves on Lasso across all reported specifications, and SoU-SGL further reduces RMSE in most specifications. To distinguish shrinkage from sparsity, we report (for null coordinates) the proportion of estimated coefficients with $|\hat\beta_j|$ below progressively tighter thresholds. Unlike Lasso and Horseshoe, this proportion remains high for SoU-SGL even at the tightest thresholds across all specifications, indicating robust sparsity recovery.

Figure~\ref{fig:reg_hist} complements Table~\ref{tab:experiment2} by showing the distribution of false positives. Conditional on $|\hat \beta |> 10^{-3}$, Lasso false positives are mostly near zero and become more dispersed as $s$ increases. SoU-SGL is more dichotomous: most nulls fall below the threshold, and the remaining false positives show little mass near zero and a broadly stable profile across $s$.

\begin{figure}[h]
    \centering
    \includegraphics[width=1\textwidth]{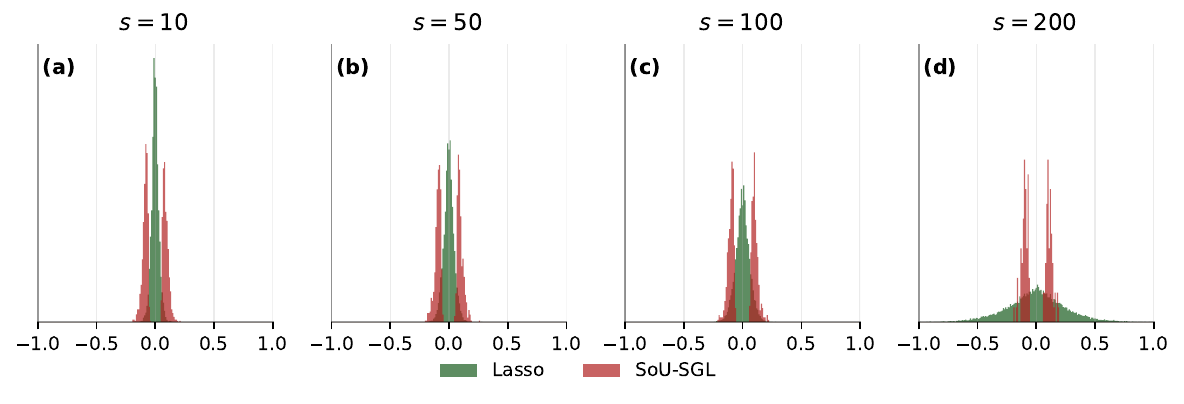} 
    \caption{\footnotesize{False-positive profiles in linear regression. Normalised histograms (densities) of $\hat \beta_j$ for null coordinates ($\beta_j =0$), conditional on $|\hat \beta_j |> 10^{-3}$, in the $(n,p)=(405,400)$ design. Panels correspond to increasing numbers of signals $s\in\{10,50,100,200\}$.}}
    \label{fig:reg_hist}
\end{figure}

\FloatBarrier
\subsection{SoU-SGL classification}\label{sec:logistic}

We next apply our framework to Bayesian logistic regression, benchmarking our SoU-SGL model against standard Logistic Regression (LR), Logistic Regression with an $\ell_1$ penalty (LR-L1) and Bayesian logistic regression with a global-local prior structure (GL).

We evaluate the proposed model on the MNIST dataset, a standard benchmark for image classification consisting of 70,000 greyscale images of handwritten digits (0-9), each of size $28 \times 28$ pixels. We use the standard 60,000-train/10,000-test split.

For each method, we train a one-vs-rest classifier for each digit, that is, a separate model is estimated for distinguishing digit $k$ from all others, for $k=0,\dots,9$. For LR-L1, the penalty parameter is chosen by cross-validation, separately for each one-vs-rest task, allowing task-specific regularisation. At prediction time, we assign each observation to the class corresponding to the model that yields the highest predicted probability.

Table~\ref{tab:resultados_accuracy} shows that all methods achieve similar accuracy, but sparsity differs sharply as $\epsilon$ tightens: for small $\epsilon$, LR and GL retain few coefficients below $\epsilon$, LR-L1 retains a moderate proportion, whereas SoU-SGL retains a large proportion.

\begin{table}[htbp]
\centering
\caption{MNIST logistic regression: accuracy and sparsity for the four considered models. Test accuracy (second column) and proportion of coefficients with $|\hat \beta_j|<\epsilon$ for multiple $\epsilon$ levels (last five columns) across Logistic Regression (LR), Logistic Regression with an $\ell_1$ penalty (LR-L1), Bayesian logistic regression with a global-local prior structure (GL), and our sources of uncertainty sparse global-local model (SoU-SGL).}
\label{tab:resultados_accuracy}
\footnotesize
\begin{tabular}{@{} l S[table-format=1.3] S[table-format=1.3] S[table-format=1.3] S[table-format=1.3] S[table-format=1.3] S[table-format=1.3] @{}}
\toprule
\multirow{2}{*}{Model} & \multicolumn{1}{c}{\multirow{2}{*}{Accuracy}} & \multicolumn{5}{c}{Sparsity: prop $|\hat \beta_j|<\epsilon$} \\
\cmidrule(l){3-7}
 &  & \multicolumn{1}{c}{$\epsilon = 10^{-1}$} & \multicolumn{1}{c}{$\epsilon = 10^{-2}$} & \multicolumn{1}{c}{$\epsilon = 10^{-3}$} & \multicolumn{1}{c}{$\epsilon = 10^{-4}$} & \multicolumn{1}{c}{$\epsilon = 10^{-5}$} \\
\midrule
LR & 0.917 & 0.57 & 0.17 & 0.10 & 0.09 & 0.09 \\
LR-L1 & 0.918 & 0.82 & 0.46 & 0.38 & 0.37 & 0.36 \\
GL & \textbf{0.921} & \textbf{0.87} & 0.28 & 0.10 & 0.09 & 0.09 \\
SoU-SGL (ours) & 0.919 & 0.85 & \textbf{0.82} & \textbf{0.82} & \textbf{0.82} & \textbf{0.82} \\
\bottomrule
\end{tabular}
\end{table}

Figure~\ref{fig:MnistCoefficients} visualises these differences. LR yields dense coefficient maps, while LR-L1 and GL reduce many coefficients but retain a diffuse background of small non-zero values. In contrast, SoU-SGL produces high-contrast maps: most pixel coefficients are driven towards virtually zero, and the remaining ones are visually higher in magnitude under the common colour scale.

\begin{figure}[h]
    \centering
    \includegraphics[width=1\textwidth]{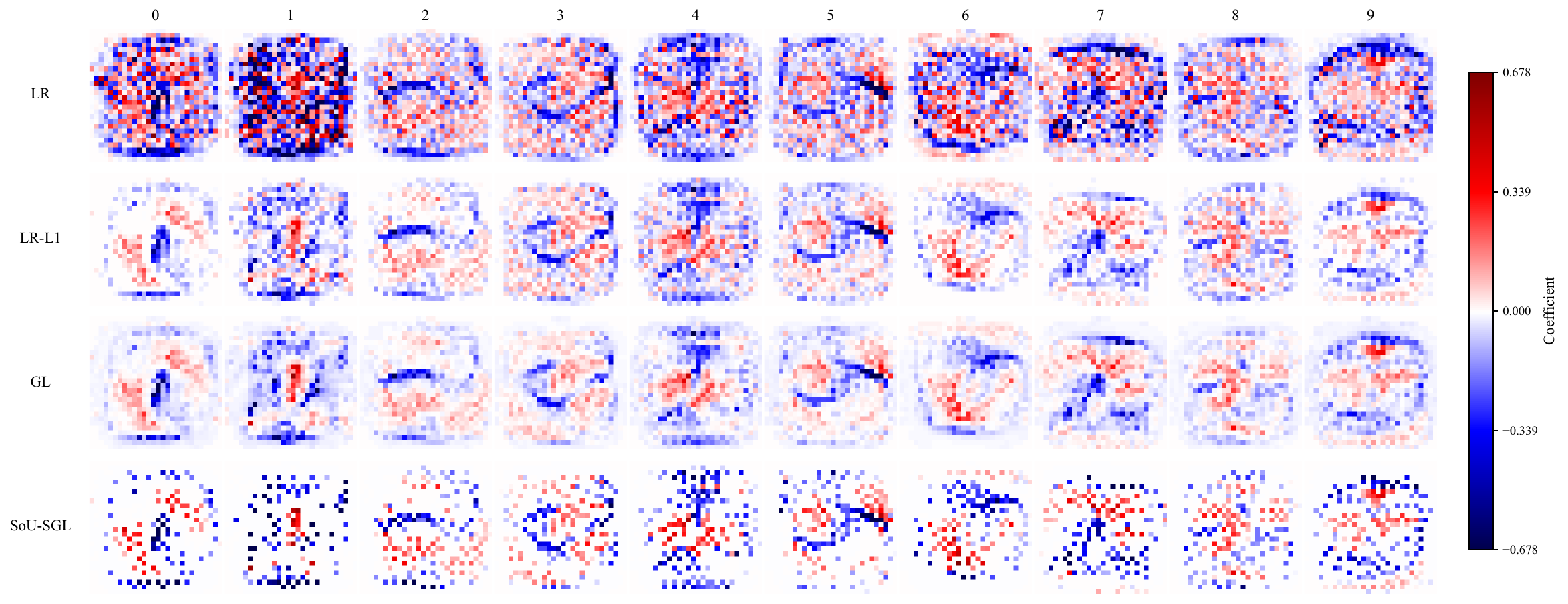} 
    \caption{\footnotesize MNIST logistic regression coefficient maps. Heatmaps of fitted pixel coefficients for digits 
$0$ to $9$ under LR, LR-L1, GL, and SoU-SGL, shown on a common colour scale.}
    \label{fig:MnistCoefficients}
\end{figure}

To contrast Bayesian shrinkage with the SoU mechanism, we use GL as a reference and induce sparsity either by tightening the global-scale prior, or by decreasing the local confidence parameter $\gamma_\tau$. Decreasing $\gamma_\tau$ towards the SoU-SGL regime ($\gamma_\tau \approx 0$) gradually induces sparsity with little impact on accuracy, whereas tightening the global-scale prior to reach the same sparsity level costs roughly $8.7$ percentage points in accuracy (see Section~\ref{app:logreg_setup} of the Supplementary Material).

\subsection{SoU-SGL Bayesian neural networks}\label{sec:bnn}

We next apply our SoU-SGL framework to Bayesian neural networks (BNNs), where it induces self-pruning during training. We consider the horseshoe BNN of \citet{ghosh2018structured,ghosh2019model} and follow their implementation under a fully factorised variational approximation. In this construction, the global-local prior is placed on hidden units rather than on individual scalar weights. Specifically, unit $k$ in layer $\ell$ has a local scale $\tau_{k\ell}$ and shares a layer-specific global scale $\nu_\ell$, so that the product $\tau_{k\ell}\nu_\ell$ controls the magnitude of its incoming weight vector. Near-zero scales therefore correspond to pruned units. Our SoU-SGL model differs from the horseshoe BNN only by introducing a small confidence parameter on the local component ($\gamma_\tau$). We compare against a standard BNN and the corresponding horseshoe BNN (HS).\\

\noindent\textbf{Regression problem.} 
We first consider a toy regression problem modelled by a two-layer network under a Gaussian observation model with unknown noise precision $\phi$.
In this setting, aleatoric uncertainty is captured by the posterior over $\phi$, while epistemic uncertainty is captured by the posterior over the network weights. The data-generating process is $y = \sin(X) + \epsilon$, with $\epsilon \sim \mathcal{N}(0, 0.2^2)$.

Figure~\ref{fig:SoU_BNN_toy1D} shows that both BNN and HS struggle to recover the underlying signal in smaller-sample regimes or as network width increases. In contrast, SoU-SGL consistently recovers the true functional structure and produces uncertainty estimates better aligned with the variability of the data-generating process. Quantitative summaries of the uncertainty decomposition are reported in Section~\ref{app:BNN} of the Supplementary Material.

\begin{figure}[h]
    \centering
    \includegraphics[width=1.\textwidth]{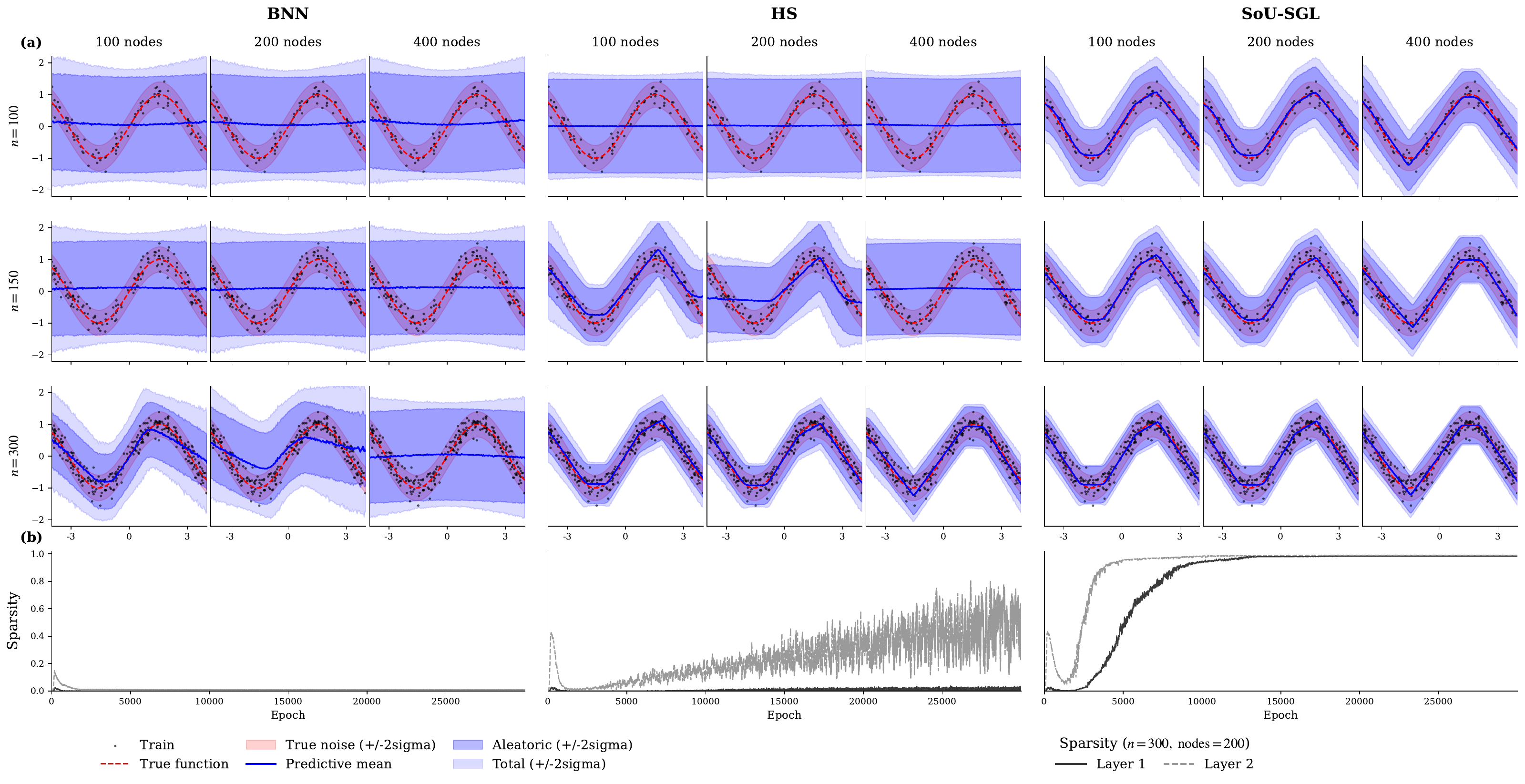}
    \caption{\footnotesize Toy regression with Bayesian neural networks: prediction and self-pruning. Top: predictive mean and uncertainty bands across sample sizes and network widths (BNN, Horseshoe, SoU-SGL). Bottom: layer-wise sparsity over training epochs for the configuration $n=300$ and 200 hidden units per layer, computed from effective weights using the threshold $|w|<10^{-5}$.}
    \label{fig:SoU_BNN_toy1D}
\end{figure}

Additionally, unlike BNN and HS, SoU-SGL consistently produces sparse neural networks across all configurations. Here sparsity is reported in terms of effective weights for comparability across models; in SoU-SGL, it is accompanied by pruning at the level of hidden units. Consistent with the simplicity of the data-generating process, the fitted SoU-SGL networks reduce to highly compact effective architectures across all specifications considered, with three active nodes in the first layer and typically two in the second.\\

\FloatBarrier
\noindent\textbf{Classification problem.} 
We next turn to MNIST for multi-class classification, using two-layer networks with 200 hidden units per layer. Results for other widths are qualitatively similar and not presented.

The top panel of Figure~\ref{fig:accuracy_sparsity_training_dynamics} shows that HS and SoU-SGL follow nearly identical accuracy trajectories, both consistently outperforming BNN. The bottom panels show that, for HS and BNN, sparsity stabilises once accuracy plateaus. In contrast, SoU-SGL exhibits a similar initial sparsification pattern to HS but continues to increase sparsity in both layers thereafter, while maintaining the same predictive performance.

\begin{figure}[h]
    \centering
    \includegraphics[width=1.\textwidth]{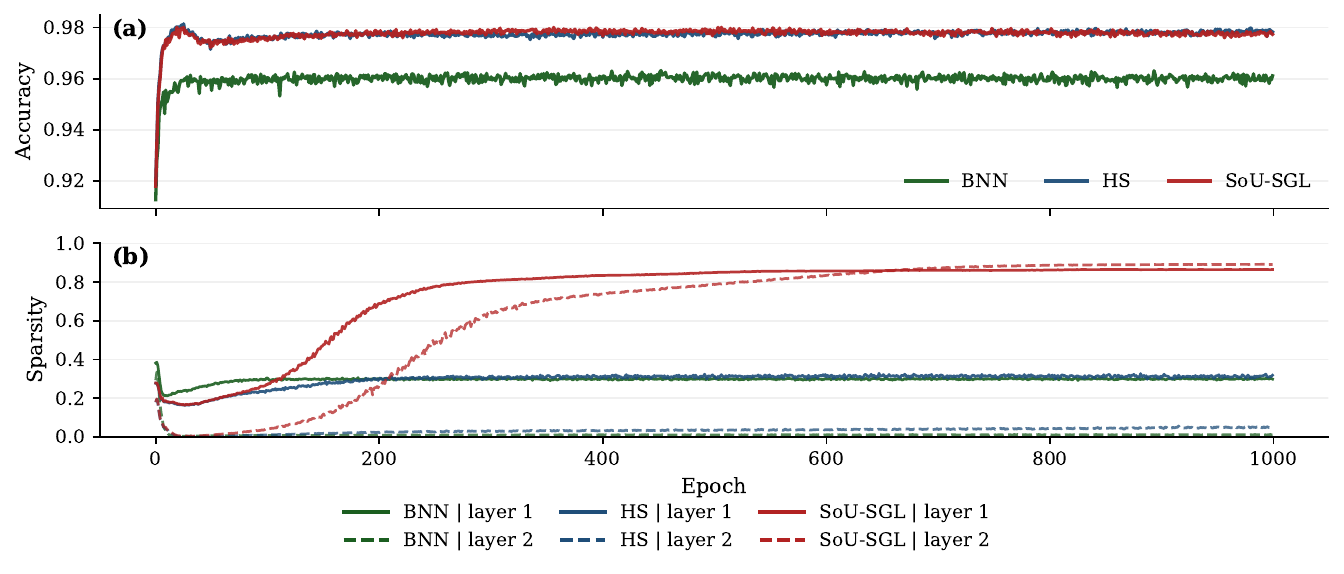}
    \caption{\footnotesize MNIST BNN training dynamics. Test accuracy (top) and layer-wise sparsity (bottom) over epochs for BNN, Horseshoe (HS), and SoU-SGL, where sparsity is computed from effective weights using the threshold $|w|<10^{-5}$.}
    \label{fig:accuracy_sparsity_training_dynamics}
\end{figure}

\FloatBarrier
\noindent\textbf{Varying architecture.} To further assess the architectural effect of this pruning mechanism, we varied the initial network width and measured the number of hidden units that remained active at the end of training under SoU-SGL. Table~\ref{tab:mnist_table_b} shows that, although the initial architectures range from 50 to 1600 hidden units per layer, the fitted models retain only a small number of active nodes, with essentially unchanged test accuracy. The resulting numbers of active units are also similar across initial widths, suggesting convergence toward a common compact architecture.

\begin{table}[htbp]
\centering
\caption{MNIST BNNs under SoU-SGL: active hidden nodes and final test accuracy across initial network widths. Reported values are the numbers of active hidden units in each layer at the end of training and the corresponding final test accuracy. A hidden unit is counted as active when $\mathbb{E}_q[\tau_{k\ell}\nu_\ell] \geq 10^{-3}$.}
\label{tab:mnist_table_b}
\footnotesize
\begin{tabular}{@{} l r r r r r r r @{}}
\toprule
Initial width & 50 & 100 & 200 & 400 & 800 & 1200 & 1600 \\
\midrule
Layer 1 active & 26 & 26 & 32 & 30 & 36 & 33 & 31 \\
Layer 2 active & 29 & 32 & 24 & 20 & 19 & 15 & 15 \\
Accuracy & 0.978 & 0.977 & 0.977 & 0.977 & 0.979 & 0.976 & 0.977 \\
\bottomrule
\end{tabular}
\end{table}

In summary, SoU-SGL induces self-pruning in BNNs, progressively removing redundant weights and hidden units, and thereby transitions from over-parameterised initial architectures to leaner data-adapted networks without sacrificing predictive performance.

\FloatBarrier
\section{Concluding remarks}\label{sec:discussion}

This paper introduced a general framework that extends Bayesian inference by explicitly incorporating the researcher's confidence in each source of uncertainty. This perspective enables a new form of regularisation towards the data, granting finer control over the location and intensity of posterior concentration and allowing the researcher to determine how strongly each prior component shapes the final inference. Virtually any Bayesian model can be expressed within this framework, which in turn enables the construction of new estimators by adjusting the confidence assigned to specific model components. We illustrated this flexibility by developing the SoU-SGL estimator.

The SoU-SGL estimator provides a general mechanism for inducing sparsity. By adopting a global-local prior structure and assigning near-zero confidence to local parameters, the method induces strong sparsity, driving many coefficients to negligible values rather than merely shrinking them. The procedure is flexible and can be applied in any context where sparsity is desirable. In Bayesian neural networks, the approach yields automatic self-pruning, allowing the model to simplify itself while preserving predictive performance. 

These ideas open promising avenues for applying confidence-based modelling across a wide range of statistical and machine learning problems.

\bibliographystyle{abbrvnat}
\bibliography{references}

@article{epstein2003recursive,
  title={Recursive multiple-priors},
  author={Epstein, Larry G and Schneider, Martin},
  journal={Journal of Economic Theory},
  volume={113},
  number={1},
  pages={1--31},
  year={2003},
  publisher={Elsevier}
}

@article{klibanoff2009recursive,
  title={Recursive smooth ambiguity preferences},
  author={Klibanoff, Peter and Marinacci, Massimo and Mukerji, Sujoy},
  journal={Journal of Economic Theory},
  volume={144},
  number={3},
  pages={930--976},
  year={2009},
  publisher={Elsevier}
}

@inproceedings{ranganath2014black,
  title={Black box variational inference},
  author={Ranganath, Rajesh and Gerrish, Sean and Blei, David},
  booktitle={Artificial Intelligence and Statistics},
  year={2014},
}

@inproceedings{jaakkola1997variational,
  title={A variational approach to {Bayesian} logistic regression models and their extensions},
  author={Jaakkola, Tommi S and Jordan, Michael I},
  booktitle={Sixth International Workshop on Artificial Intelligence and Statistics},
  pages={283--294},
  year={1997},
  organization={PMLR}
}

@article{maccheroni2006ambiguity,
  title={Ambiguity aversion, robustness, and the variational representation of preferences},
  author={Maccheroni, Fabio and Marinacci, Massimo and Rustichini, Aldo},
  journal={Econometrica},
  volume={74},
  number={6},
  pages={1447--1498},
  year={2006},
  publisher={Wiley Online Library}
}

@article{diaconis1979conjugate,
  title={Conjugate priors for exponential families},
  author={Diaconis, Persi and Ylvisaker, Donald},
  journal={The Annals of statistics},
  pages={269--281},
  year={1979},
  publisher={JSTOR}
}

@article{miller2019robust,
  title={Robust {Bayesian} inference via coarsening},
  author={Miller, Jeffrey W and Dunson, David B},
  journal={Journal of the American Statistical Association},
  year={2019},
  publisher={Taylor \& Francis}
}

@article{holmes2017assigning,
  title={{Assigning a value to a power likelihood in a general {{Bayesian}} model}},
  author={Holmes, Chris C and Walker, Stephen G},
  journal={Biometrika},
  volume={104},
  number={2},
  pages={497--503},
  year={2017},
  publisher={Oxford University Press}
}

@article{grunwald2017inconsistency,
  title={Inconsistency of {{Bayesian}} inference for misspecified linear models, and a proposal for repairing it},
  author={Gr{\"u}nwald, Peter and Van Ommen, Thijs},
  journal = {{Bayesian} Analysis},
  year={2017},
  volume = {12}, 
  number = {4},
  pages = {1069–1103}
}

@inproceedings{grunwald2012safe,
  title={The safe {B}ayesian: learning the learning rate via the mixability gap},
  author={Gr{\"u}nwald, Peter},
  booktitle={International Conference on Algorithmic Learning Theory},
  year={2012},
}

@incollection{gilboa2016ambiguity,
  title={Ambiguity and the {{Bayesian}} paradigm},
  author={Gilboa, Itzhak and Marinacci, Massimo},
  booktitle={Readings in Formal Epistemology: Sourcebook},
  pages={385--439},
  year={2016},
  publisher={Springer}
}

@article{ghosh2019model,
  title={Model selection in {{Bayesian}} neural networks via horseshoe priors},
  author={Ghosh, Soumya and Yao, Jiayu and Doshi-Velez, Finale},
  journal={Journal of Machine Learning Research},
  volume={20},
  number={182},
  pages={1--46},
  year={2019}
}

@inproceedings{ghosh2018structured,
  title={Structured variational learning of {Bayesian} neural networks with horseshoe priors},
  author={Ghosh, Soumya and Yao, Jiayu and Doshi-Velez, Finale},
  booktitle={International Conference on Machine Learning},
  year={2018},
}

@article{spiegelhalter2002bayesian,
  title={{Bayesian} measures of model complexity and fit},
  author={Spiegelhalter, David J and Best, Nicky G and Carlin, Bradley P and Van Der Linde, Angelika},
  journal={Journal of the Royal Statistical Society: Series B (Statistical Methodology)},
  volume={64},
  number={4},
  pages={583--639},
  year={2002},
  publisher={Wiley Online Library}
}

@article{khan2023bayesian,
  title={The {Bayesian} learning rule},
  author={Khan, Mohammad Emtiyaz and Rue, H{\aa}vard},
  journal={Journal of Machine Learning Research},
  volume={24},
  number={281},
  pages={1--46},
  year={2023}
}

@inproceedings{carvalho2009handling,
  title={Handling sparsity via the horseshoe},
  author={Carvalho, Carlos M and Polson, Nicholas G and Scott, James G},
  booktitle={Artificial Intelligence and Statistics},
  pages={73--80},
  year={2009},
  organization={PMLR}
}

@article{park2008bayesian,
  title={The {Bayesian} {Lasso}},
  author={Park, Trevor and Casella, George},
  journal={Journal of the American Statistical Association},
  volume={103},
  number={482},
  pages={681--686},
  year={2008},
  publisher={Taylor and Francis Ltd.}
}

@book{savage-54,
    Author = {Savage, L. J.},
    Title = {The Foundations of Statistics},
    Publisher = {John Wiley and Sons},
    Address = {Dover, New York},
    Year = {1954}}

@article{keynes1921treatise,
  title={A Treatise on Probability},
  author={Keynes, John Maynard},
  year={1921},
  publisher={Courier Corporation}  
}

@article{maccheroni2006dynamic,
  title={Dynamic variational preferences},
  author={Maccheroni, Fabio and Marinacci, Massimo and Rustichini, Aldo},
  journal={Journal of Economic Theory},
  volume={128},
  number={1},
  pages={4--44},
  year={2006},
  publisher={Elsevier}
}

@article{fox2002ambiguity,
  title={Ambiguity aversion, comparative ignorance, and decision context},
  author={Fox, Craig R and Weber, Martin},
  journal={Organizational Behavior and Human Decision Processes},
  volume={88},
  number={1},
  pages={476--498},
  year={2002},
  publisher={Elsevier}
}

@article{kilka2001determines,
  title={What determines the shape of the probability weighting function under uncertainty?},
  author={Kilka, Michael and Weber, Martin},
  journal={Management Science},
  volume={47},
  number={12},
  pages={1712--1726},
  year={2001},
  publisher={INFORMS}
}

@article{de2010competence,
  title={Competence effects for choices involving gains and losses},
  author={de Lara Resende, Jos{\'e} Guilherme and Wu, George},
  journal={Journal of Risk and Uncertainty},
  volume={40},
  number={2},
  pages={109--132},
  year={2010},
  publisher={Springer}
}

@article{tversky1995weighing,
  title={Weighing risk and uncertainty.},
  author={Tversky, Amos and Fox, Craig R},
  journal={Psychological Review},
  volume={102},
  number={2},
  pages={269},
  year={1995},
  publisher={American Psychological Association}
}

@article{keppe1995judged,
  title={Judged knowledge and ambiguity aversion},
  author={Keppe, Hans-J{\"u}rgen and Weber, Martin},
  journal={Theory and Decision},
  volume={39},
  number={1},
  pages={51--77},
  year={1995},
  publisher={Springer}
}

@article{heath1991preference,
  title={Preference and belief: Ambiguity and competence in choice under uncertainty},
  author={Heath, Chip and Tversky, Amos},
  journal={Journal of Risk and Uncertainty},
  volume={4},
  number={1},
  pages={5--28},
  year={1991},
  publisher={Springer}
}

@article{abdellaoui2011rich,
  title={The rich domain of uncertainty: Source functions and their experimental implementation},
  author={Abdellaoui, Mohammed and Baillon, Aur{\'e}lien and Placido, Laetitia and Wakker, Peter P},
  journal={American Economic Review},
  volume={101},
  number={2},
  pages={695--723},
  year={2011}
}

@article{hansen2001robust,
  title={Robust control and model uncertainty},
  author={Hansen, Lars Peter and Sargent, Thomas J},
  journal={American Economic Review},
  volume={91},
  number={2},
  pages={60--66},
  year={2001},
  publisher={American Economic Association}
}

@article{strzalecki2011axiomatic,
  title={Axiomatic foundations of multiplier preferences},
  author={Strzalecki, Tomasz},
  journal={Econometrica},
  volume={79},
  number={1},
  pages={47--73},
  year={2011},
  publisher={Wiley Online Library}
}

@book{knight1921risk,
  title={Risk, uncertainty and profit},
  author={Knight, Frank Hyneman},
  volume={31},
  year={1921},
  publisher={Houghton Mifflin}
}

@article{knoblauch2022optimization,
  title={An optimization-centric view on {Bayes}' rule: Reviewing and generalizing variational inference},
  author={Knoblauch, Jeremias and Jewson, Jack and Damoulas, Theodoros},
  journal={Journal of Machine Learning Research},
  volume={23},
  number={132},
  pages={1--109},
  year={2022}
}

@article{bissiri2016general,
  title={A general framework for updating belief distributions},
  author={Bissiri, Pier Giovanni and Holmes, Chris C and Walker, Stephen G},
  journal={Journal of the Royal Statistical Society Series B: Statistical Methodology},
  volume={78},
  number={5},
  pages={1103--1130},
  year={2016},
  publisher={Oxford University Press}
}






\clearpage
\appendix

\setcounter{page}{1}
\renewcommand{\thepage}{S\arabic{page}}

\providecommand{\theHequation}{\theequation}
\renewcommand{\theHequation}{S.\arabic{equation}}

\bookmarksetup{startatroot}
\phantomsection
\pdfbookmark[0]{Supplementary Material}{supplementary-material}

\begin{center}
    {\Large Supplementary Material}
\end{center}

This supplementary material contains the experimental details, estimation procedures, derivations, and proofs supporting the results in the main paper. Section~\ref{app:expe} describes the model specifications, priors, variational families, hyperparameter settings, and additional simulation results for the Normal-means, linear regression, logistic regression, and Bayesian neural network experiments. Section~\ref{app:estimation} summarises the optimisation procedures used in the numerical studies. Section~\ref{app:updates} provides the derivations of the SoU objective, model-specific optimality conditions and coordinate updates, while the proofs of the theoretical results stated in the main text are reported in Section~\ref{app:proofs}.

\section{Experimental setup: models, priors, and additional results}\label{app:expe}

\subsection{Normal means}\label{app:nmeans_setup}
\subsubsection{Model, priors, and variational family}

We recall the Normal-means model with a global-local prior from Section~\ref{sec:simple_gl} (main paper). For each coordinate $i=1,\dots,I$:
\begin{align*}
y_{ij} \mid \beta_i &\sim \mathcal{N}(\beta_i, \sigma^2),  \quad \text{for } j = 1, \dots, n, \\
\beta_i \mid \tau_i, \nu &\sim \mathcal{N}(0, \tau_i^2 \nu^2), \\
\tau_i^2 \stackrel{\text{i.i.d.}}{\sim} \mathrm{Exp}(\lambda_\tau),\quad
\nu^2 &\sim \mathrm{Exp}(\lambda_\nu),\quad
\sigma^2 \sim \mathrm{Inv\text{-}Gamma}(a_{\pi,\sigma}, b_{\pi,\sigma}).
\end{align*}
We use a mean-field variational family of the form
\[
q(\boldsymbol{\beta},\tau^2,\nu^2,\sigma^2)
=
\prod_{i=1}^I q(\beta_i)\,q(\tau_i^2)\; q(\nu^2)\,q(\sigma^2),
\]
with
\begin{align*}
q(\beta_i)&=\mathcal{N}(m_i,s_i^2),\qquad
q(\sigma^2)=\mathrm{Inv\text{-}Gamma}(a_\sigma,b_\sigma),\\
q(\tau_i^2)&=\mathrm{Inv\text{-}Gamma}(a_{\tau_i},b_{\tau_i}),\qquad
q(\nu^2)=\mathrm{Inv\text{-}Gamma}(a_\nu,b_\nu).
\end{align*}
The model is estimated using Algorithm~\ref{alg:fixed_point}; the corresponding optimality
conditions are given in Section~\ref{app:sousgl_normalmeans}.

We use
$\lambda_\tau=\lambda_\nu=10$,
$a_{\pi,\sigma}=3$,
$b_{\pi,\sigma}=2$,
$\gamma_\beta=\gamma_\sigma=\gamma_\nu=1$,
and $\gamma_\tau=10^{-9}$.

For the horseshoe baseline, we keep the same likelihood and prior on $\sigma^2$, and replace the exponential
priors by half-Cauchy distributions on the scales:
\begin{align*}
\tau_i \stackrel{\text{i.i.d.}}{\sim} \mathrm{C}^+(0,1),\qquad
\nu \sim \mathrm{C}^+(0,1).
\end{align*}

\subsection{Linear regression}\label{app:linreg_setup}
\subsubsection{Model, priors, and variational family}

We consider the Gaussian linear model with a global-local prior,
\begin{align*}
y \mid \alpha, \mathbf{X}, \boldsymbol{\beta}, \sigma^2 &\sim \mathcal{N}(\alpha \mathbf{1}_n + \mathbf{X}\boldsymbol{\beta}, \sigma^2 \mathbf{I}_n), \\
\beta_j \mid \tau_j^2, \nu^2, \sigma^2 &\sim \mathcal{N}\!\left(0, \tau_j^2 \nu^2 \sigma^2\right), \qquad j=1,\dots,p,\\
\alpha \sim \mathcal{N}(0, \sigma_\alpha^2), \quad
\tau_j^2 \stackrel{\text{i.i.d.}}{\sim} \mathrm{Exp}(\lambda_\tau), &\quad
\nu^2 \sim \mathrm{Exp}(\lambda_\nu), \quad
\sigma^2 \sim \mathrm{Inv\text{-}Gamma}(a_{\pi,\sigma}, b_{\pi,\sigma}).
\end{align*}
Here, $\nu$ is a global shrinkage parameter and $\{\tau_j\}_{j=1}^p$ are local shrinkage parameters.
Setting $\nu \equiv 1$ recovers the Bayesian Lasso of \citet{park2008bayesian}.

We use a mean-field variational family of the form
\[
q(\alpha,\boldsymbol{\beta},\boldsymbol{\tau}^2,\nu^2,\sigma^2)
=
q(\alpha)q(\boldsymbol{\beta})
\prod_{j=1}^p q(\tau_j^2)q(\nu^2)q(\sigma^2),
\]
with
\begin{align*}
q(\alpha)&=N(m_\alpha,s_\alpha^2), \quad q(\boldsymbol{\beta})=\mathcal{N}(m,S),\quad
q(\sigma^2)=\mathrm{Inv\text{-}Gamma}(a_\sigma,b_\sigma),\quad \\
q(\tau_j^2)&=\mathrm{Inv\text{-}Gamma}(a_{\tau_j},b_{\tau_j}), \quad q(\nu^2)=\mathrm{Inv\text{-}Gamma}(a_\nu,b_\nu).\qquad
\end{align*}

The model is estimated using Algorithm~\ref{alg:fixed_point}; the corresponding optimality
conditions are given in Section~\ref{app:sousgl_linreg}.
In the simulations, $\mathbf{X}$ has i.i.d.\ standard normal entries and the data are generated without an intercept; accordingly, we fit all models with $\alpha \equiv 0$.

We use
$\lambda_\tau=\lambda_\nu=10$,
$a_{\pi,\sigma}=3$,
$b_{\pi,\sigma}=2$,
$\gamma_\beta=\gamma_\sigma=\gamma_\nu=1$,
and $\gamma_\tau=10^{-9}$.

For the horseshoe prior, we keep the same likelihood and prior on $\sigma^2$, and replace the exponential
priors by half-Cauchy distributions on the scales:
\begin{align*}
\tau_j \stackrel{\text{i.i.d.}}{\sim} \mathrm{C}^+(0,1),\qquad
\nu \sim \mathrm{C}^+(0,1).
\end{align*}

\subsubsection{Additional results}

This section reports the remaining linear regression experiments of \citet{carvalho2009handling} omitted from the main text. We use the same design for $\mathbf{X}$, repeat each configuration over 100 simulations, and report averages over runs. In Experiment~2, we follow the main-text setup, but use the repeated non-zero block
$\beta_{1:10}=(2,2,2,2,2,2,2,2,5,20)$.

\begin{table}[htbp]
\centering
\caption{Linear regression simulation results (Experiment 2). RMSE and, for null coordinates ($\beta_j=0$), the proportion of estimated coefficients $|\hat\beta_j|$ below fixed thresholds, averaged over repetitions, comparing Lasso, Horseshoe, and SoU-SGL across $(n,p,s)$ configurations.}
\label{tab:experiment1}
\footnotesize
\begin{tabular}{@{} rr r *{3}{S[table-format=1.2]} *{3}{S[table-format=1.2]} *{3}{S[table-format=1.2]} *{3}{S[table-format=1.2]} @{}}
\toprule
\multirow{2}{*}{$n$} & \multirow{2}{*}{$p$} & \multirow{2}{*}{$s$} & \multicolumn{3}{c}{RMSE} & \multicolumn{3}{c}{prop $<10^{-1}$} & \multicolumn{3}{c}{prop $<10^{-2}$} & \multicolumn{3}{c}{prop $<10^{-3}$} \\
\cmidrule(lr){4-6}\cmidrule(lr){7-9}\cmidrule(lr){10-12}\cmidrule(lr){13-15}
& & & Lasso & HS & SoU & Lasso & HS & SoU & Lasso & HS & SoU & Lasso & HS & SoU \\
\midrule
24 & 20 & 10 & 1.94 & \textbf{1.80} & 2.18 & 0.47 & 0.38 & \textbf{0.92} & 0.31 & 0.04 & \textbf{0.92} & 0.28 & 0.01 & \textbf{0.92} \\
60 & 50 & 10 & 1.34 & 1.25 & \textbf{1.15} & 0.83 & 0.82 & \textbf{0.97} & 0.60 & 0.13 & \textbf{0.97} & 0.58 & 0.01 & \textbf{0.97} \\
& & 20 & 1.67 & 1.57 & \textbf{1.26} & 0.63 & 0.58 & \textbf{0.98} & 0.38 & 0.07 & \textbf{0.98} & 0.34 & 0.01 & \textbf{0.98} \\
120 & 100 & 10 & 1.19 & 1.11 & \textbf{1.11} & 0.95 & \textbf{0.97} & 0.95 & 0.76 & 0.35 & \textbf{0.95} & 0.73 & 0.03 & \textbf{0.95} \\
& & 20 & 1.33 & 1.25 & \textbf{1.16} & 0.90 & 0.91 & \textbf{0.96} & 0.64 & 0.19 & \textbf{0.96} & 0.60 & 0.02 & \textbf{0.96} \\
& & 50 & 1.91 & 1.76 & \textbf{1.35} & 0.59 & 0.63 & \textbf{0.98} & 0.24 & 0.08 & \textbf{0.98} & 0.20 & 0.01 & \textbf{0.98} \\
240 & 200 & 10 & 1.10 & \textbf{1.05} & 1.09 & 1.00 & \textbf{1.00} & 0.97 & 0.88 & 0.69 & \textbf{0.95} & 0.85 & 0.10 & \textbf{0.95} \\
& & 50 & 1.40 & 1.32 & \textbf{1.18} & 0.93 & 0.94 & \textbf{0.97} & 0.60 & 0.22 & \textbf{0.96} & 0.54 & 0.02 & \textbf{0.96} \\
& & 100 & 1.91 & 1.79 & \textbf{1.33} & 0.72 & 0.76 & \textbf{0.98} & 0.27 & 0.10 & \textbf{0.98} & 0.21 & 0.01 & \textbf{0.98} \\
480 & 400 & 10 & 1.07 & \textbf{1.03} & 1.07 & \textbf{1.00} & 1.00 & 0.99 & 0.95 & 0.92 & \textbf{0.95} & 0.92 & 0.24 & \textbf{0.95} \\
& & 50 & 1.21 & 1.13 & \textbf{1.12} & 1.00 & \textbf{1.00} & 0.99 & 0.79 & 0.55 & \textbf{0.96} & 0.72 & 0.07 & \textbf{0.96} \\
& & 100 & 1.39 & 1.31 & \textbf{1.19} & 0.98 & 0.98 & \textbf{0.99} & 0.62 & 0.30 & \textbf{0.96} & 0.54 & 0.03 & \textbf{0.96} \\
& & 200 & 1.89 & 1.80 & \textbf{1.35} & 0.85 & 0.87 & \textbf{0.99} & 0.29 & 0.14 & \textbf{0.98} & 0.20 & 0.01 & \textbf{0.98} \\
\bottomrule
\end{tabular}
\end{table}

The results are consistent with the main text: SoU-SGL achieves competitive RMSE while yielding 
markedly stronger sparsity recovery among null coordinates across all specifications.

In Experiment~3, we fix $n=55$ and $p=50$. We generate coefficients as $\beta_j=Z_jT_j$, where $T_j\sim t_2$ and $Z_j\sim\mathrm{Bernoulli}(\rho)$ independently, so that $\rho$ controls the expected signal proportion. We consider $\rho\in\{0.1,0.2,0.3,0.4,0.5\}$.

\begin{table}[htbp]
\centering
\caption{Linear regression simulation: Experiment 3 (heavy-tailed signals). RMSE and, for null coordinates ($\beta_j=0$), the proportion of estimated coefficients $|\hat\beta_j|$ below fixed thresholds, averaged over repetitions, comparing Lasso, Horseshoe, and SoU-SGL across $\rho\in\{0.1,0.2,0.3,0.4,0.5\}$.}
\label{tab:experiment3}
\footnotesize
\begin{tabular}{@{} l *{3}{S[table-format=1.2]} *{3}{S[table-format=1.2]} *{3}{S[table-format=1.2]} *{3}{S[table-format=1.2]} @{}}
\toprule
\multirow{2}{*}{\text{prop.\ signals}} & \multicolumn{3}{c}{RMSE} & \multicolumn{3}{c}{prop $<10^{-1}$} & \multicolumn{3}{c}{prop $<10^{-2}$} & \multicolumn{3}{c}{prop $<10^{-3}$} \\
\cmidrule(lr){2-4} \cmidrule(lr){5-7} \cmidrule(lr){8-10} \cmidrule(lr){11-13}
 & Lasso & HS & SoU & Lasso & HS & SoU & Lasso & HS & SoU & Lasso & HS & SoU \\
\midrule
0.1 & 1.18 & \textbf{1.11} & 1.15 & 0.92 & 0.94 & \textbf{0.96} & 0.79 & 0.32 & \textbf{0.96} & 0.77 & 0.04 & \textbf{0.96} \\

0.2 & 1.36 & 1.25 & \textbf{1.22} & 0.85 & 0.86 & \textbf{0.95} & 0.68 & 0.18 & \textbf{0.95} & 0.65 & 0.02 & \textbf{0.95} \\

0.3 & 1.52 & 1.41 & \textbf{1.36} & 0.79 & 0.77 & \textbf{0.95} & 0.58 & 0.13 & \textbf{0.95} & 0.55 & 0.01 & \textbf{0.95} \\

0.4 & 1.69 & 1.58 & \textbf{1.56} & 0.72 & 0.70 & \textbf{0.94} & 0.51 & 0.10 & \textbf{0.94} & 0.48 & 0.01 & \textbf{0.94} \\

0.5 & 1.83 & \textbf{1.73} & 1.79 & 0.65 & 0.63 & \textbf{0.93} & 0.43 & 0.08 & \textbf{0.93} & 0.41 & 0.01 & \textbf{0.93} \\

\bottomrule
\end{tabular}
\end{table}

In this heavy-tailed signal setting, SoU-SGL remains competitive in RMSE and consistently yields substantially higher sparsity among null coordinates across all $\rho$.

\subsection{Logistic regression}\label{app:logreg_setup}

\subsubsection{Model, priors, and variational family}

We consider Bayesian logistic regression with a global-local shrinkage prior:
\begin{align*}
y_i \mid \alpha,\mathbf{X}_i,\boldsymbol{\beta}
&\sim \mathrm{Bernoulli}\!\left(\sigma(\alpha+\mathbf{X}_i^\top\boldsymbol{\beta})\right),
\qquad i=1,\dots,n,\\
\beta_j \mid \tau_j^2,\nu^2
&\sim \mathcal{N}\!\left(0,\tau_j^2\nu^2\right), \qquad j=1,\dots,p,\\
\alpha \sim \mathcal{N}(0,\sigma_\alpha^2), \quad \tau_j^2 &\stackrel{\text{i.i.d.}}{\sim} \mathrm{Exp}(\lambda_\tau),\qquad
\nu^2 \sim \mathrm{Exp}(\lambda_\nu),
\end{align*}
where $\sigma(z)=1/(1+e^{-z})$. We use the mean-field variational family
\[
q(\alpha,\boldsymbol{\beta},\tau^2,\nu^2)
=
q(\alpha)\,q(\boldsymbol{\beta})\prod_{j=1}^p q(\tau_j^2)\,q(\nu^2),
\]
with
\begin{align*}
q(\alpha)&=\mathcal{N}(m_\alpha,s_\alpha^2),\qquad
q(\boldsymbol{\beta})=\mathcal{N}(m,S),\\
q(\tau_j^2)&=\mathrm{Inv\text{-}Gamma}(a_{\tau_j},b_{\tau_j}),\qquad
q(\nu^2)=\mathrm{Inv\text{-}Gamma}(a_\nu,b_\nu).
\end{align*}

To obtain a tractable variational objective, we use the Jaakkola--Jordan quadratic bound
\citep{jaakkola1997variational},
introducing parameters $(\xi_i)_{i=1}^n$ associated with the linear predictors $\eta_i:=\alpha+\mathbf{X}_i^\top\boldsymbol{\beta}$,
\[
\log \sigma(\eta_i)
\;\ge\;
\frac{\eta_i-\xi_i}{2}
-\lambda(\xi_i)(\eta_i^2-\xi_i^2)
+\log\sigma(\xi_i),
\qquad
\lambda(\xi)=\frac{1}{2\xi}\Big(\sigma(\xi)-\frac12\Big),
\]
and updating $\xi_i$ in closed form as
\[
\xi_i \leftarrow \Big( (m_\alpha+\mathbf{X}_i^\top m)^2 + s_\alpha^2 + \mathbf{X}_i^\top S\mathbf{X}_i \Big)^{1/2},
\qquad i=1,\dots,n.
\]

The model is estimated using Algorithm~\ref{alg:fixed_point}; the corresponding optimality
conditions are given in Section~\ref{app:sousgl_logreg}.

Unless stated otherwise, we use
$\lambda_\tau=\lambda_\nu=10$,
$\sigma_\alpha^2=10$,
$\gamma_\beta=\gamma_\nu=1$; we set $\gamma_\tau=10^{-6}$ for SoU-SGL and $\gamma_\tau=1$ for GL.

\subsubsection{Additional results}

To make the contrast in Section~\ref{sec:logistic} (main paper) explicit, we trace accuracy--sparsity curves by varying the main sparsity-inducing parameter in each approach. This provides a direct illustration of \emph{prior informativeness} versus \emph{confidence}.

Starting from the GL baseline, we increase sparsity along two paths: tightening the global-scale prior (increasing $\lambda_\nu$) and decreasing the local confidence parameter $\gamma_\tau$ (SoU-SGL at $\gamma_\tau \approx 0$).
Figure~\ref{fig:acc_sparsity} shows a pronounced trade-off along the prior-tightening path, whereas the confidence path moves towards the SoU-SGL regime with little impact on accuracy.

\begin{figure}[h]
    \centering
    \includegraphics[width=.85\textwidth]{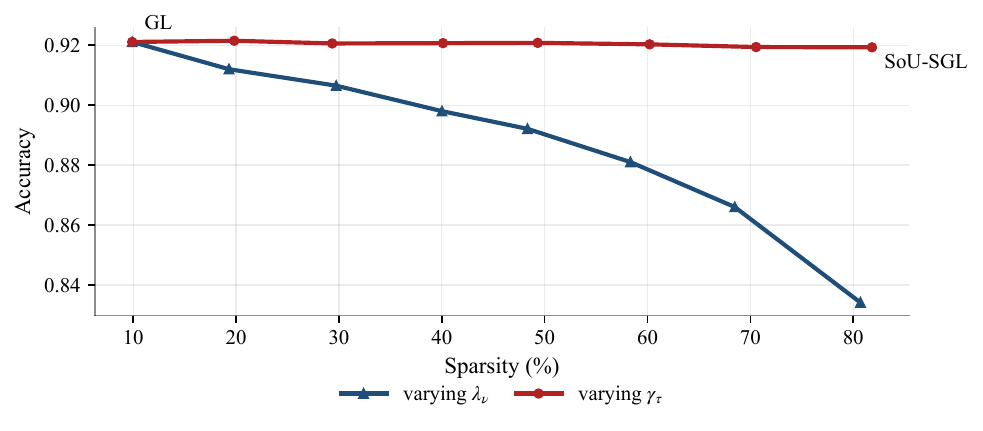}
    \caption{{Accuracy--sparsity trade-off on MNIST. Global-scale tightening (increasing $\lambda_\nu$ from the GL baseline) versus confidence variation (decreasing $\gamma_\tau$, with SoU-SGL at $\gamma_\tau\approx 0$). Sparsity is the percentage of coefficients with $|\hat\beta_j|<10^{-3}$.}}
    \label{fig:acc_sparsity}
\end{figure}

\FloatBarrier
\subsection{Bayesian neural networks}\label{app:BNN}

\subsubsection{Model, priors, and variational family}

We consider a Bayesian neural network (BNN) with $L$ layers. A network is parameterised by a collection of weight matrices $\mathbf{W} = \{ W_\ell \}_{\ell=1}^L$, where $W_\ell \in \mathbb{R}^{(K_{\ell-1}+1)\times K_\ell}$ denotes the weights connecting layer $\ell-1$ to layer $\ell$ (including bias terms).  
Given an input $\mathbf{X} \in \mathbb{R}^{K_0}$, the network produces the output
\[
f(\mathbf{W}; \mathbf{X}) = h_L(W_L^\top [z_{L-1}; 1]),
\]
where $z_0 = \mathbf{X}$, $z_\ell = h_\ell(W_\ell^\top [z_{\ell-1}; 1])$ for $\ell = 1,\dots,L-1$, and $h_\ell(\cdot)$ denotes an elementwise nonlinearity such as the ReLU function.  

A Bayesian neural network introduces uncertainty over the weights by placing prior distributions on them.  
Let $\mathcal{D} = \{ (\mathbf{X}_n, y_n) \}_{n=1}^N$ denote the observed data.  
As usual, the posterior over the weights is given by
\[
p(\mathbf{W} \mid \mathcal{D}) \propto p(\mathbf{W}) \prod_{n=1}^N p(y_n \mid f(\mathbf{W}; \mathbf{X}_n)),
\]
where $p(\mathbf{W})$ denotes the prior and $\prod_{n=1}^N p(y_n \mid f(\mathbf{W}; \mathbf{X}_n))$ the likelihood.

Let $\mathbf{w}_{k\ell} \in \mathbb{R}^{K_{\ell-1}+1}$ denote the vector of incoming weights to unit $k$ in layer $\ell$. 
The Normal prior that defines a standard BNN is:
\begin{align*}
\mathbf{w}_{k\ell} \sim \mathcal{N}(0,\sigma_{w,\ell}^2 I),
\end{align*}
The Horseshoe prior defines the following structure:
\begin{align*}
\mathbf{w}_{k\ell} \mid \tau_{k\ell}, \nu_\ell &\sim \mathcal{N}\left(0, \tau_{k\ell}^2 \nu_\ell^2 I\right), \\
\tau_{k\ell} \sim \mathcal{C}^+(0, b_0)&, \quad \nu_\ell \sim \mathcal{C}^+(0, b_g),
\end{align*}
where $\mathcal{C}^+(0, b)$ denotes the half-Cauchy distribution with scale $b$, $\tau_{k\ell}$ controls the degree of local shrinkage specific to unit $k$ in layer $\ell$, and $\nu_\ell$ acts as a global shrinkage parameter shared across the entire layer.

This is the structure considered in \citet{ghosh2018structured} and \citet{ghosh2019model}.
We closely follow their implementation and refer the reader to the original works for a comprehensive description.
For all models, we approximate the posterior distribution using a fully factorised variational family over all weights and scale parameters.

We use $\sigma_{w,\ell} = 1/\sqrt{\mathrm{dim}(\mathbf{w}_{k\ell})}$ for BNN and $b_0=1$, and $b_g=0.1$ for HS and SoU-SGL. We set $\gamma_\tau=10^{-9}$ for SoU-SGL and $\gamma_\tau=1$ for HS.\\

\noindent\textbf{Node-level and weight-level sparsity.}
The horseshoe prior admits the equivalent non-centred representation
\[
\mathbf{w}_{k\ell}=(\tau_{k\ell}\nu_\ell)\boldsymbol{\beta}_{k\ell},
\qquad
\boldsymbol{\beta}_{k\ell}\sim\mathcal N(0,I).
\]
Thus, the same multiplicative scale $\tau_{k\ell}\nu_\ell$ acts on all incoming weights of unit $k$ in layer $\ell$. This makes the shrinkage mechanism node-specific: when $\tau_{k\ell}\nu_\ell$ is close to zero, the full vector $\mathbf{w}_{k\ell}$ is jointly shrunk toward zero, and the corresponding hidden unit is pruned. We count a unit as inactive when $\mathbb{E}_q[\tau_{k\ell}\nu_\ell]$ falls below a fixed threshold.

We also consider sparsity directly at the level of individual weights. Specifically, we define the effective scalar weights as the variational posterior means $w_{j,k\ell}^{\mathrm{eff}} := \mathbb{E}_q[w_{j,k\ell}]$, where $w_{j,k\ell}$ denotes the $j$-th component of the incoming weight vector $\mathbf{w}_{k\ell}$. A scalar weight is counted as inactive when its absolute effective value falls below a fixed threshold. We use this notion in the main-text figures because it is directly comparable across BNN, HS, and SoU-SGL.

\subsubsection{Additional results}\label{app:bnn_additional}

This subsection complements the BNN toy regression experiment in Section~\ref{sec:bnn} (main paper) by reporting additional quantitative summaries. Table~\ref{tab:resultados_uncertainty} reports the empirical decomposition of predictive uncertainty into aleatoric and epistemic components, together with layerwise sparsity at the end of training (observation samples $n=300$). Across network widths, SoU-SGL yields tighter and better calibrated uncertainty estimates and consistently produces highly sparse networks.

\begin{table}[htbp]
\centering
\caption{Toy regression BNNs ($n=300$): uncertainty decomposition and sparsity. Estimated aleatoric and epistemic uncertainty and layer-wise sparsity (threshold 
$\epsilon=10^{-5}$) for BNN, Horseshoe, and SoU-SGL across network widths.}
\label{tab:resultados_uncertainty}
\footnotesize
\begin{tabular}{@{} l S[table-format=1.3] S[table-format=1.3] S[table-format=1.3] S[table-format=1.3] S[table-format=1.3] S[table-format=1.3] S[table-format=1.3] S[table-format=1.3] S[table-format=1.3] @{}}
\toprule
\multirow{2}{*}{Metric} & \multicolumn{3}{c}{100 Nodes} & \multicolumn{3}{c}{200 Nodes} & \multicolumn{3}{c}{400 Nodes} \\
\cmidrule(lr){2-4} \cmidrule(lr){5-7} \cmidrule(lr){8-10}
 & BNN & HS & SoU & BNN & HS & SoU & BNN & HS & SoU \\
\midrule
Aleatoric uncertainty & 0.431 & 0.305 & \textbf{0.291} & 0.561 & 0.306 & \textbf{0.289} & 0.719 & 0.309 & \textbf{0.292} \\
Epistemic uncertainty & 0.238 & 0.104 & \textbf{0.064} & 0.270 & 0.102 & \textbf{0.066} & 0.147 & 0.095 & \textbf{0.057} \\
Layer 1 weight sparsity & 0.000 & 0.005 & \textbf{0.970} & 0.000 & 0.010 & \textbf{0.985} & 0.001 & 0.004 & \textbf{0.993} \\
Layer 2 weight sparsity & 0.003 & 0.235 & \textbf{0.980} & 0.007 & 0.386 & \textbf{0.990} & 0.015 & 0.962 & \textbf{0.997} \\
\bottomrule
\end{tabular}
\end{table}

\FloatBarrier
\section{Estimation}\label{app:estimation}

We describe the optimisation procedures used to solve the SoU objective from the main paper (Section~\ref{sec:sou-bayes}).

For the Normal-means model and the linear and logistic regression models in Sections~\ref{sec:simple_gl}--\ref{sec:logistic} (main paper),
the objective yields explicit coordinate-wise optimality conditions. In these cases we use a
simple fixed-point scheme: we derive the first-order conditions and iterate the resulting
coordinate updates until convergence (Algorithm~\ref{alg:fixed_point}). When a coordinate update
does not admit a closed form, we solve the corresponding optimality equation numerically (e.g.,
via Newton--Raphson).

More generally, the SoU objective can be
optimised using black-box variational methods (BBVI) \citep{ranganath2014black}. In particular, the Black-Box Generalised
Variational Inference (BBGVI) algorithm of \citet{knoblauch2022optimization} applies directly 
to this class of objectives. For the Bayesian neural networks in Section~\ref{sec:bnn}, we follow
a standard BBVI procedure as in \citet{ghosh2018structured,ghosh2019model}.

\begin{algorithm}[htbp]
\caption{Coordinate fixed-point algorithm for minimising $J(\btheta)$}
\label{alg:fixed_point}
\begin{algorithmic}[1]

\Require Objective $J(\btheta)$, tolerance $\epsilon$, maximum iterations $T$
\Ensure Variational parameters $\btheta^{*}$

\State \textbf{Coordinate updates:}
\State Derive the first-order conditions $\nabla_\btheta J(\btheta)=0$ and, for each component $\theta_i$, obtain either:
\State \hspace{1em} (i) an explicit update map $\theta_i \leftarrow f_i(\theta_{-i})$, or
\State \hspace{1em} (ii) a one-dimensional equation $g_i(\theta_i;\theta_{-i})=0$ to be solved numerically

\State \textbf{Initialise:} choose $\btheta^{(0)}$

\For{$t = 1, \dots, T$}
    \State $\btheta^{(t)} \gets \btheta^{(t-1)}$
    \For{each component $\theta_i$}
        \If{explicit update available}
            \State $\theta_i^{(t)} \gets f_i(\theta_{-i}^{(t)})$
        \Else
            \State $\theta_i^{(t)} \gets \mathrm{Solve}(g_i(\cdot;\theta_{-i}^{(t)})=0)$ \Comment{e.g., Newton}
        \EndIf
    \EndFor
    \If{$\max_i |\theta_i^{(t)}-\theta_i^{(t-1)}| < \epsilon$}
        \State \Return $\btheta^{*} = \btheta^{(t)}$
    \EndIf
\EndFor
\end{algorithmic}
\end{algorithm}

\FloatBarrier
\section{Derivations and model-specific updates}\label{app:updates}

\subsection{Rewriting the optimisation problem}\label{app:point_uncertainty}

We recall the objective functional from the main text
\begin{multline} \label{eq:J_main}
J(q)=
\frac{1}{n}\mathbb{E}_{q} [ L_n(\btheta) ] + \frac{\gamma_1}{n} \, 
\text{KL}\left( 
q(\btheta_1) \,\|\, \pi(\btheta_1) 
\right)+ 
\sum_{k=2}^{K} \frac{\gamma_k}{n} \, 
\mathbb{E}_{q(\btheta_{<k})} \left[ 
\text{KL}\left( 
q(\btheta_k | \btheta_{<k}) \,\|\, 
\pi(\btheta_k | \btheta_{<k}) 
\right)
\right],
\end{multline}
so that $q^*=\arg\min_q J(q)$. Since multiplying the objective by \(n\) does not change its minimisers, we work with the unscaled objective $\widetilde J(q) := nJ(q)$.

We first rewrite the expected loss using the log--exp identity

\[
\mathbb{E}_q\!\left[\sum_{i=1}^{n}\ell(\btheta,X_i)\right]
=
\mathbb{E}_q\!\left[-\log\exp\!\left(-\sum_{i=1}^{n}\ell(\btheta,X_i)\right)\right].
\]

Next, for any $q\ll\pi$,
\begin{equation}\label{eq:kl_decomp_general}
\mathrm{KL}(q\|\pi)
=
\mathbb{E}_q[\log q]-\mathbb{E}_q[\log \pi]
=
-\mathcal{H}(q)-\mathbb{E}_q[\log \pi],
\qquad
\mathcal{H}(q):=-\mathbb{E}_q[\log q].
\end{equation}
Applying \eqref{eq:kl_decomp_general} to the marginal and conditional KL terms in \eqref{eq:J_main}, the cross-entropy terms combine with the expected loss to form $M(q)$, while the entropy terms collect into $H(q)$, yielding the decomposition
\begin{equation}\label{eq:Jq_rewritten}
\widetilde J(q)=M(q)-H(q),
\end{equation}
where $M(q)$ and $H(q)$ are as defined in Section~\ref{sec:point_uncertainty} (main paper).

\subsection{Exponential family: tempered conjugate update}\label{app:Exponential}

Consider a single block $\btheta$ and the SoU objective with negative log-likelihood loss and KL weight $\gamma>0$:
\[
J_n(q)
=\E_q\!\left[-\frac{1}{n}\sum_{i=1}^n \log p(X_i\mid\btheta)\right]
+\frac{\bgamma}{n}\text{KL}\!\big(q(\btheta)\,\|\,\pi(\btheta)\big).
\]
Up to an additive constant (independent of $q$), we have
\[
\frac{n}{\bgamma}J_n(q)
=
\text{KL}\!\big(q(\btheta)\,\|\,p_\gamma(\btheta\mid \bX_{1:n})\big),
\]
where the \emph{tempered posterior} is defined by
\[
p_\gamma(\btheta\mid \bX_{1:n})
\propto
\pi(\btheta)\prod_{i=1}^n p(X_i\mid\btheta)^{1/\gamma}.
\]
Hence the minimiser satisfies $q^*(\btheta)=p_\gamma(\btheta\mid \bX_{1:n})$.

Now assume the exponential-family likelihood and conjugate prior described in Section~\ref{sec:examples_confidence} (main paper):
\[
p(X\mid\btheta)=h(X)\,c(\btheta)\exp\{\btheta^\top T(X)\},
\qquad
\pi(\btheta)\propto c(\btheta)^{N_0}\exp\{\btheta^\top(N_0T_0)\}.
\]
Then
\[
q^*(\btheta)
\propto
c(\btheta)^{\,N_0+n/\gamma}\,
\exp\!\Big\{
\btheta^\top\!\Big(N_0T_0+\frac{1}{\bgamma}\sum_{i=1}^n T(X_i)\Big)
\Big\},
\]
so $q^*$ remains in the conjugate family with updated natural parameters
\[
N_n:=N_0+\frac{n}{\bgamma},
\qquad
N_n T_n := N_0T_0+\frac{1}{\bgamma}\sum_{i=1}^n T(X_i).
\]
Equivalently,
\[
T_n=\alpha_n T_0+(1-\alpha_n)\bar T_n,
\qquad
\alpha_n=\frac{\gamma N_0}{\gamma N_0+n},
\qquad
\bar T_n=\frac{1}{n}\sum_{i=1}^n T(X_i).
\]

\subsection{Normal-normal model: closed-form SoU posterior}\label{app:normal_normal}

We revisit the normal-normal example from Section~\ref{sec:examples_confidence} (main paper).
Let $y_1,\dots,y_n$ be i.i.d.\ with $y_i\mid \mu\sim \mathcal N(\mu,\sigma^2)$ for known $\sigma^2$, and prior
$\mu\sim\mathcal N(\mu_0,\sigma_0^2)$. We take a Gaussian variational family $q(\mu)=\mathcal N(\mu_q,\sigma_q^2)$. The SoU objective is then:
\begin{multline} \label{eq:mu_J}
J(q)=
\frac{1}{2}\log(2\pi\sigma^2)
+\frac{1}{2\sigma^2}
\left[
    \sigma_q^2
    +
    \frac{1}{n}\sum_{i=1}^n (y_i-\mu_q)^2
\right] 
 + \frac{\gamma_\mu}{2n}
\left[
\frac{(\mu_0-\mu_q)^2}{\sigma_0^2}
+
\frac{\sigma_q^2}{\sigma_0^2}
-
\log\!\left(\frac{\sigma_q^2}{\sigma_0^2}\right)
-
1
\right]. \nonumber
\end{multline}

The first-order conditions give the closed-form solution

\begin{align}
\sigma_q^2
&=\left(\frac{n}{\gamma_\mu\sigma^2}+\frac{1}{\sigma_0^2}\right)^{-1}, 
\nonumber\\
\mu_q
&=\sigma_q^2\left(\frac{n}{\gamma_\mu\sigma^2}\,\bar y+\frac{1}{\sigma_0^2}\,\mu_0\right)
=\alpha_{\mathrm{SoU}}\mu_0+(1-\alpha_{\mathrm{SoU}})\bar y, 
\nonumber
\end{align}
where 
\[\bar y = \frac{1}{n}\sum_{i=1}^n y_i, \quad \text{and} \quad
\alpha_{\mathrm{SoU}}
=\frac{\frac{1}{\sigma_0^2}}{\frac{1}{\sigma_0^2}+\frac{n}{\gamma_\mu\sigma^2}}.
\]

\subsection{SoU-SGL models optimality conditions}\label{app:sousgl_optconds}

We provide, for each SoU-SGL model, the objective function and the corresponding first-order optimality conditions. These equations define the coordinate-wise updates iterated in Algorithm~\ref{alg:fixed_point} until convergence.

\subsubsection{SoU-SGL normal means}\label{app:sousgl_normalmeans}

For the Normal-means model in Section~\ref{app:nmeans_setup}, the SoU variational objective can be written as

\begin{align*}
J  =
& \frac{In}{2} \left[ \log(2\pi) + \log b_\sigma - \psi(a_\sigma) \right] + \frac{a_\sigma}{2 b_\sigma}\sum_{i=1}^I \sum_{j=1}^n\left[(y_{ij}-m_i)^2 + s_i^2\right] \\
& + \frac{\gamma_{\beta}}{2} \sum_{i=1}^I\left[\frac{a_\nu a_{\tau_i}(m_i^2 + s_i^2)}{b_\nu b_{\tau_i}} - \log s_i^2 + (\log b_\nu - \psi(a_\nu)) + (\log b_{\tau_i}-\psi(a_{\tau_i}))  - 1\right] \\
& + \gamma_{\tau} \sum_{i=1}^I \left[ a_{\tau_i}\log b_{\tau_i} - \log\Gamma(a_{\tau_i}) - (a_{\tau_i}+1)(\log b_{\tau_i} - \psi(a_{\tau_i})) - a_{\tau_i} - \log\lambda_\tau + \frac{\lambda_\tau b_{\tau_i}}{a_{\tau_i}-1} \right] \\
& + \gamma_\nu \left[ a_\nu \log b_\nu - \log\Gamma(a_\nu) - (a_\nu+1)(\log b_\nu - \psi(a_\nu)) - a_\nu - \log\lambda_\nu + \frac{\lambda_\nu b_\nu}{a_\nu-1} \right] \\
& + \gamma_\sigma \left[ a_\sigma \log b_\sigma - a_{\pi,\sigma} \log b_{\pi,\sigma} + \log \left( \frac{\Gamma(a_{\pi,\sigma})}{\Gamma(a_\sigma)} \right) \right. \\
& \quad \left. + (\log b_\sigma - \psi(a_\sigma))(a_{\pi,\sigma} - a_\sigma) + \left(\frac{a_\sigma}{b_\sigma}\right)(b_{\pi,\sigma} - b_\sigma) \right].
\end{align*}

Let $\bar y_i := n^{-1}\sum_{j=1}^n y_{ij}$. The first-order conditions yield the
following explicit updates for $(\{m_i,s_i^2,b_{\tau_i}\}_{i=1}^I,b_\sigma,b_\nu)$:

\begin{align*}
m_i &= \left(\frac{a_\sigma n}{b_\sigma \gamma_\beta} +  \frac{a_\nu a_{\tau_i}}{b_\nu b_{\tau_i}}\right)^{-1} \frac{a_\sigma n}{b_\sigma \gamma_\beta} \bar y_i,\\
s_i^2 &= \left(\frac{a_\sigma n}{b_\sigma \gamma_\beta} + \frac{a_\nu a_{\tau_i}}{b_\nu b_{\tau_i}}\right)^{-1},\\
b_{\tau_i} &= \frac{\left( \gamma_\tau - \frac{\gamma_\beta}{2} \right) (a_{\tau_i} - 1) + (a_{\tau_i} - 1) \sqrt{\left( \gamma_\tau - \frac{\gamma_\beta}{2} \right)^2 + \frac{2 \lambda_\tau \gamma_\tau \gamma_\beta a_{\tau_i} a_\nu}{(a_{\tau_i}-1)  b_\nu} (m_i^2 + s_{i}^2)}}{2 \lambda_\tau \gamma_\tau},\\
b_\sigma &= \frac{ 
    a_\sigma  \left[ \sum_{i=1}^I \sum_{j=1}^n\left[(y_{ij}-m_i)^2 + s_i^2\right] \right]
    + 2 \gamma_\sigma a_\sigma b_{\pi,\sigma} 
}{ 
    I n + 2 \gamma_\sigma a_{\pi,\sigma} 
},\\
b_\nu &= \frac{\left( \gamma_\nu - \frac{I \gamma_\beta}{2} \right) (a_\nu - 1) + (a_\nu - 1) \sqrt{\left( \gamma_\nu - \frac{I \gamma_\beta}{2} \right)^2 + \frac{2 \lambda_\nu \gamma_\nu \gamma_\beta \sum_{i=1}^I \frac{a_{\tau_i}}{b_{\tau_i}} (m_i^2 + s_{i}^2)}{(a_\nu-1)}}}{2 \lambda_\nu \gamma_\nu}.
\end{align*}
The updates for the remaining parameters $(\{a_{\tau_i}\}_{i=1}^I,a_\sigma,a_\nu)$ are obtained by solving the equations, respectively:
\begin{align*}
0 &= - \frac{\gamma_\beta}{2}\left[\frac{ a_\nu}{b_{\tau_i} b_\nu} (m_i^2 + s_{i}^2) - \psi'(a_{\tau_i})\right] - \gamma_\tau\left[\psi'(a_{\tau_i})(a_{\tau_i}+1) - 1 - \frac{\lambda_\tau b_{\tau_i}}{(a_{\tau_i}-1)^2}\right],\\
0 &= \frac{I n}{2} \psi'(a_\sigma) - \frac{1}{2 b_\sigma} \left[ \sum_{i=1}^I \sum_{j=1}^n\left[(y_{ij}-m_i)^2 + s_i^2\right] \right] - \gamma_\sigma \left[ - \psi'(a_\sigma)(a_{\pi,\sigma} - a_\sigma) + \frac{b_{\pi,\sigma}}{b_\sigma} -1 \right],\\
0 &= -\frac{\gamma_\beta}{2} \left[ \sum_{i=1}^I \frac{a_{\tau_i}}{b_{\tau_i} b_\nu}(m_i^2 + s_{i}^2) - I \psi'(a_\nu) \right] - \gamma_\nu \left[ \psi'(a_\nu)(a_\nu+1) -1 - \frac{\lambda_\nu b_\nu}{(a_\nu-1)^2} \right].
\end{align*}

\subsubsection{SoU-SGL Linear Regression}\label{app:sousgl_linreg}

For the Linear Regression model in Section~\ref{app:linreg_setup}, the SoU variational objective can be written as

\begin{align*}
J =
& \frac{n}{2} [\log(2\pi) + \log b_\sigma - \psi(a_\sigma)] + \frac{a_\sigma}{2 b_\sigma}\left[\|y\|^2 - 2m_\alpha\sum_{i=1}^n y_i - 2\sum_{i=1}^n y_i\mathbf{X}_i^\top m + n m_\alpha^2 \right. \\
&\quad \left. + 2m_\alpha\sum_{i=1}^n\mathbf{X}_i^\top m + \sum_{i=1}^n (\mathbf{X}_i^\top m)^2 + \text{tr}(\mathbf{X}^\top\mathbf{X}S) + n s_\alpha^2\right] \\
&+\frac{\gamma_\alpha}{2}\left[\frac{m_\alpha^2 + s_\alpha^2}{\sigma_\alpha^2} - \log\left(\frac{s_\alpha^2}{\sigma_\alpha^2}\right)  - 1\right]\\
& + \frac{\gamma_{\beta}}{2} \left[ \sum_{j=1}^p S_{jj} \frac{a_\sigma}{b_\sigma} \frac{a_{\tau_j}}{b_{\tau_j}} \frac{a_\nu}{b_\nu} + \sum_{j=1}^p m_j^2 \frac{a_\sigma}{b_\sigma} \frac{a_{\tau_j}}{b_{\tau_j}} \frac{a_\nu}{b_\nu} \right. \\
& \quad \left. - \log|S| + p (\log b_\sigma - \psi(a_\sigma)) + \sum_{j=1}^p \left(\log b_{\tau_j} - \psi(a_{\tau_j})\right) + p (\log b_\nu - \psi(a_\nu)) - p \right] \\
& + \gamma_{\tau} \sum_{j=1}^p \left[ a_{\tau_j}\log b_{\tau_j} - \log\Gamma(a_{\tau_j}) - (a_{\tau_j}+1)(\log b_{\tau_j} - \psi(a_{\tau_j})) - a_{\tau_j} - \log\lambda_\tau + \frac{\lambda_\tau b_{\tau_j}}{a_{\tau_j}-1} \right] \\
& + \gamma_\nu \left[ a_\nu \log b_\nu - \log\Gamma(a_\nu) - (a_\nu+1)(\log b_\nu - \psi(a_\nu)) - a_\nu - \log\lambda_\nu + \frac{\lambda_\nu b_\nu}{a_\nu-1} \right] \\
& + \gamma_\sigma \left[ a_\sigma \log b_\sigma - a_{\pi,\sigma} \log b_{\pi,\sigma} + \log \left( \frac{\Gamma(a_{\pi,\sigma})}{\Gamma(a_\sigma)} \right) \right. \\
& \quad \left. + (\log b_\sigma - \psi(a_\sigma))(a_{\pi,\sigma} - a_\sigma) + \left(\frac{a_\sigma}{b_\sigma}\right)(b_{\pi,\sigma} - b_\sigma) \right].
\end{align*}

Let $D := \text{diag}\left(\frac{a_{\tau_j}}{b_{\tau_j}}\right)$ and \(\text{SSR} := \|y\|^2 - 2m_\alpha\mathbf{1}_n^\top y - 2y^\top\mathbf{X}m + n m_\alpha^2 + 2m_\alpha\mathbf{1}_n^\top\mathbf{X}m + m^\top\mathbf{X}^\top\mathbf{X}m + \text{tr}(\mathbf{X}^\top\mathbf{X}S) + n s_\alpha^2\). The first-order conditions yield the
following explicit updates for $(m_\alpha,s^2_\alpha,m,S,\{b_{\tau_j}\}_{j=1}^p,b_\sigma,b_\nu$):

\begin{align*}
m_\alpha &= \frac{\frac{a_\sigma}{b_\sigma}\sum_{i=1}^n (y_i - \mathbf{X}_i^\top m)}{\frac{n a_\sigma}{b_\sigma} + \frac{\gamma_\alpha}{\sigma_\alpha^2}},\\
s_\alpha^2 &= \frac{1}{\frac{n a_\sigma}{\gamma_\alpha b_\sigma} + \frac{1}{\sigma_\alpha^2}},\\
m &= \left(\mathbf{X}^\top\mathbf{X} + \gamma_\beta \frac{a_\nu}{b_\nu}D\right)^{-1}\mathbf{X}^\top(y - \mathbf{1}_n m_\alpha),\\
S &= \frac{\gamma_\beta b_\sigma}{a_\sigma}\left(\mathbf{X}^\top\mathbf{X} + \gamma_\beta \frac{a_\nu}{b_\nu}D\right)^{-1},\\
b_{\tau_j} &= \frac{\left( \gamma_\tau - \frac{\gamma_\beta}{2} \right) (a_{\tau_j} - 1) + (a_{\tau_j} - 1) \sqrt{\left( \gamma_\tau - \frac{\gamma_\beta}{2} \right)^2 + \frac{2 \lambda_\tau \gamma_\tau \gamma_\beta a_\sigma a_{\tau_j} a_\nu}{(a_{\tau_j}-1) b_\sigma b_\nu} (S_{jj} + m_j^2)}}{2 \lambda_\tau \gamma_\tau},\\
b_\sigma&= \frac{a_\sigma\left[\text{SSR} + \gamma_\beta\frac{a_\nu}{b_\nu}\sum_{j=1}^p\frac{a_{\tau_j}}{b_{\tau_j}}(S_{jj}+m_j^2)\right] + 2\gamma_\sigma a_\sigma b_{\pi,\sigma}}{n + \gamma_\beta p + 2\gamma_\sigma a_{\pi,\sigma}},\\
b_\nu &= \frac{\left( \gamma_\nu - \frac{p \gamma_\beta}{2} \right) (a_\nu - 1) + (a_\nu - 1) \sqrt{\left( \gamma_\nu - \frac{p \gamma_\beta}{2} \right)^2 + \frac{2 \lambda_\nu \gamma_\nu \gamma_\beta a_\sigma \sum_{j=1}^p \frac{a_{\tau_j}}{b_{\tau_j}} (S_{jj} + m_j^2)}{(a_\nu-1) b_\sigma}}}{2 \lambda_\nu \gamma_\nu}.
\end{align*}

The updates for the remaining parameters $(\{a_{\tau_j}\}_{j=1}^p,a_\sigma,a_\nu)$ are obtained by solving the equations, respectively:

\begin{align*}
0 &= -\frac{\gamma_\beta}{2} \left[ \frac{a_\sigma a_\nu}{b_\sigma b_{\tau_j} b_\nu}(S_{jj}+m_j^2) - \psi'(a_{\tau_j}) \right] - \gamma_\tau \left[ \psi'(a_{\tau_j})(a_{\tau_j}+1) -1 - \frac{\lambda_\tau b_{\tau_j}}{(a_{\tau_j}-1)^2} \right],\\
0&= \frac{n}{2}\psi'(a_\sigma) - \frac{1}{2b_\sigma}\left[\text{SSR} + \gamma_\beta\frac{a_\nu}{b_\nu}\sum_{j=1}^p\frac{a_{\tau_j}}{b_{\tau_j}}(S_{jj}+m_j^2)\right] \\
&\quad + \frac{\gamma_\beta p}{2}\psi'(a_\sigma) - \gamma_\sigma\left[- \psi'(a_\sigma)(a_{\pi,\sigma}-a_\sigma) + \frac{b_{\pi,\sigma}}{b_\sigma} - 1\right],\\
0 &= -\frac{\gamma_\beta}{2} \left[ \sum_{j=1}^p \frac{a_\sigma a_{\tau_j}}{b_\sigma b_{\tau_j} b_\nu}(S_{jj}+m_j^2) - p \psi'(a_\nu) \right] - \gamma_\nu \left[ \psi'(a_\nu)(a_\nu+1) -1 - \frac{\lambda_\nu b_\nu}{(a_\nu-1)^2} \right].
\end{align*}

\subsubsection{SoU-SGL Logistic Regression}\label{app:sousgl_logreg}

For the Logistic Regression model in Section~\ref{app:logreg_setup}, the SoU variational objective can be written as

\begin{align*}
J = & -\sum_{i=1}^n \bigg[ (y_i - \tfrac{1}{2})(m_\alpha + \mathbf{X}_i^\top m) - \lambda(\xi_i)\big((m_\alpha + \mathbf{X}_i^\top m)^2 + s_\alpha^2 + \mathbf{X}_i^\top S \mathbf{X}_i\big) \\
&+ \tfrac{\xi_i}{2} + \lambda(\xi_i)\xi_i^2 + y_i \log\sigma(\xi_i) + (1-y_i)\log(1-\sigma(\xi_i)) \bigg] \\
& +\frac{\gamma_\alpha}{2}\left[\frac{m_\alpha^2 + s_\alpha^2}{\sigma_\alpha^2} - \log\left(\tfrac{s_\alpha^2}{\sigma_\alpha^2}\right) - 1\right] \\
& + \frac{\gamma_\beta}{2} \bigg[ \sum_{j=1}^p (S_{jj} + m_j^2) \frac{a_{\tau_j}}{b_{\tau_j}}\frac{a_\nu}{b_\nu} - \log|S| \\
&\quad + \sum_{j=1}^p \big(\log b_{\tau_j} - \psi(a_{\tau_j})\big) + p\big(\log b_\nu - \psi(a_\nu)\big) - p \bigg] \\
& + \gamma_\tau \sum_{j=1}^p \bigg[ a_{\tau_j}\log b_{\tau_j} - \log\Gamma(a_{\tau_j}) - (a_{\tau_j}+1)\big(\log b_{\tau_j} - \psi(a_{\tau_j})\big) \\
&\quad - a_{\tau_j} - \log\lambda_\tau + \tfrac{\lambda_\tau b_{\tau_j}}{a_{\tau_j}-1} \bigg] \\
& + \gamma_\nu \bigg[ a_\nu \log b_\nu - \log\Gamma(a_\nu) - (a_\nu+1)\big(\log b_\nu - \psi(a_\nu)\big) \\
&\quad - a_\nu - \log\lambda_\nu + \tfrac{\lambda_\nu b_\nu}{a_\nu-1} \bigg],
\end{align*}
where \(\lambda(\xi_i) := \tfrac{1}{2\xi_i}\left[\sigma(\xi_i) - \tfrac{1}{2}\right]\) and \(\xi_i^2 := (m_\alpha + \mathbf{X}_i^\top m)^2 + s_\alpha^2 + \mathbf{X}_i^\top S \mathbf{X}_i\). The auxiliary parameters $\{\xi_i\}_{i=1}^n$ are updated at each iteration.

Let $D := \mathrm{diag}\!\left(\frac{a_\nu}{b_\nu}\frac{a_{\tau_1}}{b_{\tau_1}},\dots,
\frac{a_\nu}{b_\nu}\frac{a_{\tau_p}}{b_{\tau_p}}\right)$ and $\Lambda := \text{diag}(\lambda(\xi_1), \dots, \lambda(\xi_n))$. The first-order conditions yield the
following explicit updates for $(m_\alpha,s^2_\alpha,m,S,\{b_{\tau_j}\}_{j=1}^p,b_\nu$):

\begin{align*}
m_\alpha &= \frac{\sum_{i=1}^n \left[ (y_i - \tfrac{1}{2}) - 2\lambda(\xi_i)\mathbf{X}_i^\top m \right]}{2\sum_{i=1}^n \lambda(\xi_i) + \frac{\gamma_\alpha}{\sigma_\alpha^2}}, \\
s_\alpha^2 &= \frac{\gamma_\alpha}{2\sum_{i=1}^n \lambda(\xi_i) + \frac{\gamma_\alpha}{\sigma_\alpha^2}},\\
m &= \left(2\mathbf{X}^\top \Lambda \mathbf{X} + \gamma_\beta D\right)^{-1} \left( \mathbf{X}^\top\left(y - \tfrac{1}{2}\mathbf{1}_n\right) - 2m_\alpha \mathbf{X}^\top \Lambda \mathbf{1}_n \right),\\
S &= \left( \frac{2}{\gamma_\beta} \mathbf{X}^\top \Lambda \mathbf{X} + D \right)^{-1},\\
b_{\tau_j} &= \frac{\left( \gamma_\tau - \frac{\gamma_\beta}{2} \right) (a_{\tau_j} - 1) + (a_{\tau_j} - 1) \sqrt{\left( \gamma_\tau - \frac{\gamma_\beta}{2} \right)^2 + \frac{2 \lambda_\tau \gamma_\tau \gamma_\beta  a_{\tau_j} a_\nu}{(a_{\tau_j}-1)  b_\nu} (S_{jj} + m_j^2)}}{2 \lambda_\tau \gamma_\tau},\\
b_\nu &= \frac{\left( \gamma_\nu - \frac{p \gamma_\beta}{2} \right) (a_\nu - 1) + (a_\nu - 1) \sqrt{\left( \gamma_\nu - \frac{p \gamma_\beta}{2} \right)^2 + \frac{2 \lambda_\nu \gamma_\nu \gamma_\beta  \sum_{j=1}^p \frac{a_{\tau_j}}{b_{\tau_j}} (S_{jj} + m_j^2)}{(a_\nu-1) }}}{2 \lambda_\nu \gamma_\nu}.
\end{align*}

The updates for the remaining parameters $(\{a_{\tau_j}\}_{j=1}^p,a_\nu)$ are obtained by solving the equations, respectively:

\begin{align*}
0 &= -\frac{\gamma_\beta}{2} \left[ \frac{ a_\nu}{b_{\tau_j} b_\nu}(S_{jj}+m_j^2) - \psi'(a_{\tau_j}) \right] - \gamma_\tau \left[ \psi'(a_{\tau_j})(a_{\tau_j}+1) -1 - \frac{\lambda_\tau b_{\tau_j}}{(a_{\tau_j}-1)^2} \right],\\
0 &= -\frac{\gamma_\beta}{2} \left[ \sum_{j=1}^p \frac{ a_{\tau_j}}{ b_{\tau_j} b_\nu}(S_{jj}+m_j^2) - p \psi'(a_\nu) \right] - \gamma_\nu \left[ \psi'(a_\nu)(a_\nu+1) -1 - \frac{\lambda_\nu b_\nu}{(a_\nu-1)^2} \right].
\end{align*}

\section{Proofs}\label{app:proofs}

\subsection{Closed-form characterisation}\label{app:closed_form}

This section proves Theorem~\ref{thm:Closed_Form} from the main text and establishes Corollary~\ref{cor:closed_form}.
We consider the unscaled version of the objective function without requiring the additivity of the empirical loss $L(\btheta)$:
\begin{multline} \label{eq:KB_unscaled}
J(q)
:= \mathbb{E}_{q} \left[ L(\btheta) \right] + \gamma_1 \, \text{KL}\left( 
q(\btheta_{1}) \,\|\, \pi(\btheta_{1}) 
\right)
+ \sum_{k=2}^{K} \gamma_k \, 
\mathbb{E}_{q(\btheta_{<k})} \left[ 
\text{KL}\left( 
q(\btheta_{k} \mid \btheta_{<k}) \,\|\, 
\pi(\btheta_{k} \mid \btheta_{<k}) 
\right)
\right].
\end{multline}

We start with the following observation regarding the form of the
solution $q^{*}$. 
\begin{lemma}
\label{lem:Ionescu}For any probability measure $q$ on $\Theta=\prod_{k=1}^{K}\Theta_{B_{k}}$
equipped with the standard Borel $\sigma$-algebra, there exists a
marginal measure $q_{1}\left(\mathrm{d}\btheta_{1}\right)$ and
Markov kernels $q_{k}\left(\mathrm{d}\btheta_{k}|\btheta_{<k}\right)$,
for each $k\in\left\{ 2,\dots,K\right\} $ such that 
\[
q\left(\mathrm{d}\btheta\right)=q_{1}\left(\mathrm{d}\btheta_{1}\right)\prod_{k=2}^{K}q_{k}\left(\mathrm{d}\btheta_{k}|\btheta_{<k}\right)\text{,}
\]
where the kernels are $q$ almost surely unique. 
\end{lemma}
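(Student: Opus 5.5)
The plan is to prove Lemma~\ref{lem:Ionescu} by induction on $K$, using the disintegration theorem on standard Borel spaces at each step. The underlying tool is this. For standard Borel spaces $E$ and $F$ and a probability measure $\mu$ on $E\times F$ with first marginal $\mu_E$, there is a Markov kernel $\kappa(\mathrm d y\mid x)$ from $E$ to $F$ with
\[
\mu(A\times C)=\int_A \kappa(C\mid x)\,\mu_E(\mathrm dx).
\]
This kernel is unique up to a $\mu_E$-null set. Each $\Theta_{B_k}\subseteq\mathbb R^{|B_k|}$, equipped with its Borel $\sigma$-algebra, is standard Borel, and so is every finite product $\Theta_{B_{<k}}:=\prod_{i<k}\Theta_{B_i}$. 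The theorem therefore applies with $E=\Theta_{B_{<k}}$ and $F=\Theta_{B_k}$.

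The construction proceeds backwards over the blocks. Write $q^{(k)}$ for the marginal of $q$ on $\Theta_{B_{\le k}}$, obtained by pushforward under the coordinate projection; then $q^{(K)}=q$. For $k=K,K-1,\dots,2$, disintegrate $q^{(k)}$ on $\Theta_{B_{<k}}\times\Theta_{B_k}$ with respect to its first marginal, which is exactly $q^{(k-1)}$. This produces a kernel $q_k(\mathrm d\btheta_k\mid\btheta_{<k})$ with
\[
q^{(k)}(\mathrm d\btheta_{\le k})=q^{(k-1)}(\mathrm d\btheta_{<k})\,q_k(\mathrm d\btheta_k\mid\btheta_{<k}).
\]
Set $q_1:=q^{(1)}$. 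Chaining these identities gives $q=q_1\prod_{k=2}^K q_k$. Here the product means the iterated (Ionescu-Tulcea) composition of kernels, and it is well defined on rectangles and hence, by the $\pi$-$\lambda$ theorem, on all Borel sets. Measurability of $\btheta_{<k}\mapsto q_k(C\mid\btheta_{<k})$ is part of the conclusion of the disintegration theorem.

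For uniqueness, suppose $q=\tilde q_1\prod_{k\ge2}\tilde q_k$ is a second factorisation. Integrating out the blocks $k+1,\dots,K$ (each kernel has total mass one) shows that $\tilde q_1\prod_{j=2}^k\tilde q_j$ equals the marginal $q^{(k)}$ for every $k$. In particular $\tilde q_1=q_1$. Moreover, $\tilde q_k$ is a disintegration of $q^{(k)}$ with respect to $q^{(k-1)}$. The uniqueness part of the disintegration theorem then gives $\tilde q_k(\cdot\mid\btheta_{<k})=q_k(\cdot\mid\btheta_{<k})$ for $q^{(k-1)}$-almost every $\btheta_{<k}$, which is the claimed almost-sure uniqueness.

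I expect the main obstacle to be the uniqueness step. A.s. equality of kernels means equality as measures for almost every $\btheta_{<k}$, not merely for each fixed Borel set $C$. To upgrade from the latter to the former, I will use a countable $\pi$-system generating $\mathcal B(\Theta_{B_k})$, namely rational rectangles. Two kernels that agree on $C$ off a null set $N_C$ agree on every $C$ in this $\pi$-system outside the single null set $\bigcup_C N_C$. By Dynkin's theorem they then agree as measures there. The countable generation is available because the spaces are Euclidean, which is where the standard Borel assumption is actually used.
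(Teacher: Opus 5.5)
Your proposal is correct and takes the same route as the paper. The paper's proof is a one-line appeal to iterative disintegration (regular conditional probabilities) and the Ionescu--Tulcea composition of kernels; you have written out the backward construction and the almost-sure uniqueness argument in full. One small correction to your last paragraph: the standard Borel hypothesis is needed mainly for the \emph{existence} of the disintegration kernels, not only for the countable generating $\pi$-system in the uniqueness step.
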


\begin{proof}
This follows from iterative disintegration (existence of regular conditional
probabilities) and the Ionescu--Tulcea theorem. 
\end{proof}
Next, we give a conditional Donsker--Varadhan formula which is the
main engine of our closed form derivation. 
\begin{lemma}
\label{lem:Engine}Suppose that we have a Borel space $\mathbb{Y}$,
a reference Markov kernel $\pi\left(\mathrm{d}y|X\right)$, a measurable
function $f\left(X,y\right)$, and $\gamma>0$. For each fixed $X$,
define 
\[
Z\left(X\right)=\int\exp\left\{ -\frac{f\left(X,y\right)}{\bgamma}\right\} \pi\left(\mathrm{d}y|X\right)
\]
and $V\left(X\right)=-\gamma\log Z\left(X\right)$, assuming that
$Z\left(X\right)\in\left(0,\infty\right)$. Write 
\[
q^{*}\left(\mathrm{d}y|X\right)=\frac{1}{Z\left(X\right)}\exp\left\{ -\frac{f\left(X,y\right)}{\bgamma}\right\} \pi\left(\mathrm{d}y|X\right)\text{.}
\]
Then, for every $q\left(\mathrm{d}y|X\right)$ with $q\left(\cdot|X\right)\ll\pi\left(\cdot|X\right)$,
\[
\int f\left(X,y\right)q\left(\mathrm{d}y|X\right)+\gamma\mathrm{KL}\left(q\left(\cdot|X\right)\Vert\pi\left(\cdot|X\right)\right)=V\left(X\right)+\gamma\mathrm{KL}\left(q\left(\cdot|X\right)\Vert q^{*}\left(\cdot|X\right)\right)\text{.}
\]
In particular, the expression is minimised by taking $q\left(\cdot|X\right)=q^{*}\left(\cdot|X\right)$,
and the minimum is $V\left(X\right)$. 
\end{lemma}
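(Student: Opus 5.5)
The plan is to prove Lemma~\ref{lem:Engine} pointwise in $X$, so from here on I fix $X$ and drop it from the notation. I will show that the two sides of the claimed identity agree, by expanding the Radon--Nikodym derivatives. The starting point is that $q^*$ and $\pi$ are mutually absolutely continuous, with
\[
\frac{\mathrm{d}q^*}{\mathrm{d}\pi}(y)=\frac{1}{Z}\exp\{-f(y)/\gamma\}.
\]
This density is strictly positive and finite because $Z\in(0,\infty)$. Consequently $q\ll\pi$ holds if and only if $q\ll q^*$, and on that set
\[
\frac{\mathrm{d}q}{\mathrm{d}q^*}=\frac{\mathrm{d}q}{\mathrm{d}\pi}\Big/\frac{\mathrm{d}q^*}{\mathrm{d}\pi}.
\]
Taking logarithms gives the pointwise identity
\[
\log\frac{\mathrm{d}q}{\mathrm{d}\pi}=\log\frac{\mathrm{d}q}{\mathrm{d}q^*}-\frac{f}{\gamma}-\log Z .
\]

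Next I multiply this identity by $\gamma$ and integrate against $q$. The left side becomes $\gamma\,\mathrm{KL}(q\Vert\pi)$. The right side becomes
\[
\gamma\,\mathrm{KL}(q\Vert q^*)-\int f\,\mathrm{d}q+V,
\]
using $V=-\gamma\log Z$. Rearranging then gives the stated equality, and the "in particular" part follows at once. Since $\mathrm{KL}(q\Vert q^*)\ge0$, with equality if and only if $q=q^*$, the minimum value is $V$ and it is attained only at $q^*$.

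The main obstacle is integrability, because splitting $\int\log(\mathrm{d}q/\mathrm{d}\pi)\,\mathrm{d}q$ into pieces requires that no $\infty-\infty$ appears. I would handle it as follows.
\begin{itemize}
\item The KL divergences are always well defined in $[0,\infty]$, since the negative part of $\phi\log\phi$ is bounded.
\item I would assume, as is implicit in the statement, that $\int f\,\mathrm{d}q$ is well defined, say $f^-\in L^1(q)$ or $f$ bounded below.
\item If $\int |f|\,\mathrm{d}q<\infty$, then $f/\gamma+\log Z$ is $q$-integrable. Adding or subtracting an integrable function to the integrand does not affect whether the quasi-integral exists, so the identity holds in $(-\infty,\infty]$, with both sides infinite together.
\item If $\int f\,\mathrm{d}q=+\infty$, the left side is $+\infty$. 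I would then show the right side is also $+\infty$. Apply the elementary inequality $ab\le e^{a}+b\log b$ to $a=-f/\gamma$ and $b=\mathrm{d}q/\mathrm{d}\pi$, integrate against $\pi$, and conclude
\[
\int(-f/\gamma)\,\mathrm{d}q\le Z+\mathrm{KL}(q\Vert\pi).
\]
This Fenchel--Young step shows that $\mathrm{KL}(q\Vert q^*)=\infty$ in this case, so both sides are $+\infty$ and the identity is consistent.
\end{itemize}

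The last point is measurability in $X$. Since $\pi(\mathrm{d}y\mid X)$ is a kernel and $f$ is jointly measurable, $Z(X)$ and hence $V(X)$ are measurable, by the measurability of integrals of kernels. It follows that $q^*(\mathrm{d}y\mid X)$ is itself a Markov kernel. This is exactly what the backward recursion in Theorem~\ref{thm:Closed_Form} needs when it applies the lemma block by block, with $X=\btheta_{<k}$ and $f=V_k$.
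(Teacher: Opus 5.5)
Your main line is the paper's proof. The paper also uses $\log(\mathrm{d}q^*/\mathrm{d}\pi)=-f/\gamma-\log Z$, splits $\mathrm{KL}(q\Vert q^*)=\int\log(\mathrm{d}q/\mathrm{d}\pi)\,\mathrm{d}q-\int\log(\mathrm{d}q^*/\mathrm{d}\pi)\,\mathrm{d}q$, rearranges, and concludes from $\mathrm{KL}\ge 0$. The paper never discusses integrability or measurability, so those bullets are your additions, and one of them does not do what you claim.

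The problem is the case $\int f\,\mathrm{d}q=+\infty$. Your inequality $\int(-f/\gamma)\,\mathrm{d}q\le Z+\mathrm{KL}(q\Vert\pi)$ is a \emph{lower} bound on $\int f\,\mathrm{d}q$. Here it reads $-\infty\le Z+\mathrm{KL}(q\Vert\pi)$, which is vacuous, and it does not involve $\mathrm{KL}(q\Vert q^*)$ at all.

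The conclusion $\mathrm{KL}(q\Vert q^*)=+\infty$ is still true, for a simpler reason. Under your convention $f^-\in L^1(q)$, both $\log(\mathrm{d}q/\mathrm{d}\pi)$ and $f/\gamma+\log Z$ have $q$-integrable negative parts. So the integrals add, and $\mathrm{KL}(q\Vert q^*)=\mathrm{KL}(q\Vert\pi)+\gamma^{-1}\int f\,\mathrm{d}q+\log Z=+\infty$. Alternatively, apply Fenchel--Young relative to $q^*$ with $a=f/\gamma$ and $b=\mathrm{d}q/\mathrm{d}q^*$, using $\int e^{f/\gamma}\,\mathrm{d}q^*=1/Z$.

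Your inequality is genuinely useful in the opposite direction. With $a=f^-/\gamma$ it gives $\int f^-\,\mathrm{d}q\le\gamma\{1+Z+\mathrm{KL}(q\Vert\pi)\}$. So the left-hand side is automatically well defined whenever $\mathrm{KL}(q\Vert\pi)<\infty$, and the $\infty-\infty$ issue can only arise when $\mathrm{KL}(q\Vert\pi)=\infty$.

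Relatedly, the claim that the minimum is attained at $q^*$ needs $q^*$ itself to be admissible, i.e.\ $f^-\in L^1(q^*)$. This is automatic if $f$ is bounded below, but can fail otherwise: one can have $Z<\infty$ together with $\int f\,\mathrm{d}q^*=-\infty$ and $\mathrm{KL}(q^*\Vert\pi)=+\infty$. So the bounded-below version of your convention is the one under which the ``in particular'' clause holds verbatim.
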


\begin{proof}
Fix $X$ and write $q=q\left(\cdot|X\right)$, $\pi=\pi\left(\cdot|X\right)$,
$f=f\left(X,\cdot\right)$, $Z=Z\left(X\right)$, and $q^{*}=q^{*}\left(\cdot|X\right)$,
for convenience.

By definition of $q^{*}$, we have the Radon--Nikodym identity

\[
\log\frac{\mathrm{d}q^{*}}{\mathrm{d}\pi}\left(y\right)=-\frac{f\left(y\right)}{\bgamma}-\log Z\text{.}
\]
Thus, for any $q\ll\pi$, 
\[
\mathrm{KL}\left(q\Vert q^{*}\right)=\int\log\frac{\mathrm{d}q}{\mathrm{d}q^{*}}\mathrm{d}q=\int\log\frac{\mathrm{d}q}{\mathrm{d}\pi}\mathrm{d}q-\int\log\frac{\mathrm{d}q^{*}}{\mathrm{d}\pi}\mathrm{d}q=\mathrm{KL}\left(q\Vert\pi\right)+\frac{1}{\bgamma}\int f\mathrm{d}q+\log Z\text{.}
\]
Rearranging this gives the desired expression: 
\[
\int f\mathrm{d}q+\gamma\mathrm{KL}\left(q\Vert\pi\right)=-\gamma\log Z+\gamma\mathrm{KL}\left(q\Vert q^{*}\right)\text{.}
\]
Since the KL divergence is nonnegative and equal to zero iff $q=q^{*}$
almost everywhere, we have $q^{*}$ as the unique minimiser and the
minimum is $V\left(X\right)=-\gamma\log Z\left(X\right)$. 
\end{proof}
We are now ready to derive the closed form for the SoU posterior.

\begin{proof}[Proof of Theorem~\ref{thm:Closed_Form} (Closed-form characterisation)]
Let $q$ be a probability measure on $\Theta$. Then, by Lemma \ref{lem:Ionescu},
it has the form 
\[
q\left(\mathrm{d}\btheta\right)=q_{1}\left(\mathrm{d}\btheta_{1}\right)\prod_{k=2}^{K}q_{k}\left(\mathrm{d}\btheta_{k}|\btheta_{<k}\right)\text{.}
\]
Using the tower property, we have 
\[
\mathrm{E}_{q}\left[L\left(\btheta\right)\right]=\mathrm{E}_{q\left(\btheta_{<K}\right)}\left[\mathrm{E}_{q_{K}\left(\cdot|\btheta_{<K}\right)}V_{K}\left(\btheta_{\le K}\right)\right]\text{,}
\]
since $V_{K}=L$, by definition. Further, the only KL term involving
$q_{K}$ is 
\[
\gamma_{K}\mathrm{E}_{q\left(\btheta_{<K}\right)}\mathrm{KL}\left(q_{K}\left(\cdot|\btheta_{<K}\right)\Vert\pi_{K}\left(\cdot|\btheta_{<K}\right)\right)\text{,}
\]
hence $J\left(q\right)$ can be written as 
\[
J\left(q\right)=J_{K-1}\left(q_{1},\dots,q_{K-1}\right)+\mathrm{E}_{q\left(\btheta_{<K}\right)}\Phi_{K}\left(\btheta_{<K};q_{K}\left(\cdot|\btheta_{<K}\right)\right)\text{,}
\]
for functions $J_{K-1}$, and 
\[
\Phi_{K}\left(\btheta_{<K};q_{K}\left(\cdot|\btheta_{<K}\right)\right)=\mathrm{E}_{q_{K}\left(\cdot|\btheta_{<K}\right)}V_{K}\left(\btheta_{\le K}\right)+\gamma_{K}\mathrm{KL}\left(q_{K}\left(\cdot|\btheta_{<K}\right)\Vert\pi_{K}\left(\cdot|\btheta_{<K}\right)\right)\text{.}
\]

Next, fix $\btheta_{<K}$. Applying Lemma \ref{lem:Engine} with $X=\btheta_{<K}$
and $y=\btheta_{K}$, $\pi\left(\cdot|X\right)=\pi_{K}\left(\cdot|\btheta_{<K}\right)$,
$f\left(X,y\right)=V_{K}\left(\btheta_{\le K}\right)$, and $\gamma=\gamma_{K}$.
The lemma implies that 
\[
\Phi_{K}\left(\btheta_{<K};q_{K}\left(\cdot|\btheta_{<K}\right)\right)\ge V_{K-1}\left(\btheta_{<K}\right)\text{,}
\]
with equality iff $q_{K}\left(\cdot|\btheta_{<K}\right)=q_{K}^{*}\left(\cdot|\btheta_{<K}\right)$,
where $q_{K}^{*}$ is exactly the expression described in the hypothesis
of the theorem. Thus, for fixed $q_{1},\dots,q_{K-1}$, the unique
choice for $q_{K}$ that minimises $J\left(q\right)$ is $q_{K}^{*}$,
and the resulting problem reduces to the objective 
\[
\inf_{q_{K}}J\left(q\right)=J_{K-1}\left(q_{1},\dots,q_{K-1}\right)+\mathrm{E}_{q\left(\btheta_{<K}\right)}V_{K-1}\left(\btheta_{<K}\right)\text{.}
\]

We now repeat the argument above for $k=K-1,\dots,2$, to obtain the
expressions of $q_{k}^{*}$ for each $k\ge2$, and reduce the problem
to the final one-block problem: 
\[
\inf_{q_{1}}\left\{ \mathrm{E}_{q_{1}}V_{1}\left(\btheta_{1}\right)+\gamma_{1}\mathrm{KL}\left(q_{1}\Vert\pi_{1}\right)\right\} \text{.}
\]
We apply Lemma \ref{lem:Engine} a final time to get the expression
for $q_{1}^{*}$. By backward induction, we have that $q^{*}$ minimises
$J\left(q\right)$ and is indeed the unique minimiser, since each
iterative solution is unique by Lemma \ref{lem:Engine}. 
\end{proof}
\begin{corollary}\label{cor:closed_form}
Assume the hypothesis of Theorem~\ref{thm:Closed_Form}. Further assume
that $\pi\left(\mathrm{d}\btheta\right)=\bigotimes_{k=1}^{K}\pi_{k}\left(\mathrm{d}\btheta_{k}\right)$,
and $L\left(\btheta\right)=\sum_{k=1}^{K}L_{k}\left(\btheta_{k}\right)$
for measurable $L_{k}:\Theta_{B_{k}}\to\overline{\mathbb{R}}$. Then,
the unique SoU minimiser is $q^{*}\left(\mathrm{d}\btheta\right)=\bigotimes_{k=1}^{K}q_{k}^{*}\left(\mathrm{d}\btheta_{k}\right)$,
where 
\[
q_{k}^{*}\left(\mathrm{d}\btheta_{k}\right)=\frac{1}{Z_{k}}\exp\left\{ -\frac{1}{\gamma_{k}}L_{k}\left(\btheta_{k}\right)\right\} \pi_{k}\left(\mathrm{d}\btheta_{k}\right)\text{.}
\]
\end{corollary}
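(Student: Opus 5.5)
The plan is to apply Theorem~\ref{thm:Closed_Form} directly and show by backward induction that, under the product prior and additive loss, each value function takes a simple additive form. The claim to prove is: for every $k\in\{1,\dots,K\}$,
\[
V_k(\btheta_{\le k})=\sum_{j=1}^{k}L_j(\btheta_j)+c_k,
\]
where $c_k$ is a finite constant independent of $\btheta$. The base case $k=K$ is immediate, with $c_K=0$, since $V_K=L$.

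For the inductive step, assume the form holds at some $k\ge2$. Since $\pi_k(\mathrm d\btheta_k\mid\btheta_{<k})=\pi_k(\mathrm d\btheta_k)$ does not depend on $\btheta_{<k}$, the terms $\sum_{j<k}L_j(\btheta_j)+c_k$ come out of the integral defining $Z_k$:
\[
Z_k(\btheta_{<k})=\exp\Big\{-\tfrac{1}{\gamma_k}\Big(\sum_{j<k}L_j(\btheta_j)+c_k\Big)\Big\}\,\widetilde Z_k,
\qquad
\widetilde Z_k:=\int e^{-L_k(\btheta_k)/\gamma_k}\,\pi_k(\mathrm d\btheta_k).
\]
Taking $V_{k-1}=-\gamma_k\log Z_k$ then gives $V_{k-1}(\btheta_{<k})=\sum_{j<k}L_j(\btheta_j)+c_k-\gamma_k\log\widetilde Z_k$. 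This is the claimed form with $c_{k-1}=c_k-\gamma_k\log\widetilde Z_k$, which is finite because the standing hypothesis of Theorem~\ref{thm:Closed_Form} makes every normalising constant lie in $(0,\infty)$.

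Substituting into the kernel formula of Theorem~\ref{thm:Closed_Form}, the factor $\exp\{-(\sum_{j<k}L_j+c_k)/\gamma_k\}$ appears in both the numerator and $Z_k$, so it cancels. What remains is
\[
q_k^*(\mathrm d\btheta_k\mid\btheta_{<k})=\widetilde Z_k^{-1}\,e^{-L_k(\btheta_k)/\gamma_k}\,\pi_k(\mathrm d\btheta_k),
\]
which does not depend on $\btheta_{<k}$. The same computation at $k=1$, using $V_1=L_1+c_1$, gives $q_1^*$ in the stated form, with the constant $c_1$ cancelling against $Z_1$. The product of these kernels is therefore the product measure $\bigotimes_k q_k^*$, and uniqueness follows from Theorem~\ref{thm:Closed_Form}.

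The main obstacle is a technical one: the $L_k$ take values in $\overline{\mathbb R}$, so the constants may involve infinite values. I would handle this by noting that finiteness and strict positivity of all $Z_k$ are assumed. This forces each $\widetilde Z_k\in(0,\infty)$, and it also requires $\sum_{j<k}L_j$ to be finite on the relevant sets, which means the factorisation of $Z_k$ above is legitimate wherever the kernels need to be defined. Any exceptional null sets are absorbed by the $q$-a.s.\ uniqueness of kernels given in Lemma~\ref{lem:Ionescu}.
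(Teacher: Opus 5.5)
Your proof is correct, but it takes a different route from the paper. You derive the corollary mechanically from Theorem~\ref{thm:Closed_Form}. The backward induction $V_k(\btheta_{\le k})=\sum_{j\le k}L_j(\btheta_j)+c_k$ shows that the past-dependent factor $\exp\{-(\sum_{j<k}L_j+c_k)/\gamma_k\}$ cancels between the numerator and $Z_k(\btheta_{<k})$. Hence every kernel $q_k^*(\cdot\mid\btheta_{<k})$ is constant in $\btheta_{<k}$, and both the product form and uniqueness are inherited directly from the theorem.

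The paper does not use the recursion at all; it proves a direct variational lower bound instead. Writing $q_k$ for the $\btheta_k$-marginal of $q$, the loss splits as $\E_q L=\sum_k\E_{q_k}L_k$, so it only sees the marginals. Convexity of $\mathrm{KL}$ in its first argument then gives $\E_{q(\btheta_{<k})}\mathrm{KL}(q(\btheta_k\mid\btheta_{<k})\,\|\,\pi_k)\ge\mathrm{KL}(q_k\,\|\,\pi_k)$. Together these yield $J(q)\ge\sum_k\{\E_{q_k}L_k+\gamma_k\mathrm{KL}(q_k\,\|\,\pi_k)\}$. Each summand is minimised by Donsker--Varadhan, and all the inequalities become equalities at $\bigotimes_k q_k^*$.

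The paper's argument is self-contained and explains why independence is optimal. Any dependence between blocks adds KL cost (the mutual-information gap in the chain rule) without lowering the expected loss. It also essentially needs only your marginal constants $\widetilde Z_k\in(0,\infty)$, plus integrability to split $\E_q L$. As written, however, it only exhibits the product as a minimiser. Uniqueness is left to Theorem~\ref{thm:Closed_Form}, or to the strict equality case of the Jensen step.

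Your route gives uniqueness for free. It also makes explicit something the statement glosses over: the corollary's $Z_k$ is your marginal constant $\widetilde Z_k$, not Theorem~\ref{thm:Closed_Form}'s $Z_k(\btheta_{<k})$. The two differ precisely by the cancelled factor.
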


\begin{proof}
By Lemma \ref{lem:Ionescu}, we again have the disintegration: 
\[
q\left(\mathrm{d}\btheta\right)=q_{1}\left(\mathrm{d}\btheta_{1}\right)\prod_{k=2}^{K}q_{k}\left(\mathrm{d}\btheta_{k}|\btheta_{<k}\right)\text{,}
\]
where we will write $q_{k}=q\left(\btheta_{k}\right)$.
Firstly, we can decouple the losses in the sense that 
\[
\mathrm{E}_{q}L\left(\btheta\right)=\sum_{k=1}^{K}\mathrm{E}_{q_{k}}L_{k}\left(\btheta_{k}\right)\text{.}
\]
Then, by the tower property, for each $k\ge2$, since $\pi_{k}$ does
not depend on $\btheta_{<k}$ and KL is convex in its first argument,
we have 
\[
\mathrm{E}_{q\left(\btheta_{<k}\right)}\mathrm{KL}\left(q\left(\btheta_{k}|\btheta_{<k}\right)\Vert\pi_{k}\right)\ge\mathrm{KL}\left(\mathrm{E}_{q\left(\btheta_{<k}\right)}q\left(\btheta_{k}|\btheta_{<k}\right)\Vert\pi_{k}\right)=\mathrm{KL}\left(q_{k}\Vert\pi_{k}\right)\text{,}
\]
with equality holding when $q\left(\btheta_{k}|\btheta_{<k}\right)=q_{k}$,
$q$-a.s.. Combining the previous observations, we have 
\[
J\left(q\right)\ge\sum_{k=1}^{K}\left\{ \mathrm{E}_{q_{k}}L_{k}+\gamma_{k}\mathrm{KL}\left(q_{k}\Vert\pi_{k}\right)\right\} \text{,}
\]
and by Donsker--Varadhan, we have that 
\[
\underset{q_{k}}{\arg\min}\text{ }\mathrm{E}_{q_{k}}L_{k}+\gamma_{k}\mathrm{KL}\left(q_{k}\Vert\pi_{k}\right)=q_{k}^{*}\left(\mathrm{d}\btheta_{k}\right)\propto\exp\left\{ -\frac{L_{k}}{\gamma_{k}}\right\} \pi_{k}\left(\mathrm{d}\btheta_{k}\right)\text{.}
\]
But all of the inequalities are equalities when we take $q\left(\btheta_{k}|\btheta_{<k}\right)=q_{k}$,
for each $k\ge2$, and we have the desired result. 
\end{proof}

\subsection{Objective collapse and regularisation irrelevance}\label{app:collapse}

This section proves Theorem~\ref{thm:collapse} from the main text.
We write the objective given a confidence
vector $\gamma=\left(\gamma_{1},\dots,\gamma_{K}\right)\in\left(0,\infty\right)^{K}$ as 
\begin{multline} \label{eq:J_gamma}
J_{\bgamma}\left(q\right):=
\mathrm{E}_{q}L_{n}\left(\btheta\right)+\gamma_{1}\mathrm{KL}\left(q\left(\btheta_{1}\right)\Vert\pi\left(\btheta_{1}\right)\right)
+ \sum_{k=2}^{K}\gamma_{k}\mathrm{E}_{q\left(\btheta_{<k}\right)}\left[\mathrm{KL}\left(q\left(\btheta_{k}|\btheta_{<k}\right)\Vert\pi\left(\btheta_{k}|\btheta_{<k}\right)\right)\right].
\end{multline}

Recall $\|\bgamma\|_\infty := \max_{1\le k\le K}\gamma_k$ and, for $\delta>0$,
$S_\delta := \{\btheta \in \Theta : L_n(\btheta) \le L_n^* + \delta\}$, where
$L_n^* := \inf_{\theta \in \Theta} L_n(\btheta)$ (Section~\ref{sec:point_uncertainty}).
Let $\mathrm{KL}_k(\cdot\Vert\pi)$ denote the $k$-th blockwise KL term in \eqref{eq:J_gamma}.
Throughout, assume that $\Theta$ is Polish and we write $\rightsquigarrow$ for weak convergence on $\Theta$.

We say that a minimiser $\hat{\btheta}_{n}$ of $L_{n}$ is well-separated if for some metric $d$
\[
\Delta\left(\eta\right)=\inf_{d(\btheta,\hat{\btheta}_{n})\ge\eta}L_{n}\left(\btheta\right)-L_{n}\big(\hat{\btheta}_{n}\big)>0 \quad\text{ for every }\eta>0 .
\]

For ease of exposition, we enumerate the four claims of Theorem~\ref{thm:collapse}:
\begin{enumerate}
\item $\mathrm{E}_{q_{m}}L_{n}\to L_{n}^{*}$, 
\item $q_{m}\left(\btheta:L_{n}\left(\btheta\right)\ge L_{n}^{*}+\epsilon\right)\to0$ for every $\epsilon >0$,
\item If $L_{n}$ has a unique well-separated minimiser $\hat{\btheta}_{n}$, then $q_{m}\rightsquigarrow\delta_{\hat{\btheta}_{n}}$, 
\item $\sum_{k=1}^{K}\gamma_{k}^{\left(m\right)}\mathrm{KL}_{k}\left(q_{m}\Vert\pi\right)\to0$.
\end{enumerate}

\begin{lemma}
\label{lem:KL_upper_bound}If $q\ll\pi$, then 
\[
\sum_{k=1}^{K}\gamma_{k}\mathrm{KL}_{k}\left(q\Vert\pi\right)\le\left\Vert \bgamma\right\Vert _{\infty}\mathrm{KL}\left(q\Vert\pi\right)\text{.}
\]
\end{lemma}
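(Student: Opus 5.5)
The plan is to bound each weight by $\|\bgamma\|_\infty$ and then invoke the chain rule for the Kullback--Leibler divergence along the blockwise factorisation. Since each $\gamma_k>0$ and each blockwise term $\mathrm{KL}_k(q\Vert\pi)$ is nonnegative, we immediately have
\[
\sum_{k=1}^{K}\gamma_k\mathrm{KL}_k(q\Vert\pi)\le\|\bgamma\|_\infty\sum_{k=1}^{K}\mathrm{KL}_k(q\Vert\pi).
\]
Here the nonnegativity of $\mathrm{KL}_k$ for $k\ge2$ holds because it is an expectation of pointwise nonnegative conditional divergences. It then remains to show that $\sum_{k}\mathrm{KL}_k(q\Vert\pi)=\mathrm{KL}(q\Vert\pi)$, which is the chain rule.

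For the chain rule, I would disintegrate $q$ via Lemma~\ref{lem:Ionescu} into $q_1$ and the kernels $q_k(\cdot\mid\btheta_{<k})$, and disintegrate $\pi$ the same way into $\pi_1$ and $\pi_k(\cdot\mid\btheta_{<k})$. The hypothesis $q\ll\pi$ gives $q_1\ll\pi_1$. It also gives $q_k(\cdot\mid\btheta_{<k})\ll\pi_k(\cdot\mid\btheta_{<k})$ for $q$-a.e.\ $\btheta_{<k}$, by the usual argument that the Radon--Nikodym derivative factorises. The Radon--Nikodym derivative then factorises $q$-a.s. as
\[
\frac{\mathrm d q}{\mathrm d\pi}(\btheta)=\frac{\mathrm d q_1}{\mathrm d\pi_1}(\btheta_1)\prod_{k=2}^{K}\frac{\mathrm d q_k(\cdot\mid\btheta_{<k})}{\mathrm d\pi_k(\cdot\mid\btheta_{<k})}(\btheta_k).
\]
Taking logarithms, integrating against $q$, and applying the tower property block by block turns each summand into $\mathrm{KL}_k(q\Vert\pi)$.

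The main obstacle is justifying the exchange of the sum and the integral when some terms may be infinite, since the log-density ratio need not be integrable. I would handle this in one of two ways. The first is to split $\log$ into positive and negative parts and use the fact that each conditional KL integrand has integrable negative part, since $x\log x\ge -e^{-1}$ relative to the reference. The second, more cleanly, is to proceed by induction on $K$, using the two-block chain rule $\mathrm{KL}(q_{XY}\Vert\pi_{XY})=\mathrm{KL}(q_X\Vert\pi_X)+\E_{q_X}\mathrm{KL}(q_{Y|X}\Vert\pi_{Y|X})$, which is standard and valid in $[0,\infty]$. Either route yields the identity in $[0,\infty]$. If $\mathrm{KL}(q\Vert\pi)=\infty$ the inequality is trivial anyway, so only the finite case needs care.
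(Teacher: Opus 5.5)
Your proposal is correct and follows the paper's proof. The paper likewise uses the chain rule $\mathrm{KL}(q\Vert\pi)=\sum_{k}\mathrm{KL}_k(q\Vert\pi)$ under $q\ll\pi$, notes that all terms are nonnegative, and multiplies termwise by $\gamma_k\le\|\bgamma\|_\infty$. Your justification of the chain rule in $[0,\infty]$, via disintegration and the factorised Radon--Nikodym derivative, supplies a step the paper simply asserts.
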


\begin{proof}
When $q\ll\pi$, we have 
\[
\mathrm{KL}\left(q\Vert\pi\right)=\mathrm{KL}\left(q\left(\btheta_{1}\right)\Vert\pi\left(\btheta_{1}\right)\right)+\sum_{k=2}^{K}\mathrm{E}_{q\left(\btheta_{<k}\right)}\left[\mathrm{KL}\left(q\left(\btheta_{k}|\btheta_{<k}\right)\Vert\pi\left(\btheta_{k}|\btheta_{<k}\right)\right)\right]\text{.}
\]
All terms are nonnegative and we obtain the inequality by multiplying
termwise by $\gamma_{k}\le\left\Vert \bgamma\right\Vert _{\infty}$. 
\end{proof}
\begin{lemma}
\label{lem:Near_competitor}Fix $\delta>0$ with $\pi\left(S_{\delta}\right)>0$.
Define $q^{\delta}=\pi\left(\cdot|S_{\delta}\right)$. Then, 
\[
\mathrm{E}_{q^{\delta}}L_{n}\le L_{n}^{*}+\delta\text{, and }\mathrm{KL}\left(q^{\delta}\Vert\pi\right)=\log\left(\frac{1}{\pi\left(S_{\delta}\right)}\right)<\infty\text{.}
\]
Thus, by Lemma \ref{lem:KL_upper_bound}, we have 
\[
\sum_{k=1}^{K}\gamma_{k}\mathrm{KL}_{k}\left(q^{\delta}\Vert\pi\right)\le\left\Vert \bgamma\right\Vert _{\infty}\log\left(\frac{1}{\pi\left(S_{\delta}\right)}\right)\text{.}
\]
\end{lemma}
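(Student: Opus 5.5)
The plan is to verify the two displayed facts directly from the definition of $q^{\delta}$ as the prior conditioned on $S_\delta$, and then invoke Lemma~\ref{lem:KL_upper_bound}.

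\textbf{The expected loss.} Since $\pi(S_\delta)>0$, the measure $q^{\delta}(A)=\pi(A\cap S_\delta)/\pi(S_\delta)$ is a well-defined probability measure. It is supported on $S_\delta$, where by definition $L_n(\btheta)\le L_n^*+\delta$. Integrating this pointwise bound against $q^{\delta}$ gives $\mathrm{E}_{q^{\delta}}L_n\le L_n^*+\delta$. The only care needed is that the integral be well defined. On $S_\delta$ we have $L_n\ge L_n^*$, and $L_n^*$ is finite under the standing assumption that minimisers exist. So $L_n$ is bounded on $S_\delta$ and the expectation exists.

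\textbf{The KL divergence.} We have $q^{\delta}\ll\pi$ with Radon--Nikodym derivative $\mathrm{d}q^{\delta}/\mathrm{d}\pi=\mathbf{1}_{S_\delta}/\pi(S_\delta)$. Hence
\[
\mathrm{KL}(q^{\delta}\Vert\pi)=\int_{S_\delta}\log\frac{1}{\pi(S_\delta)}\,\mathrm{d}q^{\delta}=\log\frac{1}{\pi(S_\delta)},
\]
which is finite because $\pi(S_\delta)>0$.

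\textbf{The weighted blockwise bound.} The final inequality follows from Lemma~\ref{lem:KL_upper_bound}, which applies because $q^{\delta}\ll\pi$. That lemma relies on the chain rule for KL divergence, $\mathrm{KL}(q\Vert\pi)=\sum_k\mathrm{KL}_k(q\Vert\pi)$, with respect to the blockwise disintegrations of Lemma~\ref{lem:Ionescu}. For $q^{\delta}$ the conditionals are $\pi$-conditionals restricted to the appropriate sections of $S_\delta$. The chain rule holds for any $q\ll\pi$ on standard Borel spaces, so no extra structure is needed.

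The only mildly delicate point is this appeal to the chain rule for measures rather than densities. Since it is already packaged in Lemma~\ref{lem:KL_upper_bound}, the proof reduces to the two short computations above.
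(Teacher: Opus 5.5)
Your proof is correct and follows the paper's argument: the loss bound comes from $q^{\delta}$ being supported on $S_\delta$, the KL identity from $\mathrm{d}q^{\delta}/\mathrm{d}\pi=\mathbf{1}_{S_\delta}/\pi(S_\delta)$, and the weighted bound from Lemma~\ref{lem:KL_upper_bound}. One small correction to your well-definedness remark: $L_n^*$ being finite does not come from the existence assumption, which in Theorem~\ref{thm:collapse} concerns minimisers of the SoU objective rather than of $L_n$; it follows directly from $\pi(S_\delta)>0$, because for real-valued $L_n$ the set $S_\delta$ would be empty if $L_n^*=-\infty$.
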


\begin{proof}
The loss bound holds since $q^{\delta}$ is supported on $S_{\delta}$,
and the KL bound arises by observing that on $S_{\delta}$, $\mathrm{d}q^{\delta}/\mathrm{d}\pi=1/\pi\left(S_{\delta}\right)$.
Thus, 
\[
\mathrm{KL}\left(q^{\delta}\Vert\pi\right)=\int_{S_{\delta}}\log\left(\frac{1}{\pi\left(S_{\delta}\right)}\right)q^{\delta}\left(\mathrm{d}\btheta\right)=\log\left(\frac{1}{\pi\left(S_{\delta}\right)}\right)q^{\delta}\left(S_{\delta}\right)=\log\left(\frac{1}{\pi\left(S_{\delta}\right)}\right)\text{,}
\]
and is finite whenever $\pi\left(S_{\delta}\right)>0$. 
\end{proof}

\begin{proof}[Proof of Theorem~\ref{thm:collapse} (Objective collapse as $\|\bgamma\|_\infty\to0$)]

By optimality of $q_{\bgamma}^{*}$, we have 
\[
J_{\bgamma}\left(q_{\bgamma}^{*}\right)\le J_{\bgamma}\left(q^{\delta}\right)\le L_{n}^{*}+\delta+\left\Vert \bgamma\right\Vert _{\infty}\log\left(\frac{1}{\pi\left(S_{\delta}\right)}\right)\text{.}
\]
In particular, for $\left(\bgamma^{\left(m\right)}\right)_{m}$, 
\[
\limsup_{m\to\infty}J_{\bgamma^{\left(m\right)}}\left(q_{m}\right)\le L_{n}^{*}+\delta\text{.}
\]
But since $\delta>0$ is arbitrary, we have 
\[
\limsup_{m\to\infty}J_{\bgamma^{\left(m\right)}}\left(q_{m}\right)\le L_{n}^{*}\text{.}
\]
However, $J_{\bgamma}\left(q\right)\ge\mathrm{E}_{q}L_{n}\ge L_{n}^{*}$
for every $q$ and $\bgamma$, thus the $\liminf$ inequality also
holds, and thus 
\[
J_{\bgamma^{\left(m\right)}}\left(q_{m}\right)\to L_{n}^{*}\text{ and }\mathrm{E}_{q_{m}}L_{n}\to L_{n}^{*}\text{.}
\]
This proves Part 1.

Next, fix $\epsilon>0$ and define 
\[
A_{\epsilon}=\left\{ \theta:L_{n}\left(\btheta\right)\ge L_{n}^{*}+\epsilon\right\} \text{.}
\]
Then, 
\[
\mathrm{E}_{q_{m}}L_{n}\ge L_{n}^{*}+\epsilon q_{m}\left(A_{\epsilon}\right)\text{,}
\]
since on $A_{\epsilon}$ the loss is at least $L_{n}^{*}+\epsilon$
and it is at least $L_{n}^{*}$ otherwise. But since $\mathrm{E}_{q_{m}}L_{n}\to L_{n}^{*}$,
it must hold that $q_{m}\left(A_{\epsilon}\right)\to0$, yielding
Part 2.

Fix $\eta>0$ and write $C_{\eta}=\left\{ \theta:d\left(\btheta,\hat{\btheta}_{n}\right)\ge\eta\right\} $.
By definition of $\Delta\left(\eta\right)$, for all $\btheta\in C_{\eta}$,
we have 
\[
L_{n}\left(\btheta\right)\ge L_{n}\left(\hat{\btheta}_{n}\right)+\Delta\left(\eta\right)=L_{n}^{*}+\Delta\left(\eta\right)\text{,}
\]
thus $C_{\eta}\subset A_{\Delta\left(\eta\right)}$. But from Part
2, we have that $q_{m}\left(C_{\eta}\right)\le q_{m}\left(A_{\Delta\left(\eta\right)}\right)\to0$,
$q_{m}\left(C_{\eta}^{\mathrm{c}}\right)\to1$, implying that $q_{m}\rightsquigarrow\delta_{\hat{\btheta}_{n}}$,
proving Part 3.

Finally, write 
\[
\rho_{\bgamma}\left(q\right)=\sum_{k=1}^{K}\gamma_{k}\mathrm{KL}_{k}\left(q\Vert\pi\right)\ge0\text{.}
\]
Then, $\rho_{\bgamma^{\left(m\right)}}\left(q_{m}\right)=J_{\bgamma^{\left(m\right)}}\left(q_{m}\right)-\mathrm{E}_{q_{m}}L_{n}$.
By optimality of $q_{m}$ and since $\mathrm{E}_{q_{m}}L_{n}\ge L_{n}^{*}$,
we have 
\[
0\le\rho_{\bgamma^{\left(m\right)}}\left(q_{m}\right)\le J_{\bgamma^{\left(m\right)}}\left(q^{\delta}\right)-L_{n}^{*}\le\delta+\left\Vert \bgamma^{\left(m\right)}\right\Vert _{\infty}\log\left(\frac{1}{\pi\left(S_{\delta}\right)}\right)\text{.}
\]
Taking limits $m\to\infty$ and $\delta\downarrow0$ yields $\rho_{\bgamma^{\left(m\right)}}\left(q_{m}\right)\to0$,
which is Part 4, as required.
\end{proof}

\subsection{Consistency}\label{app:consistency}

This section proves Theorem~\ref{thm:consistency} from the main text.
Recall $R_n(\btheta)=n^{-1}\sum_{i=1}^{n}\ell(\btheta,X_i)$ and
$R(\btheta)=\mathrm{E}_{\mathrm{P}_0}\ell(\btheta,X)$.
Write $\|\bgamma_n\|_\infty := \max_{1\le k\le K}\gamma_{k,n}$.

Let $\Theta$ be Polish with metric $d$, and write $B(\btheta,r)$ for the open ball.
We say that $\btheta_0$ is a well-separated minimiser of $R$ if
$R(\btheta_0)=\inf_{\btheta\in\Theta}R(\btheta)$, $R$ is continuous at $\btheta_0$, and for every $\varepsilon>0$,
\[
\Delta(\varepsilon):=\inf_{d(\btheta,\btheta_0)\ge\varepsilon} \{R(\btheta)-R(\btheta_0)\}>0.
\]

Let $q_n^*$ denote any minimiser of the SoU objective (main text, Eq.~(\ref{eq:qstar})–(\ref{eq:KB}))
with prior $\pi_n$ and confidence vector $\gamma_n=(\gamma_{1,n},\dots,\gamma_{K,n})\in(0,\infty)^K$, i.e.,
\begin{multline*}
    q_{n}^{*}\in\argmin_q\Big\{
    \mathrm{E}_{q}R_{n}\left(\btheta\right) +\frac{\gamma_{1,n}}{n}\mathrm{KL}\left(q\left(\btheta_{1}\right)\Vert\pi_{n}\left(\btheta_{1}\right)\right)
  +\sum_{k=2}^{K}\frac{\gamma_{k,n}}{n}\mathrm{E}_{q\left(\btheta_{<k}\right)}\left[\mathrm{KL}\left(q\left(\btheta_{k}|\btheta_{<k}\right)\Vert\pi_{n}\left(\btheta_{k}|\btheta_{<k}\right)\right)\right]
\Big\}\text{.}
\end{multline*}
We prove a slightly stronger statement by allowing $\pi=\pi_n$ and replacing compactness of $\Theta$ by the following condition, which is automatically satisfied when $\Theta$ is compact:
\begin{itemize}
\item[$(ii')$] There exist measurable $h:\Theta\to\mathbb{R}_{>0}$ and $b:\mathcal{X}\to\mathbb{R}$ such that
$h$ is continuous with compact level sets, $\mathrm{E}_{\mathrm{P}_{0}}|b(X)|<\infty$, and
$\ell(\btheta,X)\ge h(\btheta)-b(X)$ for all $(\btheta,X)$.
\end{itemize}

\begin{proof}[Proof of Theorem~\ref{thm:consistency} (Consistency)]

Let us write the unscaled objective 
\[
J_{n}\left(q\right)=n\mathrm{E}_{q}R_{n}\left(\btheta\right)+P_{n}\left(q\Vert\pi_{n}\right)\text{,}
\]
where 
\[
P_{n}\left(q\Vert\pi_{n}\right)=\gamma_{1,n}\mathrm{KL}\left(q\left(\btheta_{1}\right)\Vert\pi_{n}\left(\btheta_{1}\right)\right)+\sum_{k=2}^{K}\gamma_{k,n}\mathrm{E}_{q\left(\btheta_{<k}\right)}\left[\mathrm{KL}\left(q\left(\btheta_{k}|\btheta_{<k}\right)\Vert\pi_{n}\left(\btheta_{k}|\btheta_{<k}\right)\right)\right]\text{.}
\]
Note that by definition $q_{n}^{*}$ minimises $J_{n}$.

From Lemma \ref{lem:KL_upper_bound} we have that if $q\ll\pi_{n}$,
then 
\[
\mathrm{KL}\left(q\Vert\pi_{n}\right)=\mathrm{KL}\left(q\left(\btheta_{1}\right)\Vert\pi_{n}\left(\btheta_{1}\right)\right)+\sum_{k=2}^{K}\mathrm{E}_{q\left(\btheta_{<k}\right)}\left[\mathrm{KL}\left(q\left(\btheta_{k}|\btheta_{<k}\right)\Vert\pi_{n}\left(\btheta_{k}|\btheta_{<k}\right)\right)\right]
\]
and $P_{n}\left(q\Vert\pi_{n}\right)\le\left\Vert \bgamma_{n}\right\Vert _{\infty}\mathrm{KL}\left(q\Vert\pi_{n}\right)$.
From Lemma \ref{lem:Near_competitor} and $(iv)$, we have that if $q_{n}^{\left(0\right)}=\pi_{n}\left(\cdot|B\left(\btheta_{0},r_{n}\right)\right)$,
then 
\[
\mathrm{KL}\left(q_{n}^{\left(0\right)}\Vert\pi_{n}\right)=\log\left(\frac{1}{\pi_{n}\left(B\left(\btheta_{0},r_{n}\right)\right)}\right)\text{.}
\]

These results then imply that 
\[
P_{n}\left(q_{n}^{\left(0\right)}\Vert\pi_{n}\right)\le\left\Vert \bgamma_{n}\right\Vert _{\infty}\log\left(\frac{1}{\pi_{n}\left(B\left(\btheta_{0},r_{n}\right)\right)}\right)
\]
and since $q_{n}^{*}$ minimises $J_{n}$, we have 
\[
n\mathrm{E}_{q_{n}^{*}}R_{n}+P_{n}\left(q_{n}^{*}\Vert\pi_{n}\right)\le n\mathrm{E}_{q_{n}^{\left(0\right)}}R_{n}+P_{n}\left(q_{n}^{\left(0\right)}\Vert\pi_{n}\right)
\]
implying that 
\[
\mathrm{E}_{q_{n}^{*}}R_{n}\le\mathrm{E}_{q_{n}^{\left(0\right)}}R_{n}+\frac{\left\Vert \bgamma_{n}\right\Vert _{\infty}}{n}\log\left(\frac{1}{\pi_{n}\left(B\left(\btheta_{0},r_{n}\right)\right)}\right)\text{,}
\]
where the last term is $o\left(1\right)$ by $(iv)$.

Next, write 
\[
\mathrm{E}_{q_{n}^{\left(0\right)}}R_{n}-R\left(\btheta_{0}\right)=\mathrm{E}_{q_{n}^{\left(0\right)}}\left[R_{n}-R\right]+\left[\mathrm{E}_{q_{n}^{\left(0\right)}}R-R\left(\btheta_{0}\right)\right]=I+II\text{.}
\]
By construction, $q_{n}^{\left(0\right)}$ is supported on $B\left(\btheta_{0},r_{n}\right)$.
Pick any $M\ge0$ so that $B\left(\btheta_{0},r_{n}\right)\subset C_{M}=\left\{ h\le M\right\} $,
for sufficiently large $n$ (this is possible since $r_{n}\downarrow0$
and $h$ is continuous at $\btheta_{0}$, where $h\left(\btheta_{0}\right)<\infty$).
Then, by $(i)$, 
\[
\left|I\right|\le\sup_{\btheta\in C_{M}}\left|R_{n}\left(\btheta\right)-R\left(\btheta\right)\right|\overset{\mathrm{a.s.}}{\longrightarrow}0\text{.}
\]
By $(iii)$, $\btheta_{0}$ is the unique minimiser and $R\left(\btheta_{0}\right)<\infty$.
Since $R$ is continuous at $\btheta_{0}$, 
\[
\sup_{\btheta\in B\left(\btheta_{0},r_{n}\right)}\left|R\left(\btheta\right)-R\left(\btheta_{0}\right)\right|\to0\text{,}
\]
thus $II\to0$. Thus, we have 
\[
\mathrm{E}_{q_{n}^{\left(0\right)}}R_{n}\overset{\mathrm{a.s.}}{\longrightarrow}R\left(\btheta_{0}\right)\text{,}
\]
and therefore 
\begin{equation}
\limsup_{n\to\infty}\mathrm{E}_{q_{n}^{*}}R_{n}\le\limsup_{n\to\infty}\mathrm{E}_{q_{n}^{\left(0\right)}}R_{n}+o\left(1\right)=R\left(\btheta_{0}\right)\text{, almost surely.}\label{eq:limsup_star_and_theta_0}
\end{equation}

Define $m_{n}=\inf_{\btheta\in\Theta}R_{n}\left(\btheta\right)$, where
trivially, $m_{n}\le R_{n}\left(\btheta_{0}\right)\to R\left(\btheta_{0}\right)$,
almost surely, thus $\limsup_{n}m_{n}\le R\left(\btheta_{0}\right)$.
To lower bound $m_{n}$, observe that for fixed $\eta>0$ and with
\[
\bar{b}_{n}=\frac{1}{n}\sum_{i=1}^{n}b\left(X_{i}\right)\text{,}
\]
we have $\bar{b}_{n}\to\mathrm{E}\left[b\left(X\right)\right]$ almost
surely, and for sufficiently large $n$, 
\[
\left\{ \theta:R_{n}\left(\btheta\right)\le R\left(\btheta_{0}\right)+\eta\right\} \subset\left\{ h\le M_{\eta}\right\} \text{,}
\]
where $M_{\eta}=R\left(\btheta_{0}\right)+\eta+\mathrm{E}\left[b\left(X\right)\right]+1$.
This is because $(ii')$ implies that $R_{n}\left(\btheta\right)\ge h\left(\btheta\right)-\bar{b}_{n}$.
Then, by the strong law of large numbers, for sufficiently large $n$,
$\bar{b}_{n}\le\mathrm{E}\left[b\left(X\right)\right]+1$ almost surely.
Further, if $R_{n}\left(\btheta\right)\le R\left(\btheta_{0}\right)+\eta$
then $h\left(\btheta\right)\le R\left(\btheta_{0}\right)+\eta+\bar{b}_{n}\le M_{\eta}$.
Note also that by $(ii')$, $\left\{ h\le M_{\eta}\right\} $ is compact.

Let us write $C_{\eta}=\left\{ h\le M_{\eta}\right\} $. Then for
sufficiently large $n$, we have $m_{n}=\inf_{\btheta\in C_{\eta}}R_{n}\left(\btheta\right)$.
Thus $(i)$ gives us uniform convergence on $C_{\eta}$, and since the
infimum is Lipschitz, we have 
\[
\left|m_{n}-\inf_{\btheta\in C_{\eta}}R\left(\btheta\right)\right|=\left|\inf_{\btheta\in C_{\eta}}R_{n}\left(\btheta\right)-\inf_{\btheta\in C_{\eta}}R\left(\btheta\right)\right|\le\sup_{\btheta\in C_{\eta}}\left|R_{n}\left(\btheta\right)-R\left(\btheta\right)\right|\overset{\mathrm{a.s.}}{\longrightarrow}0\text{.}
\]
Note that by definition $h\left(\btheta_{0}\right)\le R\left(\btheta_{0}\right)+\mathrm{E}\left[b\left(X\right)\right]\le M_{\eta}$,
thus $\btheta_{0}\in C_{\eta}$ and therefore $m_{n}\overset{\mathrm{a.s.}}{\longrightarrow}R\left(\btheta_{0}\right)$.
Now fix $\epsilon>0$ and define $A_{\epsilon}=\left\{ \theta:d\left(\btheta,\btheta_{0}\right)\ge\epsilon\right\} $
and let $m_{n,\epsilon}=\inf_{\btheta\in A_{\epsilon}}R_{n}\left(\btheta\right)$
and $m_{\epsilon}=\inf_{\btheta\in A_{\epsilon}}R\left(\btheta\right)$.
By $(iii)$, $m_{\epsilon}=R\left(\btheta_{0}\right)+\Delta\left(\epsilon\right)$,
where $\Delta\left(\epsilon\right)>0$. Using the same argument as
above (now on the level sets $\left\{ R_{n}\le m_{\epsilon}+1\right\} $),
we have that $m_{n,\epsilon}\overset{\mathrm{a.s.}}{\longrightarrow}m_{\epsilon}$,
and consequently 
\[
\Delta_{n}\left(\epsilon\right)=m_{n,\epsilon}-m_{n}\overset{\mathrm{a.s.}}{\longrightarrow}\Delta\left(\epsilon\right)>0\text{.}
\]

Together (\ref{eq:limsup_star_and_theta_0}), $m_{n}\overset{\mathrm{a.s.}}{\longrightarrow}R\left(\btheta_{0}\right)$
and the bound $m_{n}\le\mathrm{E}_{q_{n}^{*}}R_{n}$ imply that $\mathrm{E}_{q_{n}^{*}}R_{n}-m_{n}\overset{\mathrm{a.s.}}{\longrightarrow}0$.
Decomposing $\mathrm{E}_{q_{n}^{*}}R_{n}$ over $A_{\epsilon}$ and
$A_{\epsilon}^{\mathrm{c}}$ yields 
\[
\mathrm{E}_{q_{n}^{*}}R_{n}\ge q_{n}^{*}\left(A_{\epsilon}\right)m_{n,\epsilon}+\left\{ 1-q_{n}^{*}\left(A_{\epsilon}\right)\right\} m_{n}=m_{n}+q_{n}^{*}\left(A_{\epsilon}\right)\Delta_{n}\left(\epsilon\right)\text{.}
\]
Hence, 
\[
q_{n}^{*}\left(A_{\epsilon}\right)\le\frac{\mathrm{E}_{q_{n}^{*}}R_{n}-m_{n}}{\Delta_{n}\left(\epsilon\right)}\text{.}
\]
But the numerator tends to zero and the denominator is bounded away
from zero, almost surely, thus $q_{n}^{*}\left(A_{\epsilon}\right)\to0$,
almost surely, or equivalently $q_{n}^{*}\left(B\left(\btheta_{0},\epsilon\right)\right)\to1$,
almost surely. Since this holds for every $\epsilon>0$, we have the
consistency result that $q_{n}^{*}\rightsquigarrow\delta_{\btheta_{0}}$,
almost surely, as required. 
\end{proof}

\end{document}